\newtheorem{theo}{Theorem}[section]
\newtheorem{prop}[theo]{Proposition}
\newtheorem{cor}[theo]{Corollary}
\theoremstyle{definition}
\newtheorem{defi}[theo]{Definition}
\newtheorem{exa}[theo]{Example}
\newtheorem{rem}[theo]{Remark}
\numberwithin{equation}{section}
\newcommand{\N}{{\mathbb N}}
\newcommand{\F}{{\mathbb F}}
\newcommand{\Z}{{\mathbb Z}}
\newcommand{\Q}{{\mathbb Q}}
\newcommand{\C}{{\mathbb C}}
\newcommand{\cC}{{\mathcal C}}
\newcommand{\cG}{{\mathcal G}}
\newcommand{\cM}{{\mathcal M}}
\newcommand{\cP}{{\mathcal P}}
\newcommand{\wcP}{\widehat{\mathcal P}}
\newcommand{\wwcP}{\widehat{\phantom{\big|}\hspace*{.5em}}\hspace*{-.9em}\wcP}
\newcommand{\cPnsym}{{\mathcal P}_{\rm{sym}}^n}
\newcommand{\cQ}{{\mathcal Q}}
\newcommand{\cL}{{\mathcal L}}
\newcommand{\cR}{{\mathcal R}}
\newcommand{\cS}{{\mathcal S}}
\newcommand{\cN}{{\mathcal N}}
\newcommand{\bP}{{\mathbf P}}
\newcommand{\bH}{{\mathbf H}}
\newcommand{\bbP}{\mathbf{\bar{P}}}
\newcommand{\veps}{{\varepsilon}}
\newcommand{\widesim}[1][1.5]{\scalebox{#1}[1]{$\sim$}}
\newcommand{\comp}{\text{comp}}
\newcommand{\wt}{{\rm wt}}
\newcommand{\pe}{\mbox{\rm PE}}
\newcommand{\mmid}{\mbox{$\,|\,$}}
\newcommand{\mmidbig}{\mbox{$\,\big|\,$}}
\newcommand{\T}{\mbox{$\!^{\sf T}$}}
\newcommand{\inner}[1]{\mbox{$\langle{#1}\rangle$}}
\newcommand{\ideal}[1]{\mbox{$\langle{#1}]$}}
\newcommand{\supp}{\mbox{\rm supp}}
\newcommand{\pperp}{\perp \! \! \! \perp}
\newcounter{alp}
\newcounter{ara}
\newcounter{rom}
\newenvironment{alphalist}{\begin{list}{(\alph{alp})\hfill}{\usecounter{alp}
     \topsep0ex \labelwidth.6cm \leftmargin.6cm \labelsep0cm
     \rightmargin0cm \parsep0ex \itemsep0ex
     \partopsep1.6ex}}{\end{list}}
\title{Fourier-Reflexive Partitions and MacWilliams Identities\\ for Additive Codes}
\date\today
\author{Heide Gluesing-Luerssen\thanks{The author was partially supported by the National Science Foundation
        grants \#DMS-0908379 and \#DMS-1210061.}\\
        University of Kentucky\\ Department of Mathematics\\
       715 Patterson Office Tower\\ Lexington, KY 40506-0027, USA\\ heide.gl@uky.edu}
\begin{document}
\maketitle

{\bf Abstract:}
A partition of a finite abelian group gives rise to a dual partition on the character group via the
Fourier transform.
Properties of the dual partitions are investigated and a convenient test is given for the case that the
bidual partition coincides the primal partition.
Such partitions permit MacWilliams identities for the partition enumerators of additive codes.
It is shown that dualization commutes with  taking products and symmetrized products of
partitions on cartesian powers of the given group.
After translating the results to Frobenius rings, which are identified with their character module, the approach is
applied to partitions that arise from poset structures.

%{\bf Keywords:}
%
%{\bf MSC (2000):} ????94B10, 94B05, 93B15, 93B20

%%%%%%%%%%%%%%%%%%%%%%%%%%%%%%%%%%%%%%%%%%%%%%%%
\section{Introduction}\label{SS-Intro}
\setcounter{equation}{0}

MacWilliams identities relate properties of a code, in form of an enumerator or distribution, to properties of the dual code, and this relation is made precise by a concrete transformation.
Such identities form an inevitable tool for the theory of self-dual codes and also help to derive linear programming bounds for codes, see, e.g.,~\cite{Del73,BGO07}.
More engineering-oriented interest in MacWilliams identities can be found in~\cite{KhMcE05,LKY04}.

The most famous MacWilliams identity is the one for the Hamming weight enumerator for codes over fields derived by MacWilliams in~\cite{MacW62}.
It has been generalized to other weight functions, most notably the symmetrized Lee weight, by MacWilliams and Sloane in
the monograph~\cite{MS77}.

Starting in the 80's, these results have been further generalized to codes over additive groups and over finite (commutative and later non-commutative) Frobenius rings by Delsarte~\cite{Del73},  Klemm~\cite{Kl87,Kl89}, Nechaev and Kuzmin~\cite{Nec96,NeKu97}, and  Wood~\cite{Wo99}.
The area of codes over rings gained even more attention after discovering the relevance of the Lee weight on~$\Z_4$ for understanding the
formal duality of the binary non-linear Kerdock and Preparata codes by Hammons et al.~\cite{HKCSS94}.
In the realm of MacWilliams identities, the Frobenius property of the ring is essential as it guarantees that the character-theoretic
annihilator of a code can be identified with the standard dual.
 %Even more, in~\cite[Cor. 12.4.2]{Wo09} Wood has shown that there is no MacWilliams identity for the Hamming weight enumerator for codes over non-Frobenius rings.  Can't find it anymore.
%Gluesing-Luerssen and Schneider~\cite{GS08,GS09} and Forney~\cite{Fo11} derived MacWilliams identities for convolutional codes and more
%generally codes on graphs; in both cases the enumerator is a matrix indexed by the states of the realization.

The most general approach has been taken by Delsarte~\cite{Del73} with the aid of association schemes; see also
Camion~\cite{Cam98} and Delsarte and Levenshtein~\cite{DeLe98}.
This approach allows to also deal with non-linear codes and their distance distributions (again, with respect to various distance functions).
Unfortunately, Delsarte's results have not gained the attention they deserve, which may be due to the extended
machinery of association schemes and their related Bose-Mesner algebras.
Many MacWilliams identities that form just special cases of Delsarte's approach, have been proven independently afterwards.

In the middle of the 90's, Zinoviev and Ericson~\cite{ZiEr96} revived the fundamental ideas of Delsarte by studying partitions of the ambient space, thus classifying the words according to a pre-specified property (e.g. the Hamming weight).
In~\cite{ZiEr96}, they focus on additive codes (i.e., codes over additive groups) and study the case where the
enumerators of the code and its dual with respect to the same partition obey a MacWilliams identity.
This leads to the notion of an F-partition or Fourier partition, reminding of the fact that the space of indicator functions of the partition sets is invariant under the Fourier transform.
In~\cite{ZiEr09} the same authors extend their ideas to admissible pairs of partitions; these are partition pairs $(\cP,\cQ)$ for which the Fourier transform induces an isomorphism between the two associated spaces of indicator functions.
For such pairs the $\cP$-enumerator of a code and the $\cQ$-enumerator of the dual code obey a MacWilliams identity.
These pairs are exactly those inducing an abelian association scheme on the second cartesian power of the group, thus
relating this approach to Delsarte's in~\cite{Del73}.
In~\cite{Fo98} Forney presents a generalization of this result to discrete subgroups of locally compact abelian groups.

In~\cite{HoLa01} Honold and Landjev use a similar approach for linear codes over finite Frobenius rings.
By not using characters and the Fourier transform but rather a certain map with a particular homogeneous property, their partition pairs are characterized in a different way.

In this paper we will present an approach to MacWilliams identities solely based on partitions of a group and their duals.
This will also allow us to survey the vast literature on the subject in more detail.
We will make a careful distinction between the ambient space of the code and that of its dual: the code will be a subgroup of a given
finite abelian group, and its dual will reside in the character group.
This is different from many of the settings above, where the code and its dual are contained in the same ambient space, mostly by identifying
the group with its character group.
However, since this identification is not canonical, many partition properties, most notably the dual of a partition, depend on the choice of the identification.
Fortunately, this undesirable behavior does not occur for many standard situations, e.g., the Hamming weight on any group, the complete weight, or for any partition on~$\Z_N$.

By not identifying a group with its character group, one is imperatively in the situation that the enumerators of a code and its
dual refer to different partitions, as considered by Zinoviev and Ericson in~\cite{ZiEr09}.
Closely related is a notion introduced by Byrne et al.\  in~\cite{BGO07}.
In the setting of codes over a finite Frobenius ring, identified with its character-module, they introduce the dual of a given partition with respect to the Fourier transform.
This can be regarded as the one-sided analogue of the admissible pairs in~\cite{ZiEr09} as now the two partitions need not be mutually dual.
One of their motivation for dualizing a given partition appears to be the homogeneous weight.
This weight function induces in general a partition that is not an F-partition.

In this paper we will draw ideas from many of the above mentioned papers and combine them to a unified theory of partitions, partition enumerators and MacWilliams identities for additive codes.
Our starting point will be the dualization of a partition as introduced by Byrne et al.~\cite{BGO07}.
We will characterize when the bidual of a partition coincides with the given partition, and will call such partitions
(Fourier-) reflexive.
They correspond to the admissible pairs in~\cite{ZiEr09} and to abelian association schemes investigated
in~\cite{Del73}.
We will derive basic properties of dual partitions and present the resulting MacWilliams identity for the associated partition enumerators.
Particular emphasis will be placed on product partitions and symmetrized product partitions of a cartesian
power of a finite abelian group.
We will show that these constructions commute with dualization.

We will derive all results directly from our definitions in order to show how they fall naturally (and easily) into place.
Whenever applicable, we will refer to analogous or closely related results in the literature; they are often for a slightly different
or restricted setting.

Next, we will briefly translate the setting to finite commutative Frobenius rings, in which case we will identify the ring with its
character module.

In Section~\ref{SS-Posets} we will apply our approach to a particular recent area of MacWilliams identities, where the weight function is
based on poset structures on the underlying index set $\{1,\ldots,n\}$.
This has been investigated by Kim and Oh~\cite{KiOh05} and Pinheiro and Firer~\cite{PiFi12}.
The resulting weight generalizes the Rosenbloom-Tsfasman weight introduced in~\cite{RoTs97}.
By studying the underlying partitions we will generalize the MacWilliams identities to codes over groups.
Once the desired partition duality is established, the identities are simply examples of the general theory.

In a separate paper~\cite{GL13homog}, we use our approach to investigate for which Frobenius rings the homogeneous weight
induces a reflexive partition.
%The homogeneous weight has been introduced by Constantinescu and Heise~\cite{CoHe95} for integer residue rings

%: no MacWId over non-Frobenius rings; vast literature on the subject - one is forced to restrict oneself.

%%%%%%%%%%%%%%%%%%%%%%%%%%%%%%%%%%%%%%%%%%%%%%%%%

%%%%%%%%%%%%%%%%%%%%%%%%%%%%%%%%%%%%%%%%%%%%%%%%
\section{Partitions and Duality on Finite Abelian Groups}\label{SS-Prep}
\setcounter{equation}{0}
%%%%%%%%%%%%%%%%%%%%%%%%%%%%%%%%%%%%%%%%%%%%%%%%
In this section we review the basic material on character theory and the Fourier transform.
We introduce the dual of a partition and present the resulting MacWilliams identity.

Throughout, let~$G$ be a finite abelian (additive) group.
Its \emph{character group} is defined as $\hat{G}:=\text{Hom}(G,\C^*)$ along with addition
$(\chi_1+\chi_2)(g):=\chi_1(g)\chi_2(g)$ for all~$\chi_i\in\hat{G}$.
We prefer to write the operation additively because if~$G$ is the additive group of ring, then the character group has a
module structure with exactly this addition (see Section~\ref{SS-FrobRing}).
Due to the finiteness of~$G$, each character value~$\chi(g)$ is a root of unity in~$\C$.
As a consequence, the negative of the character~$\chi$ is given by $(-\chi)(g):=\chi(g)^{-1}=\overline{\chi(g)}$, where
$\overline{\phantom{T}}$ denotes complex conjugation.
The zero element of~$\hat{G}$ is the trivial map $\chi\equiv1$.
It is called the \emph{principal character} of~$G$ and is denoted by~$\veps$.
It is well-known~\cite{Te99}, that the groups~$G$ and~$\hat{G}$ are (non-canonically) isomorphic, and thus $|G|=|\hat{G}|$.
The groups~$\hat{\hat{G}}$ and~$G$ are canonically  isomorphic by
mapping~$g\in G$  to the character of~$\hat{G}$ that sends $\chi\in\hat{G}$ to~$\chi(g)$.
Thus $g(\chi)=\chi(g)$.
This suggests to write $\inner{\chi,g}:=\chi(g)$, and thus we have the identities
$\inner{\chi,g}=\inner{g,\chi}$ as well as
$\inner{\chi,g_1+g_2}=\inner{\chi,g_1}\inner{\chi,g_2}$ and $\inner{\chi_1+\chi_2,g}=\inner{\chi_1,g}\inner{\chi_2,g}$.
%\begin{equation}\label{e-inner}
%  \inner{\chi,g_1+g_2}=\inner{\chi,g_1}\inner{\chi,g_2},\quad
%  \inner{\chi_1+\chi_2,g}=\inner{\chi_1,g}\inner{\chi_2,g}
%\end{equation}
%and $\inner{-\chi,g}=\inner{\chi,-g}=\overline{\inner{\chi,g}}$ and
%$\inner{\veps,g}=\inner{\chi,0}=1$.

It is easy to see~\cite[p.~171]{Te99} that
$(G_1\times\ldots\times G_n)\!\widehat{\phantom{I}}=\hat{G_1}\times\ldots\times \hat{G_n}$ for any groups~$G_1,\ldots,G_n$, and where the character maps are given by
\begin{equation}\label{e-prodchar}
   \inner{(\chi_1,\ldots,\chi_n),(g_1,\ldots,g_n)}:=\prod_{i=1}^n \inner{\chi_i, g_i} \text{ for all }\chi_i\in \hat{G_i},\, g_i\in G_i.
\end{equation}
%As a consequence, we may and will identity $(G_1\times\ldots\times G_n)\!\widehat{\phantom{I}}$ and
%$\hat{G_1}\times\ldots\times \hat{G_n}$ .

%As mentioned above, the groups~$G$ and~$\hat{G}$ are non-canonically isomorphic.
%However, as we will see, our central notions of dual partitions will depend on the choice of the isomorphism,
%and thus we will not identify~$G$ and~$\hat{G}$.
%An exception is by the integer residue groups, for which we will see later that the choice of the isomorphism does not
%make any difference.
%
%%%%%%%%%%%%%%%%%%%%%%%%%%
%\begin{exa}\label{E-ZN}
%Let $G=\Z_N$ for some $N>1$. Then the character group is described as follows.
%Choose a primitive $N$-th root of unity $\zeta\in\C$.
%The characters of~$G$ are given by the maps~$\chi_a$ where $\inner{\chi_a,g}=\zeta^{ag}$ for all $a,g\in G$ (and where
%we make use of the multiplication modulo~$N$).
%Note that~\eqref{e-inner} is trivially true.
%Moreover, $a\mapsto\chi_a$ furnishes an isomorphism of~$G$ and~$\hat{G}$.
%\end{exa}
%%%%%%%%%%%%%%%%%%%%%%%

A subgroup of~$G$ is called an \emph{(additive) code over} $G$.
For a code $\cC\leq G$, the \emph{dual code} $\cC^\perp\leq\hat{G}$ is defined as
\begin{equation}\label{e-DualSubgroup}
   \cC^\perp=\{\chi\in\hat{G}\mid \inner{\chi,h}=1\text{ for all }h\in \cC\}.
\end{equation}
It is straightforward to see that $\cC^\perp\cong\widehat{G/\cC}$, and thus
$|\cC^{\perp}|\cdot|\cC|=|G|$.
As a consequence,
\begin{equation}\label{e-bidual}
    \cC^{\perp\perp}=\cC\text{ for any code }\cC\leq G.
\end{equation}

The most important tool for deriving MacWilliams identities are the orthogonality relations~\cite[Lem.~(1.1.32)]{vLi99}
\begin{equation}\label{e-GChar}
   \sum_{g\in G}\inner{\chi,g}=\left\{\begin{array}{cl}
          0,&\text{if }\chi\neq\veps,\\ |G|,&\text{if }\chi=\veps.\end{array}\right.
%   \ \text{ and }\
%   \sum_{\chi\in\hat{G}}\inner{\chi,g}=\left\{\begin{array}{ll}
%          0,&\text{if }g\neq0,\\ |G|,&\text{if }g=0.\end{array}\right.
% \sum_{h\in H}\inner{\chi,h}=0\text{ if } \chi\not\in H^\perp\ \text{ and }\
%  \sum_{h\in H}\inner{\chi,h}=|H|\text{ if } \chi\in H^\perp.
\end{equation}
For a code $\cC\leq G$ they lead immediately to
\begin{equation}\label{e-HChar}
  \sum_{h\in \cC}\inner{\chi,h}=0\text{ if } \chi\not\in \cC^\perp\ \text{ and }\
  \sum_{h\in \cC}\inner{\chi,h}=|\cC|\text{ if } \chi\in \cC^\perp.
\end{equation}

A starting point for proving MacWilliams identities is the Poisson summation formula for maps on~$G$ and their
Fourier transforms.
Let~$V$ be any complex vector space and  $f:\, G\longrightarrow V$ be any map.
The \emph{Fourier transform} of~$f$ is defined as~\cite[p.~261]{Te99}
\begin{equation}\label{e-Fourier}
  f^+:\, \hat{G}\longrightarrow V,\quad \chi\longmapsto \sum_{g\in G}\inner{\chi,g}f(g).
\end{equation}
The Fourier transform is invertible, and with the standard identification of~$G$ with $\hat{\hat{G}}$ and~\eqref{e-GChar} one easily verifies
\begin{equation}\label{e-FouInv}
  f^{++}(g)=|G|f(-g).
%  \sum_{\chi\in\hat{G}}\inner{g,\chi}\sum_{h\in G}\inner{\chi,h}f(h)
%           =\sum_{h\in G}\sum_{\chi\in\hat{G}}\inner{\chi,g+h}f(h)=|G|f(-g).
\end{equation}
A map~$f$ and its Fourier transform satisfy the \emph{Poisson summation formula}~\cite[p.~199]{Te99}
\begin{equation}\label{e-Poisson}
  %\frac{1}{|G|}\sum_{\chi\in\cC^{\perp}}f^+(\chi)=\frac{1}{|\cC|}\sum_{h\in \cC}f(h)
    \sum_{\chi\in\cC^{\perp}}f^+(h)=|\cC^{\perp}|\sum_{h\in\cC}f(h)
\end{equation}
for any code $\cC\leq G$.

\medskip
We now turn to partitions on~$G$ and their dual partitions on~$\hat{G}$.
Let us first fix the following notation.
A partition $\cP=(P_m)_{m=1}^M$ of a set~$X$, i.e., the sets~$P_m$ are disjoint and cover~$X$, will mostly be
written as $\cP=P_1\mmid P_2\mmid\ldots\mmid P_M$.
The sets of a given partition are called its \emph{blocks}.
We write $|\cP|$ for the number of blocks in~$\cP$.
Two partitions~$\cP$ and~$\cQ$ of a set~$X$ are called \emph{identical} if $|\cP|=|\cQ|$ and the blocks coincide after suitable indexing.
Moreover,~$\cP$ is called \emph{finer} than~$\cQ$ (or~$\cQ$ is \emph{coarser} than~$\cP$), written as $\cP\leq\cQ$, if
for every block~$P$ of~$\cP$ there exists a block~$Q$ of~$\cQ$ such that $P\subseteq Q$.
Note that if~$\cP\leq\cQ$ then $|\cP|\geq|\cQ|$.
Denote by~$\widesim_{\cP}$ the equivalence relation induced by~$\cP$, thus,
$v\widesim_{\cP}v'$ if $v,\,v'$ are in the same block of~$\cP$.

The following notion of a dual partition will be central to our presentation.
It has been introduced by Byrne et~al.~\cite[p.~291]{BGO07} and forms an excellent setting to encompass
various situations discussed in the literature.
It goes back to the notion of F-partitions as introduced by Zinoviev and Ericson in~\cite{ZiEr96}.
The special case of (Fourier-) reflexive partitions, defined below, corresponds to abelian association schemes as
studied by Delsarte~\cite{Del73}, Camion~\cite{Cam98}, and others; see also~\cite{DeLe98} and~\cite{ZiEr09}.
If we identify the group~$G$ with its character group~$\hat{G}$, reflexive partitions turn out to be exactly
the B-partitions studied by Zinoviev and Ericson~\cite{ZiEr09}.
In this case one may also ask whether a partition is self-dual, i.e., coincides with its dual.
However, the dual partition, and thus self-duality, in general depends on the choice of the identifying isomorphism
between~$G$ and~$\hat{G}$.
Since we will do identify~$G$ with~$\hat{G}$ we will not embark on this direction.
Only in Section~\ref{SS-FrobRing}, when considering finite commutative Frobenius rings we will touch upon this issue.

%%%%%%%%%%%%%%%%%%%%%%%%%%%%%%
\begin{defi}\label{D-FPart}
Let $\cP=P_1\mmid P_2\mmid\ldots\mmid P_M$ be a partition of~$G$.
The \emph{dual partition}, denoted by~$\wcP$, is the partition of~$\hat{G}$ defined via the equivalence relation
\begin{equation}\label{e-simPhat}
  \chi\widesim_{\wcP} \chi' :\Longleftrightarrow \sum_{g\in P_m}\inner{\chi,g}=\sum_{g\in P_m}\inner{\chi',g} \text{ for all }m=1,\ldots,M.
\end{equation}
Let $\wcP=Q_1\mmid\ldots\mmid Q_L$.
The \emph{generalized Krawtchouk coefficients}~$K_{\ell,m}$ are defined as
\begin{equation}\label{e-klm}
    K_{\ell,m}=\sum_{g\in P_m}\inner{\chi,g},\text{ where $\chi$ is any element in }Q_l.
\end{equation}
The matrix $K=(K_{\ell,m})\in\C^{L\times M}$ is the \emph{generalized Krawtchouk matrix} of $(\cP,\wcP)$.
Finally, the partition~$\cP$ is called \emph{Fourier-reflexive} or simply \emph{reflexive} if $\wwcP=\cP$.
\end{defi}
%%%%%%%%%%%%%%%%%%%%%%%%%%%%%
Note that by definition of the dual partition, the Krawtchouk coefficient $K_{\ell,m}$ does not depend on the choice
of~$\chi$ in~$Q_{\ell}$.
The relation to the Fourier transform will be detailed below.

The following properties are easy to verify.

%%%%%%%%%%%%%%%%%%%%%%%%%%%%%%%%%%%%
\begin{rem}\label{R-DualNega}
Let $\cP=P_1\mid \ldots\mid P_M$.
\begin{alphalist}
\item The singleton~$\{\veps\}$ is always a block of~$\wcP$.
       This follows immediately from
       $\sum_{m=1}^M\sum_{g\in P_m}\inner{\veps,g}=\sum_{g\in G}\inner{\veps,g}$ along
       with the orthogonality relations~\eqref{e-GChar}.
       In particular, if~$\cP$ consists of the single block~$G$, then $\wcP=\{\veps\}\mmid \hat{G}\backslash\{\veps\}$,
       which again is a simple consequence of~\eqref{e-GChar}.
%      Indeed, $\sum_{g\in P_m}\inner{\veps,g}=|P_m|$ for every block~$P_m$.
%      Hence if~$\chi\in\hat{G}$ satisfies $\sum_{g\in P_m}\inner{\chi,g}=|P_m|$ for all $P_m$, then
%      $\sum_{g\in G}\inner{\chi,g}=|G|$ and thus $\chi=\veps$ by~\eqref{e-GChar}.
%\item By~\eqref{e-GChar} we have $\sum_{m=1}^M K_{\ell,m}=|G|$ if $Q_\ell=\{\veps\}$ and
 %     $\sum_{m=1}^M K_{\ell,m}=0$ otherwise.
\item $\wcP=-\wcP=\widehat{-\cP}$, where $-\cP:=-P_1\,\mid -P_2\,\mid\ldots\mid -P_M$ and
         $-P:=\{-g\mid g\in P\}$ for a set~$P$.
        This follows from
      $\sum_{g\in P_m}\inner{\chi,-g}=\sum_{g\in P_m}\inner{-\chi,g}=\sum_{g\in P_m}\!\inner{\chi,g}^{-1}
       =\overline{\sum_{g\in P_m}\phantom{\big|}\hspace*{-.4em}\inner{\chi,g}}$ (the complex conjugate).
\item If $\cP\leq\cQ$, then $\wcP\leq \widehat{\cQ}$.
      This is clear with~\eqref{e-simPhat} and the fact that each block of~$\cQ$ is a union of blocks of~$\cP$.
\end{alphalist}
\end{rem}
%%%%%%%%%%%%%%%%%%%%%%%%%%%

%%%%%%%%%%%%%%%%%%%%%%%%%%%%%%%%%%
\begin{exa}\label{E-Z6Z8}
\begin{alphalist}
\item Consider $G=\Z_N$ for some $N>1$. Then the character group is given by $\{\chi_a\mid a\in\Z_N\}$, where
        $\inner{\chi_a,g}=\zeta^{ag}$ with a fixed primitive $N$-th root of unity $\zeta\in\C$.
        In this way, $a\mapsto\chi_a$ furnishes an isomorphism of~$G$ and~$\hat{G}$.
        Consider now $\Z_6$ and identify the group with its character group in the above way.
       Let $\cP=0\mmid1,3,5\mmid2,4$ (where we omit the parentheses).
      Then one computes $\widehat{\cP}=0\mmid1,2,4,5\mmid3$.
      To be precise, with a $6$-th primitive root of unity
      $\sum_{g\in P_2}\inner{\chi_a,g}=\zeta^{a\cdot1}+\zeta^{a\cdot3}+\zeta^{a\cdot5}$ has value~$0$ for $a=1,2,4,5$ and
      value~$-1$ for $a=3$ and similarly for the sum over the block $\{2,4\}$.
      Computing all other sums we obtain the Krawtchouk matrix
      \[
          \begin{pmatrix}1&3&2\\1&0&-1\\1&-3&2\end{pmatrix},
      \]
      where the rows and columns are indexed by the blocks of~$\widehat{\cP}$ and~$\cP$ in the given order.
      In the same way one can compute~$\wwcP$ and obtains $\wwcP=\cP$. Hence~$\cP$ is reflexive.
\item For the partition $\cP=0\mid 1,2\mid 3,4,5$ of~$\Z_6$ one computes $\wcP=0\mid 1\mid 2,4\mid 3\mid 5$, while
      $\wwcP$ consists of the singletons. Thus, $\cP$ is not reflexive.
%Let~$G=\Z_8$. As in~(a) or Example~\ref{E-ZN}, we identify~$G$ with~$\hat{G}$ using a primitive $8$-th root
%      of unity $\zeta\in\C$.
%      Consider the partition $\cP=0\mmid1,2,3\mmid4\mmid5,6,7$.
%      One obtains $\wcP=0\mmid1\mmid2,4,6\mmid3\mmid5\mmid7$, while~$\wwcP$ consists of the singletons in~$\Z_8$.
%      In particular,~$\cP$ is not reflexive.
%      The Krawtchouk matrix of $(\cP,\wcP)$ is
%      \[
%         \begin{pmatrix}1&3&1&3\\1&\alpha&-1&-\alpha\\1&-1&1&-1\\1&\beta&-1&-\beta\\1&-\beta&-1&\beta\\1&-\alpha&-1&\alpha\end{pmatrix},
%         \text{ where }\alpha=\zeta(1+\zeta+\zeta^2)\text{ and }\beta=\zeta(1-\zeta+\zeta^2).
%      \]
\item This example will be needed later in Section~\ref{SS-Posets}. 
      Let~$G=A_1\times\ldots\times A_n$, where~$A_i$ is a finite abelian group of order~$q\geq2$ for all~$i$.
      Let~$\cP$ be the partition of~$G$ given by the Hamming weight, that is,
      $P_m=\{a\in G\mid \wt(a)=m\}$ for $m=0,\ldots,n$, and where $\wt(a_1,\ldots,a_n)=|\{i\mid a_i\neq0\}|$.
      Analogously, let~$\cQ$ be the Hamming partition of~$\hat{G}=\hat{A_1}\times\ldots\times\hat{A_n}$, thus
      $Q_\ell=\{\chi\in\hat{G}\mid \wt(\chi)=\ell\}$.
      Note that $\wt(\chi_1,\ldots,\chi_n)=|\{i\mid \chi_i\neq\veps\}|$ since~$\veps$ is the zero element of~$\hat{G}$.
      Then it is well known, see for instance \cite[Thm.~4.1]{Del73} or Lemma~2.6.2 in~\cite{HP03}, that
      $\sum_{a\in P_m}\inner{\chi,a}=K_m^{(n,q)}(\ell)$ for each $\chi\in Q_\ell$, and
      where
      \begin{equation}\label{e-Krawtclassic}
          K_m^{(n,q)}(x)=\sum_{j=0}^m(-1)^j(q-1)^{m-j}\binom{x}{j}\binom{n-x}{m-j}
       \end{equation}
       is the Krawtchouk polynomial
       (the proof of Lemma~2.6.2 in~\cite{HP03}, given for $\Z_q^n$, works mutatis mutandis for all
       $A_1\times\ldots\times A_n$).
      Since $K_1^{(n,q)}(\ell)\neq K_1^{(n,q)}(\ell')$ whenever $\ell\neq\ell'$, this shows that~$\cQ$ is, as expected, the
      dual partition of~$\cP$, and thus $K_{\ell,m}=K_m^{(n,q)}(\ell)$ are the Krawtchouk coefficients of~$(\cP,\cQ)$.
      By symmetry,~$\cP$ is reflexive.
      Identifying the isomorphic groups~$G$ and~$\hat{G}$ we see that we may call~$\cP$ self-dual, i.e.,
      $\cP=\wcP$ (with respect to any isomorphism between~$G$ and~$\hat{G}$).
\end{alphalist}
\end{exa}
%%%%%%%%%%%%%%%%%%%%%%%%%%%%%%%%%%%

The rest of this section is devoted to a MacWilliams identity for partition enumerators of codes in~$G$ and their dual codes
in~$\hat{G}$.
The result can also be found in~\cite[Thm.~4.72, Prop.~5.42]{Cam98} by Camion, where it has been derived with
the aid of association schemes, and in~\cite[p.~94]{Fo98} by Forney for discrete subgroups of locally compact abelian groups.
In~\cite[p.~88]{Del73} Delsarte notices already the connection between abelian association schemes and MacWilliams
identities.
In the form of~\eqref{e-MacWPart} below and for the special case of self-dual partitions (i.~e.,~$\cP=\wcP$ when
identifying $G$ and~$\hat{G}$), the identity is established by Zinoviev/Ericson~\cite[Thm.~1]{ZiEr96}.
For the case of submodules of~$R^n$, where~$R$ is a Frobenius ring identified with~$\hat{R}$, and where
the partition~$\cQ$ arises from symmetrization of a partition on~$R$, Theorem~\ref{T-MacWPart} below has
been established by  Byrne et al.~\cite[Thm.~2.11]{BGO07}.

Let $\cP=P_1\mmid\ldots\mmid P_M$  and $\cQ=Q_1\mmid\ldots\mmid Q_L$ be partitions of~$G$ and~$\hat{G}$, respectively,
such that $\cP=\widehat{\cQ}$ (note that~$\cQ$ is the ``initial'' partition and~$\cP$ its dual).
Let $K=(K_{m,\ell})\in\C^{M\times L}$ be the Krawtchouk matrix of~$(\cQ,\cP)$.
For a code $\cC\leq G$ and its dual $\cC^\perp\leq\hat{G}$  define the partition enumerators
$ \pe_{\cP,\cC}\in\C[X_1,\ldots,X_M]$ and $\pe_{\cQ,\cC^{\perp}}\in\C[Y_1,\ldots,Y_L]$ as
\begin{equation}\label{e-peboth}
    \pe_{\cP,\cC}=\sum_{m=1}^M A_m X_m,\quad
    \pe_{\cQ,\cC^{\perp}}=\sum_{\ell=1}^L B_\ell Y_\ell,
    \text{ where }A_m=|\cC\cap P_m|\text{ and } B_\ell=|\cC^\perp\cap Q_\ell|.
\end{equation}
They carry the information about the number of codewords contained in each block of the partitions.

\medskip
The following form of MacWilliams identity provides a transformation~$\cM$ of the enumerator~$\pe_{\cP,\cC}$
resulting in the enumerator $\pe_{\cQ,\cC^{\perp}}$.
Its well-definedness is guaranteed by the relation $\cP=\widehat{\cQ}$.
In general, it is not possible to invert the operator and compute $\pe_{\cP,\cC}$ from $\pe_{\cQ,\cC^{\perp}}$.
This is only guaranteed if $\cQ=\wcP$, i.e., if~$\cP$ and~$\cQ$ are both reflexive and thus mutually dual.
Thus, not surprisingly, reflexive partitions provide a symmetric situation and form the most appealing case.

%%%%%%%%%%%%%%%%%%%%%%%%%%
\begin{theo}\label{T-MacWPart}
Define the MacWilliams transformation $\cM:\,\C[X_1,\ldots,X_M]\longrightarrow\C[Y_1,\ldots,Y_L]$  as the algebra
homomorphism given by $\cM(X_m)=\sum_{\ell=1}^L K_{m,\ell}Y_\ell$ for $m=1,\ldots,M$.
Then
\begin{equation}\label{e-MacWPart}
  \pe_{\cQ,\cC^{\perp}}=\frac{1}{|\cC|}\cM(\pe_{\cP,\cC}).
  %=\frac{1}{|\cC|}\varphi(\cM(\fwe_{\cC}))=\frac{1}{|\cC|}\widehat{\cM}(\varphi(\fwe_{\cC}))
\end{equation}
\end{theo}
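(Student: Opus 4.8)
The plan is to prove the polynomial identity~\eqref{e-MacWPart} by comparing the coefficients of each monomial~$Y_\ell$. Since $\cM$ is an algebra homomorphism determined by its values on the generators~$X_m$, and $\pe_{\cP,\cC}$ is homogeneous of degree one, I would first use linearity to write
\[
\cM(\pe_{\cP,\cC})=\sum_{m=1}^M A_m\,\cM(X_m)=\sum_{\ell=1}^L\Big(\sum_{m=1}^M A_m K_{m,\ell}\Big)Y_\ell,
\]
so that the theorem reduces to the scalar identities $|\cC|\,B_\ell=\sum_{m=1}^M A_m K_{m,\ell}$ for every $\ell=1,\ldots,L$, where $A_m=|\cC\cap P_m|$ and $B_\ell=|\cC^\perp\cap Q_\ell|$ as in~\eqref{e-peboth}.

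Next I would unfold the right-hand side. By~\eqref{e-klm}, applied to the pair $(\cQ,\cP)$, the entry $K_{m,\ell}$ equals $\sum_{\chi\in Q_\ell}\inner{\chi,g}$ for any $g\in P_m$; in particular this value is attained by every codeword $g\in\cC\cap P_m$. Hence
\[
\sum_{m=1}^M A_m K_{m,\ell}=\sum_{m=1}^M\ \sum_{g\in\cC\cap P_m}\ \sum_{\chi\in Q_\ell}\inner{\chi,g}=\sum_{g\in\cC}\ \sum_{\chi\in Q_\ell}\inner{\chi,g},
\]
where the last equality simply uses that the blocks~$P_m$ partition~$G$ and hence partition~$\cC$. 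The decisive step is then to exchange the two finite sums and invoke the orthogonality relations~\eqref{e-HChar}: for fixed~$\chi$, the inner sum $\sum_{g\in\cC}\inner{\chi,g}$ equals $|\cC|$ when $\chi\in\cC^\perp$ and $0$ otherwise. With this interchange the double sum collapses to $\sum_{\chi\in Q_\ell\cap\cC^\perp}|\cC|=|\cC|\,|Q_\ell\cap\cC^\perp|=|\cC|\,B_\ell$, which is exactly the required scalar identity; dividing by~$|\cC|$ finishes the proof.

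I do not expect a genuine obstacle, as the argument is a direct bookkeeping computation once the sums are arranged correctly. The one point demanding care is the role of the hypothesis $\cP=\widehat{\cQ}$: it is precisely what guarantees that $K_{m,\ell}$ is well defined (independent of the representative $g\in P_m$), so that replacing each codeword's contribution by the constant $K_{m,\ell}$ in the first display of the previous paragraph is legitimate. This is where the compatibility of the two partitions enters. It is worth noting in passing that the reverse direction, recovering $\pe_{\cP,\cC}$ from $\pe_{\cQ,\cC^{\perp}}$, is not available unless $\cQ=\wcP$ as well, i.e.\ unless the Krawtchouk matrix~$K$ is invertible.
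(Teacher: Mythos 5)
Your proof is correct: the reduction to the scalar identities $|\cC|\,B_\ell=\sum_{m=1}^M A_m K_{m,\ell}$, the observation that $\cP=\widehat{\cQ}$ is what makes $K_{m,\ell}$ constant on each block $P_m$ (so each codeword $g\in\cC\cap P_m$ may be replaced by the common value), and the final interchange of sums followed by the orthogonality relations~\eqref{e-HChar} are all sound. The route differs from the paper's in packaging rather than substance. The paper defines polynomial-valued maps $\psi(g)=X_{m(g)}$ and $\bar{\psi}(\chi)=Y_{\ell(\chi)}$, identifies $\cM(X_m)$ with the Fourier transform $\bar{\psi}^+(g)$ for $g\in P_m$, and then invokes the Poisson summation formula~\eqref{e-Poisson} applied to $\bar{\psi}$; you instead compare coefficients of each $Y_\ell$ and use only~\eqref{e-HChar}. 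Since Poisson summation is itself proved by exactly the swap-plus-orthogonality computation you carry out, your argument effectively inlines the paper's key tool. What the paper's formulation buys is reusability: the same template (indicator map, Fourier transform, Poisson) is applied verbatim to prove the product-partition identity of Theorem~\ref{T-weprodpart}, and it makes the conceptual role of the Fourier transform visible. What yours buys is that it is elementary and self-contained: no polynomial-valued Fourier transform is needed, and the identity is exposed as pure bookkeeping plus character orthogonality.
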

%%%%%%%%%%%%%%%%%%%%%%%%%%
\begin{proof}
Define the maps $\psi:G\rightarrow \C[X_1,\ldots,X_M],\ g\mapsto X_{m(g)}$, where~$m(g)$ is the unique index
such that $g\in P_{m(g)}$.
Similarly, let $\bar{\psi}:\hat{G}\rightarrow \C[Y_1,\ldots,Y_L],\ \chi\mapsto Y_{\ell(\chi)}$, where~$\ell(\chi)$ is such that
$\chi\in Q_{\ell(\chi)}$.
Then $\pe_{\cP,\cC}=\sum_{g\in\cC}X_{m(g)}$ and $\pe_{\cQ,\cC^{\perp}}=\sum_{\chi\in\cC^{\perp}}Y_{\ell(\chi)}$.
Using the definition of the Krawtchouk coefficients  we obtain for $g\in P_m$
\[
  \cM(X_m)=\sum_{\ell=1}^LK_{m,\ell}Y_\ell=\sum_{\ell=1}^L\sum_{\chi\in Q_\ell}\inner{\chi,g}\bar{\psi}(\chi)
  =\sum_{\chi\in\hat{G}}\inner{\chi,g}\bar{\psi}(\chi)=\bar{\psi}^+(g),
\]
where~$\bar{\psi}^+$ is the Fourier transform.
The Poisson formula~\eqref{e-Poisson} applied to the map~$\bar{\psi}$ yields
\[
   \cM(\pe_{\cP,\,\cC})=\sum_{g\in\cC}\cM(X_{m(g)})=\sum_{g\in\cC}\bar{\psi}^+(g)=|\cC|\sum_{\chi\in\cC^{\perp}}\bar{\psi}(\chi)
   =|\cC|\pe_{\cQ,\,\cC^{\perp}}.
   \qedhere
\]
%Consider the polynomial rings~$V$ and~$\bar{V}$ from Theorem~\ref{T-fwe} and define the $\C$-algebra homomorphisms
%$\psi:\,V\longrightarrow V_{\cP}:=\C[X_1,\ldots,X_M]$ via $\psi(Z_g)=X_m$, where~$m$ is such that $g\in P_m$, and
%$\psi':\,\bar{V}\longrightarrow\bar{V}_{\cQ}:=\C[Y_1,\ldots,Y_L]$ via $\psi'(Z_{\chi})=Y_\ell$,
%where~$\ell$ is such that $\chi\in Q_\ell$.
%Then $\psi(\fwe_{\cC})=\pe_{\cP,\cC}$ and $\psi'(\fwe_{\cC^\perp})=\pe_{\cQ,\cC^\perp}$.
%Using the transformation~$\cM$ from Theorem~\ref{T-fwe} and the definition of the Krawtchouk coefficients, we compute
%for any  $g\in P_m$
%\[
%  (\psi'\circ\cM)(Z_g)=\psi'\Big(\sum_{\chi\in\hat{G}}\inner{\chi,g}Z_{\chi}\Big)
%  =\sum_{\ell=1}^L\sum_{\chi\in Q_\ell}\inner{\chi,g} Y_\ell
%  =\sum_{\ell=1}^L K_{m,\ell}Y_\ell=(\cM'\circ\psi)(Z_g).
%\]
%Thus we have $\psi'\circ\cM=\cM'\circ\psi$, and with the aid of Theorem~\ref{T-fwe} we derive the MacWilliams identity
%$\pe_{\cQ,\cC^{\perp}}=\psi'(\fwe_{\cC^{\perp}})
%  =\frac{1}{|\cC|}\psi'(\cM(\fwe_{\cC}))=\frac{1}{|\cC|}\cM'(\psi(\fwe_{\cC}))=\frac{1}{|\cC|}\cM'(\pe_{\cP,\cC})$.
%%\tag*{\qedhere}
\end{proof}

One should also observe that~\eqref{e-MacWPart} is simply the linear identity
\begin{equation}\label{e-MacWLinear}
   (B_1,\ldots,B_L)=\frac{1}{|\cC|}(A_1,\ldots,A_M)K,
\end{equation}
where~$K$ is the Krawtchouk matrix.

It is easy to see that the MacWilliams identity can be generalized straightforwardly to the following situation.
Suppose $\cP'\leq\cP$ and $\cQ'\geq\cQ$, i.e., $\cP'$ is finer than~$\cP$ and~$\cQ'$ is coarser than~$\cQ$.
Then one obtains a MacWilliams transformation of~$\pe_{\cP',\cC}$ resulting in $\pe_{\cQ',\cC^{\perp}}$.
For symmetrized partitions on~$R^n$, where~$R$ is a Frobenius ring, this has been  presented by Byrne et
al.~\cite[Thm.~2.11]{BGO07}.

%%%%%%%%%%%%%%%%%%%%%
\section{Reflexive Partitions}\label{SS-ReflPart}
%%%%%%%%%%%%%%%%%%%%%
In this section we preset a characterization of reflexivity and discuss some further properties of dual partitions with an
emphasis on reflexive partitions.

We need to describe the dualization of partitions in terms of the Fourier transform.
Similar considerations can be found in the papers~\cite{ZiEr96,ZiEr09} by Zinoviev and Ericson.
Let
\[
    \cP=P_1\mmid P_2\mmid\ldots\mmid P_M\ \text{ and }\ \wcP=Q_1\mmid Q_2\mmid\ldots\mmid Q_L
\]
be partitions of~$G$ and~$\hat{G}$.
Denote by~$\psi_m$ the \emph{indicator function} of the block~$P_m,\,m=1,\ldots,M$.
In the vector space~$\C^{G}$ of maps from~$G$ to~$\C$, consider the $m$-dimensional  subspace
$\cL(\cP)=\inner{\psi_1,\ldots,\psi_M}_{\C}$ generated by the functions~$\psi_m$.
%Since $\psi_i\psi_j=\delta_{ij}\psi_i$ (where $\delta_{ij}$ is the Kronecker symbol)
%with respect to pointwise multiplication, the space~$\cL(\cP)$ is closed under multiplication.
%Thus $\cL(\cP)=\inner{\psi_1,\ldots,\psi_M}_{\C}=\C[\psi_1,\ldots,\psi_M]$
%is a $\C$-algebra of dimension~$M$.
The Fourier transforms of the indicator functions are given by $\psi_m^+(\chi)=\sum_{g\in P_m}\inner{\chi,g}$,
and thus we may write for $g,\,g'\in G$ and $\chi,\,\chi'\in\hat{G}$
\begin{align}
  g\widesim_{\cP}g'&\Longleftrightarrow \psi_m(g)=\psi_m(g') \text{ for all }m=1,\ldots,M, \label{e-FourEquiva}\\
  \chi\widesim_{\wcP}\chi'&\Longleftrightarrow \psi_m^+(\chi)=\psi_m^+(\chi') \text{ for all }m=1,\ldots,M. \label{e-FourEquivb}
\end{align}
The last equivalence shows that the functions~$\psi_m^+$ are constant on each block~$Q_\ell$ of~$\wcP$.
In other words,
\begin{equation}\label{e-L+}
   \cL^+(\cP):=\inner{\psi_1^+,\ldots,\psi_M^+}_{\C}\subseteq\cL(\wcP):=\inner{\xi_1,\ldots,\xi_L}_{\C}, %=\C[\xi_1,\ldots,\xi_L],
\end{equation}
where~$\xi_\ell$ denotes the indicator function of the dual block~$Q_\ell$.
Now we are in a position to characterize reflexivity.
It leads to the convenient criterion that if the dual partition~$\wcP$ has the same number of
blocks as~$\cP$, then~$\cP$ is reflexive; in particular, taking further duals does not increase the number of blocks
of the partitions.
In the language of association schemes, the following criterion for reflexivity can also be found in~\cite[Fact.~V.2]{Hon10}.

%%%%%%%%%%%%%%%%%%%%%%%%
\begin{theo}\label{P-LPplus}
We have $|\cP|\leq|\wcP|$ and $\wwcP\leq\cP$.
Moreover, $\cP$ is reflexive if and only if $|\cP|=|\wcP|$.
\end{theo}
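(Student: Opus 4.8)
The plan is to argue entirely at the level of the function spaces $\cL(\cP)=\inner{\psi_1,\ldots,\psi_M}_{\C}\subseteq\C^G$ and their Fourier images, exploiting the principle that a partition is recovered from its space of indicator functions as the partition into common level sets. Precisely, \eqref{e-FourEquiva} says $g\widesim_{\cP}g'$ iff every $f\in\cL(\cP)$ agrees at $g$ and $g'$, so $\cP$ is determined by $\cL(\cP)$; writing $\text{part}(V)$ for the level-set partition of a subspace $V$, an inclusion $V\subseteq W$ of two such spaces yields the reverse inequality $\text{part}(W)\leq\text{part}(V)$, since matching on more functions identifies fewer points. I intend to derive all three assertions from one chain of subspace inclusions together with a dimension count.

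First I would record the dimension bookkeeping. Because $f\mapsto f^+$ is a linear isomorphism (the inversion formula \eqref{e-FouInv}), it carries $\cL(\cP)$ isomorphically onto $\cL^+(\cP)$, so $\dim\cL^+(\cP)=\dim\cL(\cP)=|\cP|$, and likewise $\dim\cL^+(\wcP)=|\wcP|$. Feeding this into the inclusion $\cL^+(\cP)\subseteq\cL(\wcP)$ from \eqref{e-L+}, together with $\dim\cL(\wcP)=|\wcP|$, gives $|\cP|\leq|\wcP|$ immediately. Next I would identify the bidual: reading \eqref{e-FourEquivb} with~$\wcP$ in place of~$\cP$ (its indicator functions being the~$\xi_\ell$) shows $g\widesim_{\wwcP}g'$ iff $\xi_\ell^+(g)=\xi_\ell^+(g')$ for all~$\ell$, so $\wwcP=\text{part}\big(\cL^+(\wcP)\big)$ where $\cL^+(\wcP)=\inner{\xi_1^+,\ldots,\xi_L^+}_{\C}$.

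The heart of the proof is the inclusion $\cL(\cP)\subseteq\cL^+(\wcP)$. Applying the Fourier transform to $\psi_m^+\in\cL(\wcP)$ (again \eqref{e-L+}) places $\psi_m^{++}$ in the Fourier image of $\cL(\wcP)$, which is exactly $\cL^+(\wcP)$; and by \eqref{e-FouInv} the function $\psi_m^{++}$ equals $|G|$ times $g\mapsto\psi_m(-g)$, so $\psi_m(-\cdot)\in\cL^+(\wcP)$. The step I expect to be the main obstacle is passing from $\psi_m(-\cdot)$ back to $\psi_m$, i.e. removing the sign, since naively one only lands in $\cL(-\cP)$. Here I would invoke $\wcP=-\wcP$ from Remark~\ref{R-DualNega}(b): negation merely permutes the blocks~$Q_\ell$, and a short computation $\xi_\ell^+(-g)=\xi_{\ell'}^+(g)$ shows $\cL^+(\wcP)$ is invariant under $f\mapsto f(-\cdot)$; applying this involution to $\psi_m(-\cdot)$ returns $\psi_m\in\cL^+(\wcP)$, establishing $\cL(\cP)\subseteq\cL^+(\wcP)$.

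From this everything closes. Since $\cP=\text{part}\big(\cL(\cP)\big)$ and $\wwcP=\text{part}\big(\cL^+(\wcP)\big)$, the inclusion $\cL(\cP)\subseteq\cL^+(\wcP)$ gives $\wwcP\leq\cP$. For the equivalence, if $\cP$ is reflexive then $|\wcP|\leq|\wwcP|=|\cP|$ (the bound $|\wcP|\le|\wwcP|$ being the first assertion applied to~$\wcP$), which with $|\cP|\leq|\wcP|$ forces $|\cP|=|\wcP|$; conversely, if $|\cP|=|\wcP|$, then $\dim\cL(\cP)=|\cP|=|\wcP|=\dim\cL^+(\wcP)$ upgrades the inclusion $\cL(\cP)\subseteq\cL^+(\wcP)$ to an equality of spaces, whence $\cP=\text{part}\big(\cL(\cP)\big)=\text{part}\big(\cL^+(\wcP)\big)=\wwcP$, so $\cP$ is reflexive.
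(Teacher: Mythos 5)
Your proof is correct and takes essentially the same route as the paper's: the dimension bound from $\cL^+(\cP)\subseteq\cL(\wcP)$ and injectivity of the Fourier transform, then Fourier inversion combined with the symmetry $\wcP=-\wcP$ (Remark~\ref{R-DualNega}(b)) to obtain $\cL(\cP)\subseteq\cL^+(\wcP)$, and the level-set correspondence between these subspaces and the partitions $\cP$, $\wwcP$ to conclude $\wwcP\leq\cP$ and the reflexivity criterion. The only cosmetic difference is that you finish the converse by upgrading $\cL(\cP)\subseteq\cL^+(\wcP)$ to an equality via the dimension count, whereas the paper re-runs its argument starting from the equality $\cL^+(\cP)=\cL(\wcP)$; the substance is identical.
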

%%%%%%%%%%%%%%%%%%%%%%%%%%%%

\begin{proof}
Injectivity of the Fourier transform and~\eqref{e-L+} yield $|\cP|=M=\dim\cL^+(\cP)\leq\dim\cL(\wcP)=L=|\wcP|$.
This proves the first statement.
\\
Next, the inverse Fourier transform in~\eqref{e-FouInv}, Remark~\ref{R-DualNega}(b), and~\eqref{e-L+} yield
    $\cL(-\cP)=\cL^{++}(\cP)\subseteq\cL^+(\wcP)=\cL^+(-\wcP)$.
    This in turn implies
    $\cL(\cP)\subseteq\cL^+(\wcP)=\inner{\xi_0^+,\ldots,\xi_L^+}_{\C}$.
    In other words, the indicator functions $\psi_m$ are linear combinations of~$\xi_0^+,\ldots,\xi_L^+$.
    Now~\eqref{e-FourEquivb} for the dual partition shows that if
    $g\widesim_{\mbox{\scriptsize$\wwcP$}\,}g'$ then $\xi_\ell^+(g)=\xi_\ell^+(g')$ for all~$\ell$, thus
    $\psi_m(g)=\psi_m(g')$ for all~$m$.
    With~\eqref{e-FourEquiva} we conclude $g\widesim_{\cP}g'$.
    All of this shows $\wwcP\leq\cP$.
\\
It remains to prove the characterization of reflexivity.
The only-if part follows from the first part.
For the converse it remains to show $\cP\leq\wwcP$.
By~\eqref{e-L+} we have $\cL^+(\cP)=\cL(\wcP)$.
The same reasoning as above implies
%As above this yields $\cL(-\cP)=\cL^{++}(\cP)=\cL^+(\wcP)=\cL^+(-\wcP)$.
%This in turn implies
$\cL(\cP)=\cL^+(\wcP)=\inner{\xi_0^+,\ldots,\xi_L^+}_{\C}$, and this means
that each~$\xi_\ell^+$ is a linear combination of $\psi_1,\ldots,\psi_M$.
Let now $g,\,g'\in G$ be such that $g\widesim_{\cP}g'$.
Then $\psi_m(g)=\psi_m(g')$ for all~$m$, and
hence $\xi_\ell^+(g)=\xi_\ell^+(g')$ for all~$\ell$.
Thus $g\widesim_{\mbox{\scriptsize$\wwcP$}\,}g'$,  and this establishes $\cP\leq\wwcP$.
\end{proof}

As already hinted at, reflexive partitions are closely related to abelian association schemes.
Indeed, with the above result and some straightforward, but lengthy computations one can show that the partition~$\cP$
is reflexive if and only the partition~$\cR=R_1\mmid R_2\mmid\ldots$ $\mmid R_M$ of $G\times G$ defined via
$(x,y)\in R_m\Leftrightarrow x-y\in P_m$ is an abelian association scheme.
With the machinery of association schemes, this has already been established in \cite[Cor.~4.51]{Cam98} as well
as~\cite[Thm.~1]{ZiEr09} and goes back to \cite[Sec.~2.6.1]{Del73}.

\medskip
We now return to the general situation of dualizing partitions.
It is natural to ask whether the operations on the lattice of partitions of~$G$ (w.r.t.~$\leq$) are respected by
dualization.
Let~$\cP$ and $\cQ$ be partitions. Recall that the \emph{join} $\cP\vee\cQ$ is defined as the finest partition that
is coarser than both~$\cP$ and~$\cQ$, and the \emph{meet} $\cP\wedge\cQ$ is
defined as coarest partition that is finer than both~$\cP$ and~$\cQ$.
The following examples show that in general $\widehat{\cP\vee\cQ}\neq\widehat{\cP}\vee\widehat{\cQ}$ and $\widehat{\cP\wedge\cQ}\neq\widehat{\cP}\wedge\widehat{\cQ}$.
%%%%%%%%%%%%%%%%%%%%%%%%
\begin{exa}\label{E-MeetJoin}
\begin{alphalist}
\item On the group~$\Z_8$ (identified with its character group~$\hat{G}$ as in Example~\ref{E-Z6Z8}(a)) consider
      $\cP=0\mmid1,7\mmid2,6\mmid 3,5\mmid4$ and $\cQ=0\mmid1,3\mmid2,6\mmid4\mmid5,7$.
      One easily checks that $\wcP=\cP$ and $\widehat{\cQ}=\cQ$.
      %This also follows from Theorem~\ref{T-OrbitPart} below since both partitions consist of the orbits of a group of
      %automorphisms on~$\Z_8$, namely the automorphisms induced by the multiplicative action of $\{1,7\}$ and $\{1,3\}$, respectively.
      However, $\wcP\wedge\widehat{\cQ}=\cP\wedge\cQ=0\mmid1\mmid 2,6\mmid3\mmid4\mmid5\mmid7$, whereas
      $\widehat{\cP\wedge\cQ}$
      consists of the singletons in~$\Z_8$.
\item On $G=\Z_5$ (identified with its character group) consider the partitions
      $\cP=0\mmid1,2\mmid3,4$ and $\cQ=0\mmid1,2,3\mmid4$.
      Then $\wcP$ and~$\widehat{\cQ}$ both consist of the singletons in~$\Z_5$, and hence so does $\wcP\vee\widehat{\cQ}$.
      On the other hand, $\cP\vee\cQ=0\mmid1,2,3,4=\widehat{\cP\vee\cQ}$.
\end{alphalist}
\end{exa}
%%%%%%%%%%%%%%%%%%%%%%%%%%%%%%

However, we have the following result.
It has also been derived with the theory of association schemes in \cite[Sec.~4.13]{Cam98}.

%%%%%%%%%%%%%%%%%%%%%%%%%%%%%%%
\begin{prop}\label{P-JoinMeet}
Let~$\cP$ and~$\cQ$ be reflexive partitions of~$G$. Then the join $\cP\vee\cQ$ is reflexive and
$\widehat{\cP\vee\cQ}=\widehat{\cP}\vee\widehat{\cQ}$.
Example~\ref{E-MeetJoin}(a) shows that the analogous result for the meet~$\cP\wedge\cQ$ is in general not true.
\end{prop}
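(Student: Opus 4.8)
The plan is to convert the join into an intersection of indicator-function spaces and then transport everything through the Fourier transform, using the spaces $\cL(\cdot)$ and $\cL^+(\cdot)$ from this section. The first step is the purely lattice-theoretic identity
\[
   \cL(\cP\vee\cQ)=\cL(\cP)\cap\cL(\cQ),
\]
valid for arbitrary partitions: a function lies in $\cL(\cP)\cap\cL(\cQ)$ exactly when it is constant on the blocks of both $\cP$ and $\cQ$, and since $\widesim_{\cP\vee\cQ}$ is the transitive closure of the union of $\widesim_{\cP}$ and $\widesim_{\cQ}$, this happens exactly when the function is constant on the blocks of $\cP\vee\cQ$.

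Next, let $\Phi\colon f\mapsto f^+$ denote the Fourier transform, which by~\eqref{e-FouInv} is a linear isomorphism $\C^{G}\to\C^{\hat G}$; hence $\cL^+(\cR)=\Phi(\cL(\cR))$ for every partition $\cR$, and $\Phi$ preserves intersections of subspaces. By Theorem~\ref{P-LPplus} (equality of the dimensions $|\cP|$ and $|\widehat\cP|$), reflexivity of $\cP$ and $\cQ$ is equivalent to $\cL^+(\cP)=\cL(\widehat\cP)$ and $\cL^+(\cQ)=\cL(\widehat\cQ)$. Applying $\Phi$ to the lattice identity therefore gives
\[
  \cL^+(\cP\vee\cQ)=\Phi\big(\cL(\cP)\cap\cL(\cQ)\big)
       =\cL(\widehat\cP)\cap\cL(\widehat\cQ)=\cL\big(\widehat\cP\vee\widehat\cQ\big),
\]
where the last step is the lattice identity applied on $\hat G$.

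It then remains to identify the dual partition $\widehat{\cP\vee\cQ}$, which I would do by two opposite inclusions. On one side, \eqref{e-L+} applied to $\cR=\cP\vee\cQ$ gives $\cL^+(\cP\vee\cQ)\subseteq\cL(\widehat{\cP\vee\cQ})$, so the display above yields $\cL(\widehat\cP\vee\widehat\cQ)\subseteq\cL(\widehat{\cP\vee\cQ})$, i.e.\ $\widehat{\cP\vee\cQ}\leq\widehat\cP\vee\widehat\cQ$. On the other side, monotonicity of dualization (Remark~\ref{R-DualNega}(c)) applied to $\cP\leq\cP\vee\cQ$ and $\cQ\leq\cP\vee\cQ$ gives $\widehat\cP\leq\widehat{\cP\vee\cQ}$ and $\widehat\cQ\leq\widehat{\cP\vee\cQ}$; since $\widehat\cP\vee\widehat\cQ$ is the finest partition coarser than both, $\widehat\cP\vee\widehat\cQ\leq\widehat{\cP\vee\cQ}$. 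As two partitions with the same indicator-function space coincide, the two inequalities give the asserted identity $\widehat{\cP\vee\cQ}=\widehat\cP\vee\widehat\cQ$. For reflexivity I would then observe that $\widehat\cP$ and $\widehat\cQ$ are themselves reflexive (their double duals are $\widehat{\widehat{\widehat\cP}}=\widehat\cP$ and $\widehat{\widehat{\widehat\cQ}}=\widehat\cQ$, using $\widehat{\widehat\cP}=\cP$), so the identity just proven, applied on $\hat G$ to the reflexive pair $\widehat\cP,\widehat\cQ$, gives $\widehat{\widehat\cP\vee\widehat\cQ}=\widehat{\widehat\cP}\vee\widehat{\widehat\cQ}=\cP\vee\cQ$; combined with $\widehat{\cP\vee\cQ}=\widehat\cP\vee\widehat\cQ$ this reads $\widehat{\widehat{\cP\vee\cQ}}=\cP\vee\cQ$, so $\cP\vee\cQ$ is reflexive.

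The main obstacle is the identification step: the Fourier transform only shows that $\cL^+(\cP\vee\cQ)$ equals the indicator space of $\widehat\cP\vee\widehat\cQ$, not that $\widehat\cP\vee\widehat\cQ$ is the dual partition $\widehat{\cP\vee\cQ}$, since \eqref{e-L+} is in general a strict inclusion. Reflexivity is precisely what allows the general inclusion to be complemented by the monotonicity inequality into an equality. This is also where the argument for the meet breaks down, as Remark~\ref{R-DualNega}(c) then produces inclusions in the unhelpful direction, consistent with the counterexample in Example~\ref{E-MeetJoin}(a).
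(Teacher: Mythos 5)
Your proof is correct, but it takes a genuinely different route from the paper's for the hard inclusion. The paper argues entirely inside the partition lattice: it first gets $\wcP\vee\widehat{\cQ}\leq\widehat{\cP\vee\cQ}$ from Remark~\ref{R-DualNega}(c) (the same easy inclusion you use), then dualizes this and uses reflexivity of $\cP$ and $\cQ$ to obtain $\cP\vee\cQ\leq\widehat{\widehat{\cP\vee\cQ}}$, which combined with the inequality $\widehat{\widehat{\cP\vee\cQ}}\leq\cP\vee\cQ$ from Theorem~\ref{P-LPplus} proves reflexivity of the join \emph{first}; the reverse inclusion $\widehat{\cP\vee\cQ}\leq\wcP\vee\widehat{\cQ}$ is then obtained by running the same order-theoretic argument on the reflexive pair $(\wcP,\widehat{\cQ})$ in $\hat{G}$ and dualizing once more. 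You replace all of this by linear algebra: the lattice identity $\cL(\cP\vee\cQ)=\cL(\cP)\cap\cL(\cQ)$ (nowhere stated in the paper), preservation of intersections under the Fourier isomorphism, and the reformulation of reflexivity as $\cL^+(\cP)=\cL(\wcP)$ give $\cL^+(\cP\vee\cQ)=\cL(\wcP\vee\widehat{\cQ})$, after which \eqref{e-L+} delivers the hard inclusion; reflexivity of the join comes \emph{last}, by applying the proven identity to the dual pair on $\hat{G}$. Both arguments are sound and of comparable length. The paper's proof stays within the machinery it has already built (the refinement order and Theorem~\ref{P-LPplus}); yours exposes the underlying function-space mechanism and thereby explains structurally why the join cooperates --- intersections of the spaces $\cL(\cdot)$ are respected by the Fourier isomorphism --- whereas the meet does not, since $\cL(\cP)+\cL(\cQ)$ is in general a proper subspace of $\cL(\cP\wedge\cQ)$, consistent with Example~\ref{E-MeetJoin}(a).
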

%%%%%%%%%%%%%%%%%%%%%%%%%%%%%
\begin{proof}
We have $\wwcP=\cP$ and $\widehat{\phantom{\big|}\hspace*{.5em}}\hspace*{-.9em}\widehat{\cQ}=\cQ$.
Moreover, by definition of the join, $\cP\leq \cP\vee\cQ$, and thus $\wcP\leq\widehat{\cP\vee\cQ}$ due to
Remark~\ref{R-DualNega}(c).
Thus, again by definition of the join, $\wcP\vee\widehat{\cQ}\leq\widehat{\cP\vee\cQ}$.
Taking duals we obtain
$\cP=\widehat{\phantom{\big|}\hspace*{.5em}}\hspace*{-.9em}\wcP\leq
  \widehat{\phantom{\big|}\hspace*{2.3em}}\hspace*{-2.7em}\widehat{\cP\vee\cQ}$ and thus
%$\cP=\wwcP\leq\widehat{\widehat{\cP\wedge\cQ}}$ and thus
$\cP\vee\cQ\leq \widehat{\phantom{\big|}\hspace*{2.3em}}\hspace*{-2.7em}\widehat{\cP\vee\cQ}$.
But then Theorem~\ref{P-LPplus} yields
$\cP\vee\cQ=\widehat{\phantom{\big|}\hspace*{2.3em}}\hspace*{-2.7em}\widehat{\cP\vee\cQ}$, which proves
the reflexivity of the join.
Applying this line of reasoning to $\wcP\vee\widehat{\cQ}$, we obtain
$\wcP\vee\widehat{\cQ}=\widehat{\phantom{\Big|}\hspace*{2.3em}}\hspace*{-2.7em}\widehat{\wcP\vee\widehat{\cQ}}
  \geq\widehat{\cP\vee\cQ}$.
All of this shows $\wcP\vee\widehat{\cQ}=\widehat{\cP\vee\cQ}$, as desired.
\end{proof}

The following property of the Krawtchouk matrix is known for reflexive partitions, in which case
it simply states that the matrix is orthogonal, see~\cite[p.~12]{Del73}.

%%%%%%%%%%%%%%%%%%%%%%%%%%%%%%%%%%%
\begin{prop}\label{P-KKtilde}
Let $\cP=P_1\mmid P_2\mmid\ldots\mmid P_M,\,\wcP=P'_1\mmid P'_2\mmid\ldots\mmid P'_L$, and
$\wwcP=P''_1\mmid P''_2\mmid\ldots\mmid P''_R$.
Denote the Krawtchouk matrices of $(\cP,\wcP)$ and $(\wcP,\wwcP)$ by $K\in\C^{L\times M}$ and
$K'\in\C^{R\times L}$, respectively.
Then $(K'K)_{r,m}=|G|$ if $-P''_r\subseteq P_m$ and $(K'K)_{r,m}=0$ otherwise.
In particular, if $\cP$ is reflexive, then (after suitable ordering of the blocks) $K'K=|G|I_{M}$,
where $I_M$ is the identity matrix of size~$M$.
\end{prop}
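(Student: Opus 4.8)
The plan is to compute the entry $(K'K)_{r,m}$ directly from the definitions and to recognize it as an iterated Fourier transform. First I would spell out the two matrices with their correct index ranges. Writing $\wcP=P'_1\mmid\ldots\mmid P'_L$, the matrix $K=(K_{\ell,m})\in\C^{L\times M}$ has
\[
   K_{\ell,m}=\sum_{g\in P_m}\inner{\chi,g}=\psi_m^+(\chi)\quad\text{for any }\chi\in P'_\ell,
\]
where $\psi_m$ is the indicator function of $P_m$ (so $K_{\ell,m}$ is the constant value of $\psi_m^+$ on the block $P'_\ell$). Since $K'$ is the Krawtchouk matrix of the pair $(\wcP,\wwcP)$, with rows indexed by the blocks of the dual partition $\wwcP$ and columns by those of $\wcP$, the canonical identification $\hat{\hat G}=G$ and $\inner{g,\chi}=\inner{\chi,g}$ give
\[
   K'_{r,\ell}=\sum_{\chi\in P'_\ell}\inner{g,\chi}=\sum_{\chi\in P'_\ell}\inner{\chi,g}\quad\text{for any }g\in P''_r .
\]

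Next, fixing a representative $g\in P''_r$ and using that $\psi_m^+$ is constant on each block $P'_\ell$, I would collapse the matrix product into a single sum over all of $\hat G$:
\[
   (K'K)_{r,m}=\sum_{\ell=1}^L K'_{r,\ell}K_{\ell,m}
             =\sum_{\ell=1}^L\sum_{\chi\in P'_\ell}\inner{\chi,g}\,\psi_m^+(\chi)
             =\sum_{\chi\in\hat G}\inner{\chi,g}\,\psi_m^+(\chi)=\psi_m^{++}(g),
\]
where the last equality is the definition of the Fourier transform of $\psi_m^+$ evaluated at $g\in\hat{\hat G}=G$. The Fourier inversion formula~\eqref{e-FouInv} then yields $\psi_m^{++}(g)=|G|\,\psi_m(-g)$, which equals $|G|$ when $-g\in P_m$ and $0$ otherwise. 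Since the entry $(K'K)_{r,m}$ does not depend on the chosen $g$ (this independence is exactly what makes $K'$ well defined), the function $g\mapsto\psi_m(-g)$ is constant on $P''_r$; hence either $-g\in P_m$ for all $g\in P''_r$, giving $-P''_r\subseteq P_m$ and the value $|G|$, or $-g\notin P_m$ for all of them, giving $0$. This is precisely the asserted formula.

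For the reflexive case I would first record the auxiliary fact that reflexivity forces $\cP$ to be symmetric, $-\cP=\cP$. Indeed $|{-\cP}|=|\cP|$, and by Remark~\ref{R-DualNega}(b) we have $\widehat{-\cP}=\wcP$, so $|\widehat{-\cP}|=|\wcP|=|\cP|=|{-\cP}|$; Theorem~\ref{P-LPplus} then shows $-\cP$ is reflexive, whence $\widehat{\widehat{-\cP}}=-\cP$, while on the other hand $\widehat{\widehat{-\cP}}=\wwcP=\cP$, so $-\cP=\cP$. Combined with $\wwcP=\cP$, the blocks of $\wwcP$ are exactly the sets $-P_1,\ldots,-P_M$; choosing the ordering $P''_r:=-P_r$ turns the condition $-P''_r\subseteq P_m$ into $P_r\subseteq P_m$, i.e.\ $r=m$, so that $(K'K)_{r,m}=|G|\,\delta_{r,m}$ and $K'K=|G|I_M$. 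The computational core is routine once Fourier inversion is applied; the step needing the most care is this last one, namely recognizing that reflexivity yields the symmetry $-\cP=\cP$ (so that $K'K$ is a priori only $|G|$ times a permutation matrix) and selecting the block ordering that absorbs this permutation to produce the identity.
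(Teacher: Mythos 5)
Your proof is correct and follows essentially the same route as the paper: fix $g\in P''_r$, collapse the double sum over the blocks of $\wcP$ into a single sum over all of $\hat G$, and evaluate --- you package the last step as Fourier inversion $\psi_m^{++}(g)=|G|\psi_m(-g)$ via~\eqref{e-FouInv}, whereas the paper swaps the order of summation and applies the orthogonality relations~\eqref{e-GChar} directly, a purely cosmetic difference since \eqref{e-FouInv} is itself a consequence of~\eqref{e-GChar}. Your two added justifications --- that constancy of the entry on $P''_r$ upgrades ``$-g\in P_m$'' to ``$-P''_r\subseteq P_m$'', and that reflexivity forces $-\cP=\cP$ (obtainable even more quickly from Remark~\ref{R-DualNega}(b) applied to $\wcP$, which gives $\cP=\wwcP=-\wwcP=-\cP$) so that $K'K$ is a permutation of $|G|I_M$ absorbed by reordering the blocks --- are correct details that the paper leaves implicit.
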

%%%%%%%%%%%%%%%%%%%%%%%%%%%%%%%%%%%
\begin{proof}
Recall that we identify $\hat{\hat{G}}$ with~$G$.
Fix $g\in P''_r$.
Then
\[
  (K'K)_{r,m}=\sum_{\ell=1}^L\sum_{\chi\in P'_\ell}\inner{g,\chi}
                              \sum_{h\in P_m}\inner{\chi,h}     %\quad \text{ where }\chi'\in P'_\ell\\
              =\sum_{\ell=1}^L\sum_{\chi\in P'_\ell}\sum_{h\in P_m}\inner{g+h,\chi}
              =\sum_{h\in P_m}\sum_{\chi\in\hat{G}}\inner{g+h,\chi}.
\]
With the aid of~\eqref{e-GChar} we conclude that $(K'K)_{r,m}=|G|$ if $-g\in P_m$ and~$(K'K)_{r,m}=0$ otherwise.
\end{proof}

%%%%%%%%%%%%%%%%%%%%%%%
\section{Induced Partitions}\label{SS-IndPart}
%%%%%%%%%%%%%%%%%%%%%%%
In this section we turn to cartesian powers of groups.
We present two specific constructions of partitions and the resulting MacWilliams identities.
They cover many of the known MacWilliams in coding theory.

In the following we mostly write partitions in the form $(Q_\ell)_{\ell\in\cL}$, where~$\cL$ is a
suitable index set, rather than $Q_1\mmid\ldots\mmid Q_L$, due to a lack of a natural ordering of the blocks.

The two types of induced partitions defined next have been considered before in \cite[Sec.~7]{ZiEr09} and~\cite[Sec.~4.10]{Cam98}
for the product partition, and \cite[Sec.~2.5]{Del73} and \cite[Sec.~4.11]{Cam98} for the
symmetrized partition.

%%%%%%%%%%%%%%%%%%%%%%
\begin{defi}\label{D-IndProdPart}
Let $\cG=G_1\times\cdots\times G_n$, where $G_1,\ldots,G_n$ are finite abelian groups and let
$\cP_i=P_{i,1}\mmid P_{i,2}\mmid\ldots\mmid P_{i,M_i}$ be partitions of~$G_i$ for $i=1,\ldots,n$.
On~$\cG$  we define the \emph{product partition} as
$\cP_1\times\ldots\times\cP_n:=(P_{1,m_1}\times\ldots\times P_{n,m_n})_{(m_1,\ldots,m_n)\in[M_1]\times\cdots\times[M_n]}$, where $[M_i]:=\{1,\ldots,M_i\}$.
If $G_i=G$ and $\cP_i=\cP$ for all~$i$, then $\cP\times\ldots\times\cP$ is written as~$\cP^n$ and called the
\emph{product partition of~$G^n$ induced by~$\cP$}.
\end{defi}
%%%%%%%%%%%%%%%%%%%%%

%%%%%%%%%%%%%%%%%%%%%%%%%%
\begin{defi}\label{D-IndSymmPart}
Let $\cG=G^n$ and $\cP=P_1\mmid P_2\mmid\ldots\mmid P_M$ be a partition of the finite abelian group~$G$.
For $g=(g_1,\ldots,g_n)\in \cG$ define
\[
    \comp_{\cP}(g)=(s_1,\ldots,s_M), \text{ where }
   s_m=|\{t\mid g_t\in P_m\}|.
\]
We call $\comp_{\cP}(g)$ the \emph{composition vector} of~$g$ with respect to the partition~$\cP$.
It is contained in the set $\cS:=\{(s_1,\ldots,s_M)\in\N_0^M\mid \sum_{m=1}^M s_m=n\}$.
The \emph{induced symmetrized partition} of~$G^n$ is defined as
\begin{equation}\label{e-CompQ}
  \cPnsym=(Q_s)_{s\in\cS},\text{ where }Q_s=\{g\in\cG\mid \comp_{\cP}(g)=s\}.
\end{equation}
\end{defi}
%%%%%%%%%%%%%%%%%%%%%%%%%%%%%%%
Note that each block of the product partition~$\cP_1\times\ldots\times\cP_n$ consists of all $g\in G_1\times\cdots\times G_n$
for which each entry~$g_i$ is contained in a prescribed block of~$\cP_i$.
Obviously,~$\cP_1\times\ldots\times\cP_n$ has $M_1\cdot\ldots\cdot M_n$ blocks.
In particular,~$\cP^n$ has~$M^n$ blocks.
On the other hand, the blocks of the symmetrized partition~$\cPnsym$ collect all $g\in G^n$
that have the same number of entries (disregarding position) in a given block of~$\cP$.
The index set~$\cS$ is the set of weak $M$-partitions of~$n$, and
$\cP_{\text{sym}}^n$ consists of $|\cS|=\binom{n+M-1}{M-1}$, see~\cite[p.~15]{Sta97}.

As a simple example, if $\cP=\{0\}\mmidbig G\backslash\{0\}$, then~$\cP^n$ partitions the elements $(g_1,\ldots,g_n)\in G^n$
according to their support, whereas $\cPnsym$ classifies them with respect to their Hamming weight.

We show next that dualization commutes with the above constructions
under the rather weak condition that $\{0\}$ is a block of the given partitions.
Let us first consider an (extreme) example illustrating the necessity of this condition.
Suppose~$\cP$ consists of the single block~$G$.
Then both~$\cP^n$ and~$\cPnsym$ consist of the single block~$G^n$ and
$\widehat{\cP^n}=\widehat{\cPnsym}=\{\veps\}\mmidbig\hat{G}^n\backslash\{\veps\}$, see
Remark~\ref{R-DualNega}(a).
In particular, $\wcP=\{\veps\}\mmidbig\hat{G}\backslash\{\veps\}$.
Therefore, $\wcP^{\,n}$ consists of all $n$-fold product sets with factors $\{\veps\}$ and $\hat{G}\backslash\{\veps\}$,
and $\widehat{\cP}_{\text{sym}}^{\,n}$ is the Hamming partition on~$\hat{G}^n$.
%~$Q_0\mmidbig Q_1\mmidbig\ldots\mmidbig Q_n$, i.e.,
%$Q_\ell=\{(\chi_1,\ldots,\chi_n)\in\hat{G}^n\mid \chi_j\neq\veps\text{ for exactly $\ell$ values of }j\}$.
All of this shows $\wcP^{\,n}\lneq\widehat{\cP^n}$ and
$\widehat{\cP}_{\text{sym}}^{\,n}\lneq\widehat{\cPnsym}$.

The second statement of the following theorem appears also in \cite[Thm.~4]{ZiEr09}  and \cite[Thm.~4.86]{Cam98}.
%%%%%%%%%%%%%%%%%%%%%%%%%%%%
\begin{theo}\label{T-ProdPartDual}
Let~$\cP_i$ be a partition of~$G_i$ for $i=1,\ldots,n$ such that~$\{0\}$ is a block of~$\cP_i$ for all~$i$.
Let ~$\cQ=\cP_1\times\cdots\times\cP_n$.
Then $\widehat{\cQ}=\widehat{\cP}_1\times\cdots\times\wcP_n$.
As a consequence, if~$\cP_i$ is reflexive for all~$i$, then so is $\cP_1\times\cdots\times\cP_n$.
\end{theo}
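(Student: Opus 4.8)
The plan is to compute the dual partition $\widehat{\cQ}$ directly from its defining equivalence relation~\eqref{e-simPhat} and show it coincides with the product $\widehat{\cP}_1\times\cdots\times\widehat{\cP}_n$. The key observation is that the block sums appearing in~\eqref{e-simPhat} factor over the product structure. Indeed, for a block $P_{1,m_1}\times\cdots\times P_{n,m_n}$ of~$\cQ$ and a character $\chi=(\chi_1,\ldots,\chi_n)\in\hat{G}_1\times\cdots\times\hat{G}_n$, the product-character formula~\eqref{e-prodchar} gives
\[
   \sum_{g\in P_{1,m_1}\times\cdots\times P_{n,m_n}}\inner{\chi,g}
   =\prod_{i=1}^n\Big(\sum_{g_i\in P_{i,m_i}}\inner{\chi_i,g_i}\Big).
\]
So the block sum for~$\cQ$ is the product of the individual block sums for the~$\cP_i$. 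I would introduce the notation $S_i(\chi_i,m_i):=\sum_{g_i\in P_{i,m_i}}\inner{\chi_i,g_i}$ for the $i$-th factor, so that the $\cQ$-block sum at index $(m_1,\ldots,m_n)$ equals $\prod_i S_i(\chi_i,m_i)$.

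First I would prove the ``easy'' inclusion $\widehat{\cP}_1\times\cdots\times\widehat{\cP}_n\leq\widehat{\cQ}$, i.e.\ that the product of the duals refines the dual of the product. If $\chi\widesim_{\widehat{\cP}_i}\chi'_i$ componentwise, then $S_i(\chi_i,m_i)=S_i(\chi'_i,m_i)$ for every $i$ and every~$m_i$ by definition of~$\widehat{\cP}_i$, whence the factored formula shows the $\cQ$-block sums agree and $\chi\widesim_{\widehat{\cQ}}\chi'$. This direction uses only the factorization and needs no hypothesis on~$\{0\}$. For the reverse refinement I would count blocks and invoke Theorem~\ref{P-LPplus}: since $\widehat{\cP}_1\times\cdots\times\widehat{\cP}_n\leq\widehat{\cQ}$ implies $|\widehat{\cQ}|\leq|\widehat{\cP}_1\times\cdots\times\widehat{\cP}_n|=\prod_i|\widehat{\cP}_i|$, it suffices to produce the reverse inequality $|\widehat{\cQ}|\geq\prod_i|\widehat{\cP}_i|$, for then the two partitions have equal block counts and the refinement forces equality.

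The main obstacle is exactly this lower bound on $|\widehat{\cQ}|$, and this is where the hypothesis that $\{0\}$ is a block of each~$\cP_i$ enters; the extreme example preceding the theorem shows it is genuinely needed. To get the bound I would exhibit, for each tuple of dual blocks, a character whose $\cQ$-block-sum data separates it from the others. The clean way is to show that the map sending $\chi=(\chi_1,\ldots,\chi_n)$ to the full vector of $\cQ$-block sums already determines each componentwise $\widehat{\cP}_i$-class. Fixing all components but the $i$-th at characters whose individual block sums are nonzero on the singleton block $\{0\}$ — here $S_j(\chi_j,\cdot)$ for the block $\{0\}$ equals $\inner{\chi_j,0}=1\neq0$, which is precisely what the $\{0\}$-block hypothesis buys — one can isolate the factor $S_i(\chi_i,m_i)$ from the product $\prod_j S_j(\chi_j,m_j)$. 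Concretely, choosing each $\chi_j=\veps$ for $j\neq i$ makes every factor $S_j(\veps,m_j)=|P_{j,m_j}|$ a fixed nonzero constant, so the $\cQ$-block sum at $(m_1,\ldots,m_n)$ is a nonzero scalar multiple of $S_i(\chi_i,m_i)$; hence if $\chi$ and $\chi'$ (differing only in the $i$-th slot) are $\widehat{\cQ}$-equivalent, then $S_i(\chi_i,m_i)=S_i(\chi'_i,m_i)$ for all~$m_i$, i.e.\ $\chi_i\widesim_{\widehat{\cP}_i}\chi'_i$. Running this over all~$i$ shows $\widehat{\cQ}\leq\widehat{\cP}_1\times\cdots\times\widehat{\cP}_n$, completing the reverse refinement and hence equality. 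The reflexivity consequence is then immediate: if each $\cP_i$ is reflexive, then applying the identity twice gives $\widehat{\phantom{\big|}\hspace*{.5em}}\hspace*{-.9em}\widehat{\cQ}=\wwcP_1\times\cdots\times\wwcP_n=\cP_1\times\cdots\times\cP_n=\cQ$.
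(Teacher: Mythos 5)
Your factorization identity and the inclusion $\wcP_1\times\cdots\times\wcP_n\leq\widehat{\cQ}$ agree with the paper's proof and are correct. The gap is in the converse. In your ``concrete'' isolation step you fix the \emph{characters} $\chi_j=\veps$ for $j\neq i$, so what you actually prove is: if two characters having $\veps$ in every slot except the $i$-th are $\widehat{\cQ}$-equivalent, then their $i$-th components are $\wcP_i$-equivalent. This is too weak for both of the conclusions you draw from it. It does not prove the refinement $\widehat{\cQ}\leq\wcP_1\times\cdots\times\wcP_n$, since that requires handling an \emph{arbitrary} pair $\chi\widesim_{\widehat{\cQ}}\chi'$, whose components in the slots $j\neq i$ are arbitrary characters (and need not even coincide with each other); and it does not prove the lower bound $|\widehat{\cQ}|\geq\prod_i|\wcP_i|$ needed for your counting argument via Theorem~\ref{P-LPplus}, because the special characters it controls can witness at best on the order of $\sum_i|\wcP_i|$ pairwise distinct $\widehat{\cQ}$-classes, far short of $\prod_i|\wcP_i|$. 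A decisive symptom: your concrete step never actually uses the $\{0\}$-block hypothesis (the factors $S_j(\veps,m_j)=|P_{j,m_j}|$ are nonzero for any nonempty blocks), so the same argument would run verbatim in the example preceding the theorem in the paper, where $\cP_i$ is the one-block partition and the conclusion $\widehat{\cQ}=\wcP_1\times\cdots\times\wcP_n$ is false; hence no hypothesis-free argument of this shape can close the proof.

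The repair is exactly the paper's proof: choose the \emph{block indices}, not the characters. Given an arbitrary pair $\chi\widesim_{\widehat{\cQ}}\chi'$ and a fixed~$i$, compare the block sums at all index tuples $(m_1,\ldots,m_n)$ with $P_{j,m_j}=\{0\}$ for every $j\neq i$. Since
\[
  \sum_{g\in\{0\}}\inner{\chi_j,g}=\inner{\chi_j,0}=1\qquad\text{for \emph{every} }\chi_j\in\hat{G_j},
\]
the factored block sum collapses to $S_i(\chi_i,m_i)$ for an arbitrary character~$\chi$, and equality of the $\widehat{\cQ}$-data then yields $S_i(\chi_i,m_i)=S_i(\chi'_i,m_i)$ for all~$m_i$, i.e.\ $\chi_i\widesim_{\wcP_i}\chi'_i$. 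Running over all~$i$ gives $\widehat{\cQ}\leq\wcP_1\times\cdots\times\wcP_n$ directly for arbitrary pairs, which makes your counting detour through Theorem~\ref{P-LPplus} unnecessary; this is also the place where the $\{0\}$-hypothesis genuinely enters. Your final reflexivity deduction is fine as stated; note only that applying the identity a second time requires $\{\veps\}$ to be a block of each $\wcP_i$, which holds by Remark~\ref{R-DualNega}(a).
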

%%%%%%%%%%%%%%%%%%%%%%%%%%%%
\begin{proof}
Let $\cP_i$ be as in Definition~\ref{D-IndProdPart} and $\wcP_i=Q_{i,1}\mmid Q_{i,2}\mmid\ldots\mmid Q_{i,L_i}$.
With the aid of~\eqref{e-prodchar} we compute for all $\chi\in\hat{\cG}$
\begin{equation}\label{e-chiprod}
  \sum_{g\in P_{1,m_1}\times\ldots\times P_{n,m_n}}\hspace*{-1em}\inner{\chi,g}
  =\hspace*{-.2em}\sum_{g\in P_{1,m_1}\times\ldots\times P_{n,m_n}}\prod_{i=1}^n\inner{\chi_i,g_i}
  =\prod_{i=1}^n\sum_{g\in P_{i,m_i}}\hspace*{-.6em}\inner{\chi_i,g}.
\end{equation}
This shows immediately that if $\chi\widesim[2]_{\wcP_1\times\cdots\times\wcP_n}\chi'$ then $\chi\widesim[2]_{\widehat{\cQ}}\chi'$.
Considering~\eqref{e-chiprod} for all product partition blocks where $P_{i,m_i}=\{0\}$ for
all but one~$i$, establishes the converse.
\end{proof}

The analogous result is true for the induced symmetrized partition as well.
Again, for reflexive partitions the statement appears already in \cite[Sec.~2.5]{Del73} and \cite[Thm.~4.97]{Cam98} in
the terminology of abelian classes of association schemes.

%%%%%%%%%%%%%%%%%%%%%%%%%
\begin{theo}\label{T-SymmPartDual}
Let~$\cP$ be a partition of~$G$ such that~$\{0\}$ is a block of~$\cP$.
Then $\widehat{\cPnsym}=\widehat{\cP}_{\text{sym}}^{\,n}$.
As a consequence, if~$\cP$ is reflexive then so is $\widehat{\cPnsym}$.
\end{theo}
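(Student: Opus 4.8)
The plan is to prove the partition identity by showing that the two equivalence relations on $\hat{\cG}=\hat{G}^{\,n}$ coincide, and then to deduce reflexivity from the block-counting criterion. By Definition~\ref{D-FPart}, two characters satisfy $\chi\widesim_{\widehat{\cPnsym}}\chi'$ exactly when $\sum_{g\in Q_s}\inner{\chi,g}=\sum_{g\in Q_s}\inner{\chi',g}$ for every $s\in\cS$, whereas $\chi\widesim_{\widehat{\cP}_{\text{sym}}^{\,n}}\chi'$ exactly when $\comp_{\wcP}(\chi)=\comp_{\wcP}(\chi')$. So everything reduces to extracting from the family of block sums precisely the multiset of $\wcP$-blocks occupied by the coordinates of~$\chi$. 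Relabel so that $P_1=\{0\}$, write $\wcP=P'_1\mmid\ldots\mmid P'_L$ and $\psi_m^+(\eta)=\sum_{g\in P_m}\inner{\eta,g}$; by~\eqref{e-FourEquivb} the block of~$\wcP$ containing $\eta\in\hat{G}$ is determined by, and determines, the vector $(\psi_1^+(\eta),\ldots,\psi_M^+(\eta))$.

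First I would compute the block sums. Decomposing $Q_s$ into the product sets $P_{c(1)}\times\cdots\times P_{c(n)}$ ranging over all maps $c\colon\{1,\ldots,n\}\to\{1,\ldots,M\}$ of type~$s$ (meaning $|c^{-1}(m)|=s_m$ for all~$m$) and applying the product-character formula~\eqref{e-prodchar} exactly as in the proof of Theorem~\ref{T-ProdPartDual}, I obtain
\[
  \sum_{g\in Q_s}\inner{\chi,g}=\sum_{c:\,\mathrm{type}(c)=s}\ \prod_{i=1}^n\psi^+_{c(i)}(\chi_i).
\]
The decisive step is to recognize this family as the coefficient array of a single polynomial: with commuting indeterminates $z_1,\ldots,z_M$ and linear forms $L_i(z)=\sum_{m=1}^M\psi_m^+(\chi_i)z_m$, expanding the product gives
\[
  F_\chi(z):=\prod_{i=1}^n L_i(z)=\sum_{s\in\cS}\Big({\textstyle\sum_{g\in Q_s}\inner{\chi,g}}\Big)z_1^{s_1}\cdots z_M^{s_M}.
\]
Hence $\chi\widesim_{\widehat{\cPnsym}}\chi'$ is equivalent to the polynomial identity $F_\chi=F_{\chi'}$.

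The heart of the proof, and the step I expect to be the main obstacle, is recovering the multiset $\{L_1,\ldots,L_n\}$ from the product $F_\chi$; this is where the hypothesis $P_1=\{0\}$ is indispensable. Because $\psi_1^+(\eta)=\inner{\eta,0}=1$ for every character~$\eta$, each $L_i$ is a nonconstant linear form whose $z_1$-coefficient equals~$1$. Since $\C[z_1,\ldots,z_M]$ is a unique factorization domain and linear forms are irreducible, the factorization $F_\chi=\prod_i L_i$ is the factorization of $F_\chi$ into irreducibles up to order and nonzero scalars; normalizing each irreducible factor to have $z_1$-coefficient~$1$ then recovers the multiset $\{L_i\}$ exactly. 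By the characterization of $\wcP$-blocks noted above, the coefficient vector $(\psi_m^+(\chi_i))_m$ of $L_i$ is equivalent data to the block of~$\wcP$ containing~$\chi_i$, so the multiset $\{L_i\}$ carries precisely the information in $\comp_{\wcP}(\chi)$. Consequently $F_\chi=F_{\chi'}\iff\comp_{\wcP}(\chi)=\comp_{\wcP}(\chi')$, which is the asserted equality $\widehat{\cPnsym}=\widehat{\cP}_{\text{sym}}^{\,n}$.

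For the reflexivity statement I would not repeat the factorization argument but appeal to Theorem~\ref{P-LPplus}. All blocks $Q_s$, $s\in\cS$, are nonempty (each $P_m$ is), so $\cPnsym$ has $\binom{n+M-1}{M-1}$ blocks, and likewise $\widehat{\cP}_{\text{sym}}^{\,n}$ has $\binom{n+L-1}{L-1}$ blocks, where $M=|\cP|$ and $L=|\wcP|$. If~$\cP$ is reflexive then $M=L$ by Theorem~\ref{P-LPplus}, so $|\cPnsym|=\binom{n+M-1}{M-1}=\binom{n+L-1}{L-1}=|\widehat{\cPnsym}|$ using the identity just proved. Theorem~\ref{P-LPplus} then forces $\cPnsym$ to be reflexive, and since the dual of a reflexive partition is again reflexive (if $\wwcP=\cP$, then $\widehat{\widehat{\wcP}}=\wcP$), the partition $\widehat{\cPnsym}$ is reflexive as well, as claimed.
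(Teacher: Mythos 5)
Your proof is correct and follows essentially the same route as the paper's: there, too, the block sums $\sum_{g\in Q_s}\inner{\chi,g}$ are packaged as the coefficients of a product of linear polynomials (the generating polynomial $\prod_{i=1}^n\bigl(1+\sum_{j=1}^M T_jA_{i,j}\bigr)$ of the elementary multi-symmetric polynomials), and the permutation $\tau$ is extracted from unique factorization into linear factors in $\C[T_1,\ldots,T_M]$, with the block $\{0\}$ supplying the normalization (constant term~$1$ there, $z_1$-coefficient~$1$ in your homogeneous version). Your remaining deviations --- the disjoint decomposition over maps of type~$s$ instead of the $S_n$-orbit with multiplicity $f_s$, the single biconditional in place of the paper's two separate inclusions, and the block-counting derivation of reflexivity via Theorem~\ref{P-LPplus} --- are only presentational.
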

%%%%%%%%%%%%%%%%%%%%%%%%%

We postpone the quite technical proof  to Appendix~\ref{S-App}.
It reveals that establishing $\widehat{\cP}_{\text{sym}}^{\,n}\leq\widehat{\cPnsym}$ is basic and straightforward,
and the result for reflexive partitions follows immediately.
The general case, and thus the converse $\widehat{\cPnsym}\leq \widehat{\cP}_{\text{sym}}^{\,n}$, is significantly
more technical.
It makes use of the fact that the Krawtchouk coefficients appear as evaluations
of the elementary multi-symmetric polynomials.
We are not aware of a simpler, or any, proof in the literature.

\medskip

Now we derive the general MacWilliams identities for the induced partitions of~$G^n$
as defined in Definitions~\ref{D-IndProdPart} and~\ref{D-IndSymmPart}.
We restrict ourselves to the reflexive case, which simplifies notation as it allows us to define
the partition enumerators of~$\cC\leq\cG$ and~$\cC^\perp\leq\hat{\cG}$ in the same polynomial ring.
The general case can easily be dealt with, but as in Theorem~\ref{T-MacWPart} it requires the use of two
distinct polynomial rings and the careful choice of the partition and its dual.
For the reflexive case, analogous identities in the language of abelian
association schemes appear in~\cite[Thm.~5.46, Thm.~5.51]{Cam98}.

For the first result we assume the situation of Definition~\ref{D-IndProdPart} and fix the following notation.
Write $\cQ=\cP_1\times\ldots\times\cP_n$.
Let the dual partitions be $\wcP_i=Q_{i,1}\mid\ldots\mid Q_{i,M_i}$, and thus the blocks of
$\widehat{\cQ}$ are given by $Q_\ell=Q_{1,l_1}\times\ldots\times Q_{n,l_n}$
for all index combinations $\ell=(l_1,\ldots,l_n)$.
For $g\in G_i$ we denote by $m(g)\in\{1,\ldots,M_i\}$ the index of the block $P_{i,m(g)}$ containing~$g$,
and similarly for $\chi\in\hat{G_i}$ let $\chi\in Q_{i,m(\chi)}$.

%%%%%%%%%%%%%%%%%%%%%%%%%%%%%%%%%%%%%%
\begin{theo}\label{T-weprodpart}
Let all partitions~$\cP_i$ be reflexive, thus~$\cQ$ is reflexive as well,
and let $K^{(i)}\in\C^{M_i\times M_i}$ be the Krawtchouk matrix of the pair $(\wcP_i,\cP_i)$.
For a code $\cC\leq\cG$ the polynomial
\[
  \pe_{\cQ,\,\cC}:=\sum_{g\in\cC}\prod_{i=1}^nX_{i,m(g_i)}\in W:=\C[X_{i,m}\mid i=1,\ldots,n,\,m=1,\ldots,M_i]
\]
is called the product partition enumerator of~$\cC$.
The coefficient of $\prod_{i=1}^nX_{i,m_i}$ equals the cardinality of $\cC\cap(P_{1,m_1}\times\ldots\times P_{n,m_n})$.
Similarly, the product partition enumerator of~$\cC^\perp\leq\hat{\cG}$ with respect to~$\widehat{\cQ}$ is
$\pe_{\widehat{\cQ},\,\cC^\perp}:=\sum_{\chi\in\cC^\perp}\prod_{i=1}^nX_{i,m(\chi_i)}\in W$.
The enumerators satisfy the MacWilliams identity
\begin{equation}\label{e-prodMacW}
  \pe_{\widehat{\cQ},\,\cC^{\perp}}=\frac{1}{|\cC|}\cM'(\pe_{\cQ,\,\cC}),
\end{equation}
where the MacWilliams transformation $\cM':\,W\longrightarrow W$ is the algebra
homomorphism satisfying
$\cM'(X_{i,m})=\sum_{\ell=1}^{M_i} K^{(i)}_{m,\ell}X_{i,\ell}$ for $i=1,\ldots,n$ and $m=1,\ldots,M_i$.
\end{theo}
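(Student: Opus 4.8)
The plan is to reduce the product MacWilliams identity to the single-factor identity of Theorem~\ref{T-MacWPart}, exploiting the multiplicative structure of both the Krawtchouk matrices and the character pairings that was already isolated in equation~\eqref{e-chiprod}. First I would observe that by Theorem~\ref{T-ProdPartDual}, since each~$\cP_i$ is reflexive (in particular $\{0\}$ is a block of each~$\cP_i$, as reflexivity forces this via Remark~\ref{R-DualNega}(a) applied suitably), the product~$\cQ=\cP_1\times\cdots\times\cP_n$ is reflexive and $\widehat{\cQ}=\wcP_1\times\cdots\times\wcP_n$. Thus both enumerators genuinely live in the same ring~$W$, and the Krawtchouk matrix of the product pair $(\widehat{\cQ},\cQ)$ factors, via~\eqref{e-chiprod}, as a Kronecker-type product of the single-factor matrices~$K^{(i)}$.

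The key computation mirrors the proof of Theorem~\ref{T-MacWPart}. I would define the map $\bar\psi:\hat{\cG}\to W,\ \chi\mapsto\prod_{i=1}^n X_{i,m(\chi_i)}$, so that $\pe_{\widehat{\cQ},\,\cC^\perp}=\sum_{\chi\in\cC^\perp}\bar\psi(\chi)$. The crucial identity to establish is that for any $g=(g_1,\ldots,g_n)\in\cG$,
\[
   \bar\psi^+(g)=\sum_{\chi\in\hat{\cG}}\inner{\chi,g}\bar\psi(\chi)
   =\prod_{i=1}^n\Big(\sum_{\chi_i\in\hat{G_i}}\inner{\chi_i,g_i}X_{i,m(\chi_i)}\Big)
   =\prod_{i=1}^n\cM'(X_{i,m(g_i)}),
\]
where the middle equality uses~\eqref{e-prodchar} to split the sum over $\hat{\cG}=\hat{G_1}\times\cdots\times\hat{G_n}$ into a product, and the last equality uses the definition of the Krawtchouk coefficients $K^{(i)}_{m,\ell}=\sum_{\chi\in Q_{i,\ell}}\inner{\chi,g_i}$ for $g_i\in P_{i,m}$, exactly as in the single-factor case. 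Since $\cM'$ is an algebra homomorphism, the right-hand side equals $\cM'\!\big(\prod_{i=1}^n X_{i,m(g_i)}\big)$, so that $\bar\psi^+(g)=\cM'(\psi(g))$ where $\psi(g):=\prod_i X_{i,m(g_i)}$. Applying the Poisson summation formula~\eqref{e-Poisson} to~$\bar\psi$ over the code~$\cC$ then yields
\[
   \cM'(\pe_{\cQ,\,\cC})=\sum_{g\in\cC}\bar\psi^+(g)
   =|\cC|\sum_{\chi\in\cC^\perp}\bar\psi(\chi)=|\cC|\,\pe_{\widehat{\cQ},\,\cC^\perp},
\]
which is~\eqref{e-prodMacW} after dividing by~$|\cC|$.

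I expect no genuine obstacle here: the argument is a direct tensorization of Theorem~\ref{T-MacWPart}, and the only point requiring care is the bookkeeping of indices, namely verifying that the factored pairing~\eqref{e-prodchar} correctly matches the factored action of the homomorphism~$\cM'$ on the monomials. The mildly delicate step is confirming that $\cM'(\prod_i X_{i,m(g_i)})=\prod_i\cM'(X_{i,m(g_i)})$ legitimately distributes over the product — this is immediate because~$\cM'$ is defined as an algebra homomorphism, but it is the hinge that lets the single-variable Krawtchouk relation propagate to the full product enumerator. The claim about the coefficient of $\prod_i X_{i,m_i}$ counting $|\cC\cap(P_{1,m_1}\times\cdots\times P_{n,m_n})|$ is immediate from the definition of~$\pe_{\cQ,\,\cC}$ by grouping codewords according to which product block they occupy.
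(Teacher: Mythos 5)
Your proposal is correct and follows essentially the same route as the paper's proof: define the monomial-valued map $\bar{\psi}$ on $\hat{\cG}$, factor $\bar{\psi}^+(g)$ into a product of single-factor Krawtchouk sums via~\eqref{e-prodchar}, recognize this as $\cM'$ applied to the monomial of~$g$, and conclude with the Poisson summation formula~\eqref{e-Poisson}. Your preliminary remark that reflexivity of each~$\cP_i$ forces $\{0\}$ to be a block (so that Theorem~\ref{T-ProdPartDual} applies and $\widehat{\cQ}=\wcP_1\times\cdots\times\wcP_n$) is a valid observation that the paper leaves implicit, but it does not change the argument.
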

%%%%%%%%%%%%%%%%%%%%%%%%%%%%%%%%%%%%

\begin{proof}
Define the maps $\psi:\cG\rightarrow W,\ g\mapsto \prod_{i=1}^n X_{i,m(g_i)}$ and
$\bar{\psi}:\hat{\cG}\rightarrow W,\ \chi\mapsto \prod_{i=1}^n X_{i,m(\chi_i)}$.
Then $\pe_{\cQ,\cC}=\sum_{g\in\cC}\psi(g)$ and similarly
$\pe_{\widehat{\cQ},\cC^{\perp}}=\sum_{\chi\in\cC^{\perp}}\bar{\psi}(\chi)$.
By the definition of the Krawtchouk coefficients,  $K^{(i)}_{m,l}=\sum_{\chi\in Q_{i,l}}\inner{\chi,g}$ for any $g\in P_{i,m}$.
Using~\eqref{e-prodchar} we obtain for $g\in P_{1,m_1}\times\ldots\times P_{n,m_n}$
\begin{align*}
  \bar{\psi}^+(g)&=\sum_{\chi\in\hat{\cG}}\inner{\chi,g}\bar{\psi}(\chi)
    =\sum_{\ell=(\ell_1,\ldots,\ell_n)}\sum_{\chi\in Q_\ell}\inner{\chi,g}\prod_{i=1}^n X_{i,\ell_i}
    =\sum_{\ell=(\ell_1,\ldots,\ell_n)}\sum_{\chi\in Q_\ell}\prod_{i=1}^n \inner{\chi_i,g_i}X_{i,\ell_i}\\
    &=\prod_{i=1}^n \sum_{\ell=1}^{M_i}\sum_{\chi_i\in Q_{i,\ell}}\inner{\chi_i,g_i}X_{i,\ell}
     =\prod_{i=1}^n\sum_{\ell=1}^{M_i} K^{(i)}_{m_i,\ell}X_{i,\ell}=\cM'\big(\prod_{i=1}^n X_{i,m_i}\big).
\end{align*}
Wit the aid of the Poisson summation formula~\eqref{e-Poisson} we derive
\[
  \cM'(\pe_{\cQ,\cC})=\cM'\big(\sum_{g\in\cC}\prod_{i=1}^n X_{i,m(g_i)}\big)
   =\sum_{g\in\cC}\bar{\psi}^+(g)=|\cC|\sum_{\chi\in\cC^{\perp}}\bar{\psi}(\chi)
   =|\cC|\pe_{\widehat{\cQ},\cC^{\perp}}.
   \qedhere
\]
\end{proof}

Notice that we may write the identity~\eqref{e-prodMacW} in the form
\begin{equation}\label{e-MacWProd}
   \pe_{\widehat{\cQ},\,\cC^{\perp}}({\bf X}_i \mid i=1,\ldots,n)=
   \frac{1}{|\cC|}\pe_{\cQ,\,\cC}(K^{(i)}{\bf X}_i \mid i=1,\ldots,n),
\end{equation}
where ${\bf X}_i=(X_{i,1},\ldots,X_{i,M_i})\T$.

\medskip
In the same way we can derive a MacWilliams identity for the induced symmetrized partitions of~$G^n$.
Again, we restrict ourselves to reflexive partitions.
As before, let $m(g)$ be the unique index such that $g\in P_{m(g)}$.

%%%%%%%%%%%%%%%%%%%%%%%%%%%%%%%%%%%%%%
\begin{theo}\label{T-wesymmpart}
Let $\cP=P_1\mid\ldots\mid P_M$ be a reflexive partition of~$G$, and let $K\in\C^{M\times M}$ be the
Krawtchouk matrix of $(\wcP,\cP)$.
For a code $\cC\leq G^n$ the symmetrized partition enumerator of~$\cC$ with respect to~$\cP$ is defined as
$\pe_{\cP^n_{\text{sym}},\,\cC}:=\sum_{g\in\cC}\prod_{t=1}^nY_{m(g_t)}$ and is contained in
the polynomial ring $V:=\C[Y_m\mid m=1,\ldots,M]$.
It is a homogeneous polynomial of degree~$n$, and the coefficients of the monomial
$\prod_{m=1}^M Y_{m}^{s_m}$  equals the cardinality
$|\{g\in\cC\mid \comp_{\cP}(g)=(s_1,\ldots,s_M)\}|$.
The enumerator satisfies the MacWilliams identity
\begin{equation}\label{e-symmMacW}
  \pe_{\wcP^n_{\text{sym}},\,\cC^{\perp}}=\frac{1}{|\cC|}\cM''(\pe_{\cP^n_{\text{sym}},\,\cC}),
\end{equation}
where the MacWilliams transformation $\cM'':\,V\longrightarrow V$ is the algebra
homomorphism given by $\cM''(Y_m)=\sum_{\ell=1}^M K_{m,\ell}Y_\ell$ for $m=1,\ldots,M$.
\end{theo}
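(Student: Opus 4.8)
The plan is to adapt the Fourier-theoretic argument of Theorem~\ref{T-weprodpart} (and of Theorem~\ref{T-MacWPart}), exploiting that reflexivity forces $|\wcP|=M$ by Theorem~\ref{P-LPplus}. I would write $\wcP=P'_1\mmid\ldots\mmid P'_M$, indexed compatibly with $\wwcP=\cP$, so that a single set of variables $Y_1,\ldots,Y_M$ serves both sides, and let $m(\chi)$ denote the index of the $\wcP$-block containing $\chi$ (just as $m(g)$ is the index of the $\cP$-block containing $g$). First I would record the two maps $\psi\colon G^n\to V,\ g\mapsto\prod_{t=1}^n Y_{m(g_t)}$ and $\bar\psi\colon\hat{G}^n\to V,\ \chi\mapsto\prod_{t=1}^n Y_{m(\chi_t)}$. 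Since $\prod_{t=1}^nY_{m(g_t)}=\prod_{m=1}^MY_m^{s_m}$ with $s_m=|\{t\mid g_t\in P_m\}|$ the $m$-th entry of $\comp_\cP(g)$, this display simultaneously yields homogeneity of degree~$n$ and the claimed coefficient interpretation, and it identifies $\pe_{\cP^n_{\text{sym}},\,\cC}=\sum_{g\in\cC}\psi(g)$ and $\pe_{\wcP^n_{\text{sym}},\,\cC^\perp}=\sum_{\chi\in\cC^\perp}\bar\psi(\chi)$.

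The heart of the matter is to establish $\bar\psi^+(g)=\cM''(\psi(g))$ for every $g\in G^n$. Using the product formula~\eqref{e-prodchar} for characters on $\hat{G}^n$ together with the fact that a sum of products over a product index set factors, I would compute
\[
  \bar\psi^+(g)=\sum_{\chi\in\hat{G}^n}\inner{\chi,g}\,\bar\psi(\chi)
  =\prod_{t=1}^n\Big(\sum_{\chi_t\in\hat{G}}\inner{\chi_t,g_t}\,Y_{m(\chi_t)}\Big)
  =\prod_{t=1}^n\Big(\sum_{\ell=1}^M\sum_{\chi_t\in P'_\ell}\inner{\chi_t,g_t}\,Y_\ell\Big).
\]
By the definition of the Krawtchouk matrix of $(\wcP,\cP)$ the innermost sum equals $K_{m(g_t),\ell}$, so each factor is $\sum_{\ell}K_{m(g_t),\ell}Y_\ell=\cM''(Y_{m(g_t)})$; because $\cM''$ is an algebra homomorphism, the $n$-fold product collapses to $\cM''\big(\prod_{t}Y_{m(g_t)}\big)=\cM''(\psi(g))$.

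Finally I would apply the Poisson summation formula~\eqref{e-Poisson} to the map $\bar\psi$ on $\hat{G}^n$ with the code $\cC^\perp$, whose bidual in $G^n$ is $\cC$ by~\eqref{e-bidual}; this gives $\sum_{g\in\cC}\bar\psi^+(g)=|\cC|\sum_{\chi\in\cC^\perp}\bar\psi(\chi)$. Combining the two steps, $\cM''(\pe_{\cP^n_{\text{sym}},\,\cC})=\sum_{g\in\cC}\bar\psi^+(g)=|\cC|\,\pe_{\wcP^n_{\text{sym}},\,\cC^\perp}$, which is exactly~\eqref{e-symmMacW}. I expect no real obstacle: unlike the partition-duality statement of Theorem~\ref{T-SymmPartDual}, whose converse is genuinely delicate, this enumerator identity is purely formal once the character sum factors. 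The only points requiring care are the compatible pairing of the block indices of $\cP$ and $\wcP$, which reflexivity licenses and which lets the same variables $Y_\ell$ appear on both sides, and the observation that $\cM''$ passes through the $n$-fold product precisely because it is an algebra homomorphism.
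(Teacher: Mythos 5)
Your proof is correct; every step goes through as written. The key identity $\bar\psi^+(g)=\cM''(\psi(g))$ follows exactly as you compute it, since the character sum over $\hat{G}^n$ factors via~\eqref{e-prodchar}, the inner sums are the Krawtchouk coefficients of $(\wcP,\cP)$ (well defined because reflexivity gives $\wwcP=\cP$ and $|\wcP|=M$), and $\cM''$ being an algebra homomorphism lets it pass through the $n$-fold product; Poisson summation applied to $\bar\psi$ with the code $\cC^\perp$ and $\cC^{\perp\perp}=\cC$ then finishes the argument. However, your route differs from the paper's: rather than redoing the Fourier computation in the symmetrized variables, the paper deduces the theorem as a corollary of Theorem~\ref{T-weprodpart}, applied to the product partition $\cP^n$, by introducing the substitution homomorphism $\rho:W\rightarrow V$, $X_{t,m}\mapsto Y_m$, and verifying the intertwining relation $\rho\circ\cM'=\cM''\circ\rho$ on the monomials $X_{t,m(g)}$, so that $\pe_{\wcP^n_{\text{sym}},\,\cC^{\perp}}=\rho(\pe_{\wcP^n,\cC^{\perp}})=\frac{1}{|\cC|}\rho\circ\cM'(\pe_{\cP^n,\cC})=\frac{1}{|\cC|}\cM''(\pe_{\cP^n_{\text{sym}},\cC})$. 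The paper's reduction is shorter and makes transparent that the symmetrized enumerator is nothing but a variable specialization of the product enumerator, so no new character-theoretic work is needed; your direct argument is self-contained (it never invokes Theorem~\ref{T-weprodpart} as a black box), re-exposes the Krawtchouk/Poisson mechanism explicitly in the symmetrized setting, and in effect reproves the product case and the specialization in one pass --- at the cost of repeating computations the paper already has on record.
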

%%%%%%%%%%%%%%%%%%%%%%%%%%%%%%%%%%%%

\begin{proof}
We make use of the MacWilliams identity for the product partition enumerator in Theorem~\ref{T-weprodpart}, applied to the
product partition~$\cP^n$ and its dual.
Then the polynomial ring~$W$ is $W=\C[X_{i,m}\mid i=1,\ldots,n,\,m=1,\ldots,M]$.
Consider the substitution homomorphism
$\rho: W\longrightarrow V$, $X_{t,m}\longmapsto Y_m$.
Then Theorem~\ref{T-weprodpart} yields
\[
 \pe_{\wcP^n_{\text{sym}},\,\cC^{\perp}}=\rho(\pe_{\wcP^n,\cC^{\perp}})=\frac{1}{|\cC|}\rho\circ\cM'(\pe_{\cP^n,\,\cC}).
\]
Note that $\rho\circ\cM'(X_{t,m(g)})=\rho\big(\sum_{\chi\in\hat{G}}\inner{\chi,g}X_{t,m(\chi)}\big)
  =\sum_{\chi\in\hat{G}}\inner{\chi,g}Y_{m(\chi)}=\cM''(Y_{m(g)})
  =\cM''\circ\rho(X_{t,m(g)})$.
Hence  $\rho\circ\cM'(\pe_{\cP^n,\,\cC})=\cM''\circ\rho(\pe_{\cP^n,\,\cC})=\cM''(\pe_{\cP^n_{\text{sym}},\,\cC})$, and this concludes the proof.
\end{proof}

As we did for the product partition, we may write the identity~\eqref{e-symmMacW} in the form
\begin{equation}\label{e-MacWSymm}
   \pe_{\wcP^n_{\text{sym}},\,\cC^{\perp}}({\bf Y})=
   \frac{1}{|\cC|}\pe_{\cP^n_{\text{sym}},\,\cC}\big(K{\bf Y}\big),
\end{equation}
where ${\bf Y}=(Y_1,\ldots,Y_M)\T$.

\medskip
The last two theorems cover an abundance of MacWilliams identities from the literature: the identities for the Hamming weight,
the complete weight~\cite{MS77,Kl89,Cam98}, the exact weight~\cite{MS77,Cam98}, the symmetrized Lee weight~\cite{HKCSS94,Kl87}
are all instances of Theorem~\ref{T-wesymmpart} for the symmetrized partition enumerator, and so are many other cases.
The only work left is the explicit computation of the Krawtchouk coefficients in each concrete case.
But for the just mentioned examples, this can be done straightforwardly; see also \cite[Sec.~4]{GL13MacW1}, where this
has been carried out in detail.

While these cases are well known and can be found in the above mentioned literature, we wish to touch upon a different
class of lesser known identities explicitly.
That is those of split weight enumerators, where for instance, the Hamming weight is considered separately
on various components of a given vector, or the Hamming weight is considered on one part and the symmetrized
Lee weight on the other one.
All these cases are instances of Theorem~\ref{T-weprodpart}.
The resulting MacWilliams identity for the \emph{split Hamming weight enumerator} has been derived
by MacWilliams and Sloane for codes over the binary field  in~\cite[Ch.~5, Eq.~(52)]{MS77} and by
Simonis~\cite[Eq.~(3')]{Si95} for arbitrary fields and where the codewords are divided into~$t$ blocks of coordinates.
A similar identity can be found in~\cite{KhMcE05} by El-Khamy and McEliece.
The latter authors also observe that if~$\cC$ is a systematic~$[n,k]$ code, then the split Hamming weight enumerator
is the input-redundancy weight enumerator which keeps track of the input weights in combination with the corresponding
redundancy weight.
This allows them to apply their identity to MDS codes in order to derive further results on the bit error probability
for systematic RS codes.
Finally, in~\cite{LKY04} this weight enumerator has been used to derive a MacWilliams identity for the input-output
weight enumerators of direct-product single-parity-check codes.

We will apply Theorems~\ref{T-weprodpart} and~\ref{T-wesymmpart} at the end of Section~\ref{SS-Posets} to the
particular partitions discussed in that section.

%%%%%%%%%%%%%%%%%%%%%%%%%%%%%%%%%%%%%%%%%%
\section{Duality for Frobenius Rings}\label{SS-FrobRing}
%%%%%%%%%%%%%%%%%%%%%%%%%%%%%%%%%%%%%%%%%%%
In this section we focus our attention on the case where the group~$G$ is the additive group of a finite ring.
We restrict ourselves to commutative rings in order to keep notation simple and because most known and interesting
examples are for codes over commutative rings.
The results of this section are not new, and the goal is rather to carefully reconcile duality in the group setting with
that for codes over rings.
For a Frobenius ring~$R$, the character group~$\hat{R}$ can be turned into a module which is isomorphic
to the given ring.
We will illustrate that when identifying these two modules, as often done in the literature of codes over rings,
the dual of a partition depends on the identification.

Let~$R$ be a finite commutative ring with identity. Its group of units is denoted by~$R^*$.
The character group of $(R,\,+\,)$ can be endowed with an $R$-module structure via the scalar multiplication
$r\chi(a):=\chi(ra)$, and we call~$\hat{R}$ the \emph{character module} of~$R$.

While the additive groups of~$R$ and~$\hat{R}$ are isomorphic, this is not necessarily the case for
the $R$-modules~$R$ and~$\hat{R}$.
The latter are isomorphic if and only if the ring is Frobenius.
In ring theory, Frobenius rings are commonly defined via their socle, see~\cite[Def.~16.14]{Lam99}.
For finite commutative rings, however, it follows from Lamprecht~\cite{Lamp53} (see also Hirano~\cite[Thm.~1]{Hi97} and 
Honold~\cite[p.~409]{Hon01}) that this is equivalent to our definition below.
Since this character-theoretic property is all we need in this paper, we simply use this as our definition.

%%%%%%%%%%%%%%%%%%
\begin{defi}\label{D-Frob}
A finite commutative ring~$R$ is called \emph{Frobenius} if there exists a character~$\chi\in\hat{R}$ such that
$\alpha:\,R\longrightarrow\hat{R},\ r\longmapsto r\chi$ is an $R$-isomorphism.
Any character~$\chi$ with this property is called a \emph{generating character} of~$R$.
\end{defi}
%%%%%%%%%%%%%%%%%%
Obviously, any two generating characters~$\chi,\,\chi'$ differ by a unit, i.e., $\chi'=u\chi$ for some
$u\in R^*$.

%The terminology \emph{generating character} has been cast by Klemm~\cite{Kl89a}.
%Claasen and Goldbach~\cite{ClGo92} call such characters \textit{admissible}, and Frobenius rings are
%\emph{admissible rings}.
%We use the nomenclature Frobenius ring and generating character which became prevalent
%in the literature of codes over rings.

%It is interesting to note that a finite commutative ring is Frobenius if and only if it is self-injective
%(that is, injective as a module over itself), and this in turn is equivalent to being a Gorenstein ring\footnote{
%A finite commutative ring is called Gorenstein if the localizations at each maximal ideal are
%(zero-dimensional) local Gorenstein rings in the sense of Eisenbud~\cite[Prop.~21.5]{Eis95}.}.
%For further details, see Wood~\cite[Sec.~1]{Wo99}.

Many standard examples of commutative rings are Frobenius.
Details can be found in Wood~\cite[Ex.~4.4]{Wo99} and Lam~\cite[Sec.~16.B]{Lam99}.
%%%%%%%%%%%%%%%%%%%
\begin{exa}\label{E-Frob}
\begin{alphalist}
\item The integer residue rings $\Z_N$, where $N\in\N$, are Frobenius; see Example~\ref{E-Z6Z8}(a).
%      A generating character is given by $\chi(g):=\zeta^{g}$ for all $g\in\Z_N$, where
%      $\zeta\in\C$ is an $N$-th primitive root of unity.
%      Each character is given by
%      $a\chi(g)=\chi(ag)=\zeta^{ag}$ for some $a\in\Z_N$.
\item Every finite field is Frobenius, and every non-principal character of~$\F$ is a generating character.
\item Finite chain rings, finite group rings over a Frobenius ring, direct products of Frobenius rings, and
        Galois rings are Frobenius.
\item The ring $R=\F_2[x,y]/(x^2,y^2,xy)$ is a local, non-Frobenius ring; see~\cite[Ex.~3.2]{ClGo92}.
      We will also recover this result below in Remark~\ref{R-DoubleAnn}.
\end{alphalist}
\end{exa}
%%%%%%%%%%%%%%%%%%%%%%%%%%%%%%%%%%%%%%

The following easy-to-verify property has been proved by Claasen and Goldbach~\cite[Cor.~3.6]{ClGo92}.
%%%%%%%%%%%%%%%%%%%%%%%%%
\begin{rem}\label{R-PropGen}
Let~$\chi$ be a character of~$R$.
Then~$\chi$ is a generating character of~$R$ if and only if the only ideal contained in
$\ker\chi:=\{r\in R\mid \chi(r)=1\}$ is the zero ideal.
\end{rem}
%%%%%%%%%%%%%%%%%%%%%%%%%

Now we can derive the following familiar identifications. For part~(b) see also Wood~\cite[Thm.~7.7]{Wo99}.
A \emph{(linear) code} over~$R$ is simply a submodule of $R^n$.

%%%%%%%%%%%%%%%%%%%%%%%
\begin{theo}\label{T-RFrob}
Let~$R$ be a finite commutative Frobenius ring and~$\chi$ a generating character of~$R$.
For $v,a\in R^n$ denote by $v\cdot a:=\sum_{i=1}^nv_ia_i$ the dot product.
Moreover, for $v\in R^n$ define the character $\chi_v\in\widehat{R^n}$ via $\chi_v(a)=\chi(v\cdot a)=\inner{\chi,v\cdot a}$ for $a\in R^n$.
Then we have the following.
\begin{alphalist}
\item The map $\alpha:\;R^n\longrightarrow\widehat{R^n},\quad v\longmapsto\chi_v$ is an $R$-module
      isomorphism.
\item For a code $\cC\subseteq R^n$ define the dot-product dual as
      $\cC^{\pperp}:=\{v\in R^n\mid v\cdot a=0\text{ for all }a\in\cC\}$.
      Then the character-theoretic dual of the additive group $(\cC,\,+\,)$ and the dot-product dual coincide; precisely
      $\alpha(\cC^{\pperp})=\cC^\perp=\{\psi\in\widehat{R^n}\mid \inner{\psi,a}=1\text{ for all }a\in\cC\}$.
\end{alphalist}
\end{theo}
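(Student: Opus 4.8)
The plan is to prove part~(a) first, establishing that $\alpha$ is an $R$-module isomorphism, and then deduce part~(b) as a consequence. For part~(a), I would first verify that $\alpha$ is an $R$-module homomorphism. Linearity in the additive group is immediate from $\chi_{v+w}(a)=\chi((v+w)\cdot a)=\chi(v\cdot a)\chi(w\cdot a)=\chi_v(a)\chi_w(a)$, using that the dot product is additive and that $\chi$ is a character. For $R$-linearity one checks that $\chi_{rv}=r\chi_v$ in the character module, which unwinds to $\chi((rv)\cdot a)=\chi(v\cdot(ra))=(r\chi_v)(a)$ by the definition of the scalar action $r\chi(a)=\chi(ra)$ in Section~\ref{SS-FrobRing}. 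Since $|R^n|=|\widehat{R^n}|$ (as $G$ and $\hat{G}$ have equal order for any finite abelian group), it suffices to prove that $\alpha$ is injective.

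For injectivity I would reduce the $n$-dimensional statement to the case $n=1$, where the hypothesis enters. Using $(R^n)\!\widehat{\phantom{I}}=\hat{R}\times\cdots\times\hat{R}$ from~\eqref{e-prodchar}, the map $\alpha$ factors as the product of the single-variable maps, so it is enough to show $\alpha:R\to\hat{R}$, $v\mapsto v\chi$ is injective; but this is precisely the content of $\chi$ being a generating character (Definition~\ref{D-Frob}), since $\alpha$ there is declared an isomorphism. Concretely, if $\chi_v\equiv\varepsilon$ for some $v\in R^n$, then $\chi(v\cdot a)=1$ for all $a\in R^n$; taking $a$ to range over standard basis vectors forces $\chi(v_ir)=1$ for all $r\in R$ and each coordinate $v_i$, so $Rv_i\subseteq\ker\chi$, and Remark~\ref{R-PropGen} forces $v_i=0$. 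Hence $\alpha$ is injective, and by the cardinality count it is an isomorphism.

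For part~(b), I would argue by a direct chain of equivalences. Fix $v\in R^n$. By definition $\alpha(v)=\chi_v\in\cC^\perp$ iff $\inner{\chi_v,a}=\chi(v\cdot a)=1$ for all $a\in\cC$. The task is to show this is equivalent to $v\cdot a=0$ for all $a\in\cC$, i.e.\ to $v\in\cC^{\pperp}$. One direction is trivial since $\chi(0)=1$. For the converse, suppose $\chi(v\cdot a)=1$ for all $a\in\cC$; since $\cC$ is a submodule, $ra\in\cC$ for all $r\in R$, so $\chi(r(v\cdot a))=\chi(v\cdot(ra))=1$ for every $r$, meaning $R(v\cdot a)\subseteq\ker\chi$, whence $v\cdot a=0$ by Remark~\ref{R-PropGen}. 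This shows $\alpha(\cC^{\pperp})=\cC^\perp$, and the displayed description of $\cC^\perp$ is just~\eqref{e-DualSubgroup} applied to the group $(\cC,+)$.

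The main obstacle, such as it is, lies in part~(b): the naive reading would only give that $\chi_v$ annihilates $\cC$ iff $\chi$ vanishes on the set $\{v\cdot a\mid a\in\cC\}$, which is generally weaker than $v\cdot a=0$. The key idea that resolves this is to exploit that $\cC$ is a \emph{module}, not merely an additive group, so that $v\cdot\cC$ is closed under multiplication by $R$ and thus contains the ideal generated by each of its elements; the generating-character property (Remark~\ref{R-PropGen}) then upgrades ``$\chi$ trivial on an ideal'' to ``the ideal is zero.'' This is precisely the point where the Frobenius hypothesis is indispensable: without a generating character, $\ker\chi$ could contain a nonzero ideal and the two notions of dual would diverge.
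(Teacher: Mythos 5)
Your proposal is correct and follows essentially the same route as the paper: part (a) rests on the product identification~\eqref{e-prodchar} together with the generating-character property (your injectivity-plus-counting bookkeeping versus the paper's direct observation that every character of $R^n$ has the form $\chi_v$ is an immaterial difference), and your part (b) — using that $\cC$ is a submodule so that $R(v\cdot a)\subseteq\ker\chi$, then invoking Remark~\ref{R-PropGen} — is exactly the paper's argument. You also correctly identified the one genuinely non-trivial point, namely that the module structure of $\cC$ is what upgrades ``$\chi$ trivial on $v\cdot\cC$'' to ``$v\cdot\cC=0$.''
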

%%%%%%%%%%%%%%%%%%%%%%%
\begin{proof}
Recall from~\eqref{e-prodchar} that~$\hat{R}^n$ and~$\widehat{R^n}$  are isomorphic groups via
$\inner{(\chi_1,\ldots,\chi_n),(a_1,\ldots,a_n)}=\prod_{i=1}^n\inner{\chi_i,a_i}$ for all $\chi_i\in\hat{R}$.
For the Frobenius ring~$R$, all characters are of the form $r\chi,\, r\in R$, and thus the last identity reads as
\[
   \inner{(r_1\chi,\ldots,r_n\chi),(a_1,\ldots,a_n)}=\prod_{i=1}^n\inner{\chi, r_ia_i}
   =\inner{\chi,\sum_{i=1}^nr_ia_i}=\inner{\chi, r\cdot a}.
\]
This proves~(a).
As for~(b), notice first that the dual group~$\cC^\perp$ is indeed an $R$-module.
%since for any character
%$\psi\in\cC^\perp$, any $a\in\cC$ and $r\in R$ we have $r\psi(a)=\psi(ra)=1$.
Next, the containment $\alpha(\cC^{\pperp})\subseteq\cC^\perp$ is evident.
For the converse, let $v\in R^n$ be such that~$\alpha(v)\in\cC^\perp$.
Thus $\inner{\chi_v,a}=1$ for all $a\in\cC$.
But then for all $r\in R$ and  all $a\in\cC$ we have $1=\inner{\chi_v, ra}=\inner{\chi, rv\cdot a}$.
This means that the ideal in~$R$ generated by $v\cdot a$ is in $\ker\chi$, and with the aid of
Remark~\ref{R-PropGen} we conclude $v\cdot a=0$.
Since $a\in\cC$ is arbitrary this shows that $v\in\cC^{\pperp}$.
\end{proof}

We have the following simple, but crucial consequence of~\eqref{e-bidual}.
%%%%%%%%%%%%%%%
\begin{rem}\label{R-DoubleAnn}
For any Frobenius ring~$R$ and any code $\cC\subseteq R^n$, we have $\cC^{\pperp\pperp}=\cC$.
As a consequence, $|\cC||\cC^{\pperp}|=|R^n|$.
If $n=1$, thus $\cC\subseteq R$ is an ideal in~$R$, then the identity $\cC^{\pperp\pperp}=\cC$ is known as the
double annihilator property, see, e.g.,~\cite[Thm.~15.1]{Lam99}.
This property is in general not true if~$R$ is not a Frobenius ring as can easily be seen using
the ideal $\cC=(x)$ in the ring~$R=\F_2[x,y]/(x^2,y^2,xy)$.
In this case, $(x)^{\pperp}=(x,y)$ and $(x)^{\pperp\pperp} =(x,\,y)\neq(x)$  (thereby proving that~$R$ is not Frobenius).
%
%However, recall from~\eqref{e-bidual} that the character-theoretic dual satisfies $(x)^{\perp\perp}=(x)$.
\end{rem}
%%%%%%%%%%%%%%%%%

It is important to keep in mind that the identification of~$R^n$ and~$\widehat{R^n}$ depends on the choice
of the generating character.
As a consequence, the dual of a partition of~$R^n$, if considered in~$R^n$ again, may also depend on the
generating character.
An example will be given after Definition~\ref{D-DualPartchi}.
For residue rings~$R=\Z_N$ this dependence does not occur.
This is due to the fact that all primitive  $N$-th roots of unity in~$\C$ have the same minimal polynomial.
To be on the safe side we cast the following definition.

%%%%%%%%%%%%%%%%%%%%%%%%%
\begin{defi}\label{D-DualPartchi}
Let~$\chi$ be a generating character of the Frobenius ring~$R$ and $\alpha:R^n\rightarrow\widehat{R^n}$ be the
isomorphism $v\mapsto\chi_v$ from Theorem~\ref{T-RFrob}(a).
For a partition $\cP=P_1\mmid P_2\mmid\ldots\mmid P_M$  of~$R^n$ and its dual partition~$\wcP$ of~$\widehat{R^n}$
we define the $\chi$-dual partition~$\wcP^{^{[\chi]}}$ of~$R^n$ as $\alpha^{-1}(\wcP)$.
Thus
\[
   v\widesim[2]_{\wcP^{^{[\chi]}}}v'\Longleftrightarrow \sum_{w\in P_m}\chi(v\cdot w)=\sum_{w\in P_m}\chi(v'\cdot w)
   \text{ for all }m=1,\ldots,M.
\]
The partition~$\cP$ is called \emph{$\chi$-self-dual} if $\wcP^{^{[\chi]}}=\cP$.
\end{defi}
%%%%%%%%%%%%%%%%%%%%%%%%

Here comes an example illustrating the dependence of the dualization on the choice of the generating character.
%%%%%%%%%%%%%%%%%%%%%%%%%%%%
\begin{exa}\label{E-FPartNotIndep}
Consider the field~$\F_4=\{0,\,1,\,a,\,a^2\}$.
The maps~$\chi$ such that $\chi(0)=\chi(1)=1,\,\chi(a)=\chi(a^2)=-1$ and $\tilde{\chi}$ such that
$\tilde{\chi}(0)=\tilde{\chi}(a)=1,\,\tilde{\chi}(1)=\tilde{\chi}(a^2)=-1$
are characters of~$\F_4$.
The partition~$\cP=0\mmid1\mmid a,a^2$ of~$\F_4$ satisfies
$\wcP^{^{[\chi]}}=\cP$, whereas $\wcP^{^{[\tilde{\chi}]}}=0\mmid1,a^2\mmid a$.
Hence $\cP$ is $\chi$-self-dual but not $\tilde{\chi}$-self-dual.
\end{exa}
%%%%%%%%%%%%%%%%%%%%%%%%%%%%

%%%%%%%%%%%%%%%%%%%%%%%%
\begin{rem}\label{R-bidual}
It is easy to verify that the bidual partition does not depend on the generating character~$\chi$.
Precisely, $\widehat{\wcP^{^{[\chi]}}}^{^{[\chi]}}=\wwcP$ for any generating character~$\chi$.
To see this, let $\cQ=Q_1\mmid\ldots\mmid Q_L =\wcP^{^{[\chi]}}$ and let $\cR=\widehat{\cQ}^{^{[\chi]}}$.
Then by definition $v\widesim[2]_{\cR}v'\Leftrightarrow \alpha(v)\widesim_{\widehat{\cQ}}\alpha(v')$ for any $v,\,v'\in R^n$.
The computation
\[
  \sum_{w\in Q_\ell}\inner{\alpha(v),w}=\sum_{w\in Q_\ell}\inner{\chi_v,w}=\sum_{w\in Q_\ell}\chi(v\cdot w)
  =\sum_{w\in Q_\ell}\inner{\alpha(w),v}=\sum_{\tilde{\chi}\in\alpha(Q_\ell)}\inner{\tilde{\chi},v}
\]
along with $\alpha(\cQ)=\wcP$ shows that $\cR=\wwcP$.
\end{rem}
%%%%%%%%%%%%%%%%%%%%%%%

\medskip
We close this section with a brief digression and comment on a different approach to duality and MacWilliams identities taken by
Honold and Landjev.
In~\cite{HoLa01} they study codes~$\cC$ in~$R^n$, where~$R$ is a (not necessarily commutative)
Frobenius ring.
The dual is defined as the dot product dual~$\cC^{\pperp}$.
Instead of using characters, the authors make use of a (unique) map $\omega:R\rightarrow\Q$ with certain ``homogeneous'' properties.
The existence of such a map is guaranteed for Frobenius rings.
The authors define two particular classes of pairs of partitions of~$R^n$: regular pairs and ${\mathcal W}$-admissible pairs.
It is easy to see that if a pair of partitions $(\cP,\cQ)$ of~$R^n$ is regular, then $\wcP=\cQ$ and $\widehat{\cQ}=\cP$
(where we identify~$R^n$ with~$\widehat{R^n}$ as in Theorem~\ref{T-RFrob}(a)).
But ${\mathcal W}$-admissible pairs are unrelated to our notion of reflexivity: one can construct
${\mathcal W}$-admissible partitions that are not reflexive and vice versa.
%on~$\Z_5$, the pair $(\cP,\cP)$, where $\cP=0\mmid 1,2\mmid 3,4$, is ${\mathcal W}$-admissible, but~$\cP$ is
%not reflexive (or $\chi$-self-dual);
%secondly, on~$\F_4^2$ the partition $\cP^2_{\text{sym}}$, where~$\cP$ is as in Example~\ref{E-FPartNotIndep},
%is reflexive (by Theorem~\ref{T-SymmPartDual}), but not ${\mathcal W}$-admissible.
In~\cite[Thm.~21]{HoLa01} the authors show that ${\mathcal W}$-admissible pairs allow a MacWilliams identity
analogous to the one in Theorem~\ref{T-MacWPart} of this paper.
Both reflexivity and ${\mathcal W}$-admissibility have their advantages as either property carries over to certain derived partitions (e.g., the  product partition and symmetrized product partition for reflexitivity),
and just like the approach with reflexive partitions  the one taken in~\cite{HoLa01} leads to plenty of
interesting applications.

%%%%%%%%%%%%%%%%%%%%%%%%%%%%
\section{Poset Structures on~$G_1\times\ldots\times G_n$}\label{SS-Posets}
%%%%%%%%%%%%%%%%%%%%%%%%%%%%%
Let~$G_i$ be non-trivial finite abelian groups for $i=1,\ldots,n$.
In this section we consider a weight for codes in~$\cG:=G_1\times\ldots\times G_n$ that derives from
a prescribed poset structure on the coordinate set $[n]:=\{1,\ldots,n\}$.
This has been introduced by Brualdi et al. in~\cite{BGL95} for codes over fields (i.e., codes in $\F^n$), where it generalizes the Hamming weight as well as the Rosenbloom-Tsfasman weight.
The latter has been introduced by Rosenbloom and Tsfasman in~\cite{RoTs97} and plays a specific role for matrix codes; see~\cite{Skr07} for the relevance of the Rosenbloom-Tsfasman
weight for detecting matrix codes with large Hamming distance.
%Skriganov shows that for a given matrix code~$\cC$
%(that is, a subspace in some $\F^{s\times n}$), the orbit of~$\cC$ under the action of the RT-weight-preserving
%group contains codes with large Hamming distance; under certain conditions on~$\cC$ it even contains codes
%meeting the Gilbert Varshamov bound.
%This way, the Rosenbloom-Tsfasman-metric is helpful for detecting codes with large Hamming distance.

In~\cite{DoSk02} Dougherty and Skriganov establish a MacWilliams identity for the Rosenbloom-Tsfasman weight
and a suitably defined dual weight for codes over fields.
This has been generalized by Kim and Oh~\cite{KiOh05} to general poset structures, and they also characterize the poset structures that allow such a MacWilliams identity.
Pinheiro and Firer~\cite{PiFi12} generalize this even further to block poset structures.
In all these cases the dual weight is induced by the dual poset.

In this section, we will recover these MacWilliams identities and generalize them to codes over groups.
We will do so by establishing that the partitions induced by a poset weight and the dual poset weight
are mutually dual in the sense of Definition~\ref{D-FPart}.
Then the MacWilliams identity is simply an instance of Theorem~\ref{T-MacWPart}.

Let $\leq$ be a partial order on $[n]$, thus $\bP:=([n],\leq)$ is a poset.
We denote by $\max(\bP)$ (resp.\ $\min(\bP)$) the set of all maximal (resp.\ minimal) elements of~$\bP$.
A subset $S\subseteq[n]$ is called an ideal if $i\in S$ and $j\leq i$ implies $j\in S$.
Denote by $\ideal{S}$ the smallest ideal generated by the set~$S$.
Let~$j\in\max(\bP)$ and consider the subposet $\bP'=([n]\backslash\{j\},\leq)$.
Then evidently,
\begin{equation}\label{e-idealPP'}
  \ideal{S}_{\bP}=\ideal{S}_{\bP'}\ \text{ for any subset } S\subseteq[n]\backslash\{j\}.
\end{equation}

A poset~$\bP=([n],\leq)$ induces the \emph{poset weight} on~$\cG$  given by
\begin{equation}\label{e-pweight}
   \wt_{\bP}(g)=\big|\ideal{\supp(g)}\big|,
\end{equation}
where, as usual, $\supp(g):=\{i\mid g_i\neq0\}$ denotes the support of~$g=(g_1,\ldots,g_n)\in\cG$.
The weight induces a metric on~$\cG$ (see~\cite[Lem.~1.1]{BGL95} for~$\F^n$).
More interesting to us, it gives rise to a partition~$\cP_{\bP}=P_0\mmid\ldots\mmid P_n$ of~$\cG$ via
\[
   P_m=\{g\in \cG\mid \wt_{\bP}(g)=m\} \text{ for } m=0,\ldots,n.
\]
The blocks~$P_m$ are nonempty for each~$m$ and thus $|\cP_{\bP}|=n+1$.
Indeed,~$P_n\not=\emptyset$ is obvious and for~$P_{n-1}$ pick an element in~$\cG$ whose
support is given by $[n]\backslash\{j\}$ for some $j\in\max(\bP)$.
Now the rest follows inductively using~\eqref{e-idealPP'}.

A particular role is played by hierarchical posets.
For the terminology we follow Kim and Oh~\cite{KiOh05}.
%%%%%%%%%%%%%%%%%%%%%%%%%%
\begin{defi}\label{D-hierposet}
Let $\bP=([n],\leq)$ be a poset. Then $\bP$ is called a \emph{hierarchical poset} if there exists a partition
$[n]=\bigcup_{i=1}^t\hspace*{-2.1em}\raisebox{.5ex}{$\cdot$}\hspace*{1.8em} \Gamma_i$
such that for all $l,m\in[n]$ we have $l< m$ if and only if  $l\in\Gamma_{i_1},\,m\in\Gamma_{i_2}$
for some $i_1<i_2$ (where $i_1<i_1$ refers to the natural order in~$\N$).
In other words, for all $i_1<i_2$ every element in $\Gamma_{i_1}$ is less than every element in $\Gamma_{i_2}$,
and no other two distinct elements in~$[n]$ are comparable.
We call $\Gamma_{i}$ the \emph{$i$-th level} of~$\bP$.
The hierarchical poset is completely determined (up to order-isomorphism)
by the data $(n_1,\ldots,n_t)$, where $n_i=|\Gamma_i|$, and it is
denoted by $\bH(n;n_1,\ldots,n_t)$.
\end{defi}
%%%%%%%%%%%%%%%%%%%%%%%%%%

%The following obvious property will be important later.
%%%%%%%%%%%%%%%%%%%%%
%\begin{rem}\label{R-HierPoset1}
%If~$\bP$ is hierarchical, then $i<j$ for every $i\in\min(\bP)$ and every $j\in[n]\backslash\min(\bP)$.
%Similarly, $i<j$ for all $j\in\max(\bP)$ and all $i\in[n]\backslash\max(\bP)$.
%\end{rem}
%%%%%%%%%%%%%%%%%%%%%

%%%%%%%%%%%%%%%%%%%%%%%%%%%%
\begin{exa}\label{E-Poset1}
\begin{alphalist}
\item An \emph{anti-chain} on~$[n]$ is a poset in which any two distinct elements in~$[n]$ are incomparable.
        Thus,~$\bP$ is an anti-chain  if and only if~$\bP$ is the hierarchical poset $\bH(n;n)$.
        In this case, $\ideal{\supp(v)}=\supp(v)$ for all $g\in\cG$, and $\wt_{\bP}$ is
        simply the Hamming weight on~$\cG$.
\item A poset~$\bP=([n],\leq)$  is a \emph{chain} if~$\leq$ is a total order.
        Thus,~$\bP$ is a chain if and only if $\bP$ is the hierarchical poset $\bH(n;1,\ldots,1)$.
        Assuming without loss of generality that $1<2<\ldots<n$, we observe that
         $\wt_{\bP}(g)=\max\{i\mid g_i\neq0\}$, which on~$\F^n$ is known as the
         \emph{Ro\-sen\-bloom-Tsfasman weight}.
\end{alphalist}
\end{exa}
%%%%%%%%%%%%%%%%%%%%%%%

The dual of the poset~$\bP$ is defined as the poset $\bbP=([n],\geq)$ where
$x\geq y:\Longleftrightarrow y\leq x$.
The following is immediate.
%%%%%%%%%%%%%%%%%%%%%%%%%%%%
\begin{rem}\label{R-dualhier}
Let~$\bP$ be a hierarchical poset with~$t$ levels.
Then the dual~$\bbP$ is a hierarchical poset with $t$ levels.
Its $\ell$-th level is the $(t+1-\ell)$-th level of~$\bP$.
\end{rem}
%%%%%%%%%%%%%%%%%%%%%%%%%%%

For an inductive argument later on it is convenient to consider a particular situation beforehand.

%%%%%%%%%%%%%%%%%%%%%%%
\begin{rem}\label{R-levels}
Let $\bP=([n], \leq)$ a poset with dual poset $\bbP$.
Let $\max(\bP)=\{l+1,\ldots,n\}$, which then is also $\min(\bbP)$.
Suppose that for all $i\in\max(\bP)$ and all $j\in [n]\backslash\max(\bP)$ we have $j<i$.
Denote by ${\mathbf Q}=([l],\leq)$  the restriction of the partial order~$\leq$ on~$[l]$.
Then for any $g=(g_1,\ldots,g_n)\in\cG$ and $\chi=(\chi_1,\ldots,\chi_n)\in\hat{\cG}$
\begin{align*}
  \wt_{\bP}(g)&=\left\{\begin{array}{cl}
    \wt_{\mathbf Q}(g_1,\ldots,g_l),& \text{if }(g_{l+1},\ldots,g_n)=0,\\
    l+\wt_{\text{H}}(g_{l+1},\ldots,g_n),&\text{otherwise,}\end{array}\right.\\[1ex]
  \wt_{\bbP}(\chi)&=\left\{\begin{array}{cl}
        n-l+\wt_{\bar{\mathbf Q}}(\chi_1,\ldots,\chi_l),&\text{if }(\chi_{1},\ldots,\chi_l)\neq\veps,\\
        \wt_{\text{H}}(\chi_{l+1},\ldots,\chi_n), &\text{otherwise,}\end{array}\right.
\end{align*}
where~$\wt_{\text{H}}$ is the Hamming weight.
(Recall from Example~\ref{E-Z6Z8}(c) that for characters in~$\hat{\cG}$  the Hamming weight~$\wt_{\text{H}}$
is the number of entries not equal to the principal character~$\veps$.)
Hence restricting the elements of the blocks~$P_m,\,\bar{P}_m$ of the partitions
$\cP_{\bP},\,\cP_{\bbP}$ to the index set~$[l]$ leads to the following relation with
the blocks $Q_m,\,\bar{Q}_m$ of $\cP_{\mathbf Q},\,\cP_{\bar{\mathbf Q}}$:
\[
   (P_m)\raisebox{-1ex}{\big|}_{[l]}=Q_m\text{ and }(\bar{P}_{m+n-l})\raisebox{-1ex}{\big|}_{[l]}=\bar{Q}_m\text{ for all }m=0,\ldots,l.
\]

\end{rem}
%%%%%%%%%%%%%%%%%%%%%%%%

In the following remark we fix convenient index notation for hierarchical posets and for the factors of the group~$\cG$.
It leads to a simple formula for the poset weight.

%%%%%%%%%%%%%%%%%%%%%%%%%%%
\begin{rem}\label{R-HierPoset}
Let $\bP$ be the hierarchical poset $\bH(n;n_1,\ldots,n_t)$.
Note that $\sum_{i=1}^t n_i=n$.
We write the underlying set~$[n]$ as
\[
       \cN:=\{(i,j)\mid i=1,\ldots,t,\,j=1,\ldots,n_i\}
\]
such that the partial order simply reads as
$(i,j)< (i',j')\Longleftrightarrow i<j$
(where~$i<j$ refers to the natural order on~$\N$).
Consequently, for any set $S=\{(i_1,j_1),\ldots,(i_r,j_r)\}\subseteq\cN$, where $i_1\leq\ldots\leq i_{s-1}<i_s=i_{s+1}=\ldots=i_r$,
the ideal generated by~$S$ is
\[
  \ideal{S}=\{(i,j)\in\cN\mid i< i_r\}\cup\{(i_s,j_s),(i_{s+1},j_{s+1}),\ldots,(i_r,j_r)\}.
\]
Accordingly, we index the factors of the group~$\cG$ as $G_{i,j},\,i=1,\ldots,t,\,j=1,\ldots,n_i$.
Thus,
\begin{equation}\label{e-Gfactors}
     \cG=\prod_{i=1}^t\cG_i, \text{ where }\cG_i=\prod_{j=1}^{n_i} G_{i,j},
\end{equation}
and $g\in\cG$ has the form $g=(g_1,\ldots,g_t)$, where $g_i=(g_{i,1},\ldots,g_{i,n_i})\in\cG_i$.
Now the definition of the poset weight yields
\begin{equation}\label{e-HPweight}
         \wt_{\bP}(g)=\sum_{i=1}^{s-1}n_i+\wt_{\text{H}}(g_s),\text{ where }s=\max\{i\mid g_i\not=0\}.
\end{equation}
Here $\wt_{\text{H}}$ stands for the Hamming weight on each of the groups~$\cG_i$.
Similarly, the dual poset weight on the character group $\hat{\cG}=\hat{\cG_1}\times\ldots\times\hat{\cG_t}$ is given by
\begin{equation}\label{e-HPPweight}
         \wt_{\bbP}(\chi)=\sum_{i=s+1}^{t}n_i+\wt_{\text{H}}(\chi_s),\text{ where }s=\min\{i\mid \chi_i\not=\veps\}.
\end{equation}
\end{rem}
%%%%%%%%%%%%%%%%%%%%%%%%%%

For codes in~$\F^n$ (thus $G_i=(\F,+)$ for all~$i$), Kim and Oh~\cite{KiOh05}  studied the question which posets give rise to a MacWilliams identity for the induced $\wt_{\bP}$-enumerator of a code and the $\wt_{\bbP}$-enumerator of its dual.
They proved that this is the case if and only if the poset is hierarchical.
Pinheiro and Firer~\cite{PiFi12} generalized this result to block poset structures.
We will recover both these results and generalize them to codes over groups.
This will be accomplished by relating the induced poset weight partitions to reflexivity in the sense of
Definition~\ref{D-FPart}.
Here is the first part of our result.

%%%%%%%%%%%%%%%%%%%%%%%%
\begin{theo}\label{T-DualPartPoset}
Fix a poset $\bP=([n],\leq)$.
Let $\cP_{\bP}$ and~$\cP_{\bbP}$ be the induced partitions on $\cG=G_1\times\ldots\times G_n$ and
$\hat{\cG}=\hat{G_1}\times\ldots\times\hat{G_n}$, respectively.
%and~$\cP_{\bbP}$ be the partition on~$\hat{\cG}=\hat{G}_1\times\ldots\times \hat{G}_n$ induced by
Suppose $\widehat{\cP_{\bP}}=\cP_{\bbP}$.
Then~$\bP$ is a hierarchical poset, and if~$\bP=\bH(n;n_1,\ldots,n_t)$ and $\cG$ is as in~\eqref{e-Gfactors}, then
 $|G_{i,1}|=\ldots=|G_{i,n_i}|$ for all~$i=1,\ldots,t$.
 \end{theo}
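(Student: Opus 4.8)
The plan is to first turn the hypothesis $\widehat{\cP_{\bP}}=\cP_{\bbP}$ into a purely combinatorial condition on $\bP$ and the orders $q_i:=|G_i|$, and then to read off the structural conclusions by induction, peeling off the antichain $\max(\bP)$. Since all blocks of $\cP_{\bP}$ are nonempty, $|\cP_{\bP}|=n+1$, and the same argument applied to $\bbP$ gives $|\cP_{\bbP}|=n+1$. Thus the hypothesis forces $|\widehat{\cP_{\bP}}|=|\cP_{\bP}|$, so by Theorem~\ref{P-LPplus} the partition $\cP_{\bP}$ is reflexive; dualizing the hypothesis and using reflexivity yields $\widehat{\cP_{\bbP}}=\cP_{\bP}$, making the whole situation symmetric under $(\bP,\cG)\leftrightarrow(\bbP,\hat{\cG})$ (note $|\hat{G_i}|=|G_i|$). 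Writing $T:=\supp(\chi)$ and grouping $g\in P_m$ by its support $S$, the coordinatewise factorization~\eqref{e-prodchar} together with the orthogonality relations~\eqref{e-GChar} in each coordinate gives
\[
   \sum_{g\in P_m}\inner{\chi,g}=\sum_{S\,:\,|\ideal{S}|=m}(-1)^{|S\cap T|}\prod_{k\in S\setminus T}(q_k-1),
\]
because $\sum_{g_k\neq0}\inner{\chi_k,g_k}=-1$ if $k\in T$ and $=q_k-1$ if $k\notin T$, while $\wt_{\bP}(g)=|\ideal{S}|$ depends only on $S$. Hence every dual block sum depends on $\chi$ only through $T$, and the hypothesis is equivalent to the statement that, for all $T,T'\subseteq[n]$, the profiles on the left coincide for all $m$ if and only if $|\ideal{T}_{\bbP}|=|\ideal{T'}_{\bbP}|$.

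Next I would dispose of the group orders at the top. The block $P_n$ consists exactly of those $g$ with $\supp(g)\supseteq\max(\bP)$, and evaluating the formula above at $m=n$ and $T=\{i\}$ for a single maximal element $i$ gives
\[
   \sum_{g\in P_n}\inner{\chi,g}=-\Big(\prod_{k\notin\max(\bP)}q_k\Big)\prod_{k\in\max(\bP),\,k\neq i}(q_k-1).
\]
Every such $\chi$ satisfies $\wt_{\bbP}(\chi)=|\ideal{\{i\}}_{\bbP}|=1$, so all these characters lie in a single block of $\cP_{\bbP}=\widehat{\cP_{\bP}}$ and must share this value. Since all factors $q_k-1$ are positive, equating the expressions for two maximal elements $i,i'$ forces $q_i=q_{i'}$. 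Thus all groups attached to maximal elements have equal order; this will become the top level once the hierarchy is established.

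The heart of the argument, and the main obstacle, is to show that $\bP$ is \emph{hierarchical}. Recall that $\bP$ is hierarchical precisely when incomparability is transitive, equivalently when $\bP$ contains no induced subposet formed by a two-element chain together with an incomparable point. I would argue by induction on $n$, peeling off $\max(\bP)$, the key claim at each stage being that the hypothesis forces every non-maximal element to lie below every maximal element, i.e.\ the configuration of Remark~\ref{R-levels}. Granting this, Remark~\ref{R-levels} identifies the restrictions to $[l]=[n]\setminus\max(\bP)$ of the blocks of $\cP_{\bP}$ and $\cP_{\bbP}$ with the blocks of the partitions of the restricted poset $\mathbf Q$ and its dual $\bar{\mathbf Q}$; combined with the product structure $\cG=\big(\prod_{k\le l}G_k\big)\times\prod_{k\in\max(\bP)}G_k$ and the explicit profile formula, this lets me transfer the duality hypothesis to $\mathbf Q$ and apply the induction hypothesis. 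The induction then delivers that $\mathbf Q$ is hierarchical with equal group orders on each level, which, together with the top level handled above, yields that $\bP$ itself is hierarchical with the asserted equality $|G_{i,1}|=\cdots=|G_{i,n_i}|$ on every level.

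The delicate point is the dominance claim. To prove it I would argue by contradiction: if a maximal $i$ and a non-maximal $j$ are incomparable, then $i\notin\ideal{\{j\}}_{\bbP}$, so $|\ideal{\{i,j\}}_{\bbP}|=1+|\ideal{\{j\}}_{\bbP}|$, and one can compare the profile of the single non-maximal coordinate $\{j\}$ with profiles of subsets of $\max(\bP)$ of the same up-set size, using that the $\max(\bP)$-contributions are of pure Hamming type while those through $j$ involve the down-set of $j$. The explicit formula then exhibits two supports of equal dual weight but distinct profiles, contradicting the combinatorial reformulation. Carrying out this comparison cleanly is precisely the computationally involved step, and is exactly where the hypothesis pins down the comparabilities of $\bP$; the subsequent bookkeeping that restricts the hypothesis to $\mathbf Q$ is routine once dominance and Remark~\ref{R-levels} are in hand.
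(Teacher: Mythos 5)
Your preparatory steps are all correct and closely parallel the paper's strategy: the profile formula $\sum_{g\in P_m}\inner{\chi,g}=\sum_{S:\,|\ideal{S}|=m}(-1)^{|S\cap T|}\prod_{k\in S\setminus T}(q_k-1)$ is right, the combinatorial reformulation of the hypothesis is valid, the equality of the orders $q_i$ over $i\in\max(\bP)$ follows exactly as you say from the block $P_n$, and the transfer of the hypothesis to the restricted poset ${\mathbf Q}$ via Remark~\ref{R-levels} can indeed be made routine. The problem is that the one step you explicitly decline to carry out --- the ``dominance claim'' that an incomparable pair $i\in\max(\bP)$, $j\notin\max(\bP)$ contradicts $\widehat{\cP_{\bP}}=\cP_{\bbP}$ --- is the entire mathematical content of the theorem, and your sketch of it is flawed as stated. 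You propose to compare the profile of $T=\{j\}$ against that of a subset $T'\subseteq\max(\bP)$ with $|T'|=|\ideal{j}_{\bbP}|$, but for an \emph{arbitrary} incomparable pair such a $T'$ need not exist: $\ideal{j}_{\bbP}$ is the up-set of $j$ in $\bP$ and can exceed $|\max(\bP)|$. Concretely, take $1<2<3$ on $\{1,2,3,4\}$ with $4$ incomparable to everything; then $j=1$ is incomparable to the maximal element $4$, yet $|\ideal{1}_{\bbP}|=3>2=|\max(\bP)|$, so no comparison set of equal dual weight lives inside $\max(\bP)$.

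The paper closes this hole with two devices absent from your proposal. First, $j$ is not arbitrary: it is chosen \emph{minimal in $\bbP$} among the elements outside $\min(\bbP)=\max(\bP)$ having the incomparability property. Minimality forces every element strictly below $j$ in $\bbP$ to lie in $\min(\bbP)\setminus\{i\}$, whence $\ideal{j}_{\bbP}\subseteq(\min(\bbP)\setminus\{i\})\cup\{j\}$ and $m:=|\ideal{j}_{\bbP}|\leq l=|\max(\bP)|$, so a character $\chi'$ supported on $m$ elements of $\max(\bP)$ with the same dual weight $m$ as $\chi''$ (supported at $j$) exists. Second, the separating block can simply be taken to be $P_n=\{g\mid\supp(g)\supseteq\max(\bP)\}$, where the computation is trivial rather than ``computationally involved'': for $\chi'$ every coordinate contributes a factor $-1$, $q_k-1$, or $q_k$, so the sum is nonzero, whereas for $\chi''$ the coordinate $j$ is unconstrained in $P_n$ and contributes $\sum_{g_j\in G_j}\inner{\chi''_j,g_j}=0$, killing the whole sum. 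Zero versus nonzero is the contradiction; in your notation this is just the $m=n$ row of your profile formula, which already separates $\{j\}$ from every subset of $\max(\bP)$. Until you specify this choice of $j$ and exhibit such an explicit separation, the proposal does not prove the theorem.
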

%%%%%%%%%%%%%%%%%%%%%%%%%%

Note that the assumption $\widehat{\cP_{\bP}}=\cP_{\bbP}$ along with $|\cP_{\bP}|=|\cP_{\bbP}|=n+1$ implies that
$\cP_{\bP}$ is reflexive; see Theorem~\ref{P-LPplus}.
But one should be aware that Theorem~\ref{T-DualPartPoset} does not state that reflexivity of the poset partition~$\cP_{\bP}$
implies that~$\bP$ is a hierarchical poset.
We strongly believe that this is true as well, but unfortunately do not have a proof.
To be more specific, supported by many examples we believe that
$\widehat{\cP_{\bP}}\leq\cP_{\bbP}$ for any poset~$\bP$.
This would imply the above conjecture.

One may notice that by Theorem~\ref{T-MacWPart} the assumption $\widehat{\cP_{\bP}}=\cP_{\bbP}$ yields the
existence of a MacWilliams identity between the corresponding weight enumerators.
As a consequence, if $\cG=\F^n$ or  $\cG=\F^{k_1}\times\ldots\times\F^{k_n}$, the above statement follows from
Kim and Oh~\cite[Thm.~2.5]{KiOh05} or Pinheiro and Firer~\cite[Thm.~1]{PiFi12};
see also the discussion after Corollary~\ref{C-HposetKraw}.
Their proofs for linear codes over fields, however, do not carry over to additive codes.

\begin{proof}[Proof of Theorem~\ref{T-DualPartPoset}]
%The last statement follows from $\widehat{\cP_{\bP}}=\cP_{\bbP}$ along with
%$|\cP_{\bbP}|=|\cP_{\bP}|=n+1$.
%\\
Let $\cP_{\bP}=P_0\mmid P_1\mmid\ldots\mmid P_n$ and
$\cP_{\bbP}=\bar{P}_0\mid\bar{P}_1\mmid \ldots\mmid\bar{P}_n$, where $P_0=\{0\}$ and $\bar{P}_0=\{\veps\}$.
Suppose first that~$\bP$ is not hierarchical.
Then~$\bbP$ is not hierarchical either.
We use $<_{\bar{\bf P}}$ and $\ideal{\ }_{\bar{\bf P}}$ for the dual partial order and the ideals in the dual poset.
Without loss of generality let $\min(\bbP)=\{1,\ldots,l\}$ for some $l\in\{1,\ldots,n\}$.
We may assume that there exists~$j$ in
$[n]\backslash\min(\bbP)$ and $i\in\min(\bbP)$ such that $i\not<_{\bar{\bf P}} j$.
(Otherwise we may disregard $\min(\bbP)$ and proceed with the non-hierarchical poset
on $\{l+1,\ldots,n\}$ since by~Remark~\ref{R-levels} the resulting partition sets are restrictions of
blocks of~$\cP_{\bP}$ and $\cP_{\bbP}$.)
We may choose~$j$ to be a minimal element in~$\bbP$ with the above specified property.
Then $\ideal{j}_{\bar{\bf P}}\subseteq\min(\bbP)\cup\{j\}\backslash\{i\}$, and thus $|\ideal{j}_{\bar{\bf P}}|=:m\leq l$.
Choose $\chi'\in\hat{\cG}$ such that $\supp(\chi')\subseteq\min(\bbP)$ and $|\supp(\chi')|=m$.
Moreover, let $\chi''=(\veps,\ldots,\veps,\chi''_j,\veps,\ldots,\veps)\in\hat{\cG}$ be such that
its $j$-th entry~$\chi''_j$ is not the principal character of~$G_j$.
Then $\chi',\,\chi''\in\bar{P}_m$.
We show that $\chi'\not\hspace*{-.7em}\widesim[2]_{\widehat{\cP_{\bP}}}\chi''$ and thus
$\widehat{\cP_{\bP}}\not=\cP_{\bbP}$.
The block~$P_n$ of~$\cP_{\bP}$ is evidently of the form (see also \cite[Lem.~2.1]{KiOh05})
\[
   P_n\!=\!\{g\in\cG\!\mid\! \max(\bP)\subseteq\supp(g)\} =\{g\in\cG\!\mid\! \min(\bbP)\subseteq\supp(g)\}
         \!=\!\{(g_1,\ldots,g_n)\!\mid\! g_i\neq0\text{ for }i\leq l\}.\phantom{\Big|}
\]
Let $|G_i|=q_i$.
Recall from~\eqref{e-prodchar} that $\inner{\chi,g}=\prod_{i=1}^n\inner{\chi_i,g_i}$ for all $(\chi,g)\in\hat{\cG}\times\cG$.
Using the specific form of~$\chi$ and~$\chi''$ and the orthogonality relations~\eqref{e-GChar} we compute
\[
  \sum_{g\in P_n}\!\inner{\chi'',g}
  =\prod_{i=1}^l(q_i-1)\!\!\prod_{\genfrac{}{}{0pt}{2}{i=l+1}{i\neq j}}^n \!\!q_i\sum_{g_j\in G_j}\!\!\inner{\chi''_j,g_j}
  \ \text{ and }\
  \sum_{g\in P_n}\!\inner{\chi',g}
  =\prod_{i=l+1}^n q_i\big(\prod_{i=1}^l \sum_{g_i\in G_i\backslash\{0\}}\inner{\chi_i',g_i}\big).
\]
%On the other hand, using $\chi'_i=\veps$ for all $i\not\in\{1,\ldots,l\}$, we obtain
%%observe that $\wt_{\text{H}}(g_1,\ldots,g_l)=l$ for any
%%$g=(g_1,\ldots,g_n)\in P_n$.
%%Moreover, $\wt_{\text{H}}(\chi'_1,\ldots,\chi'_l)=m$ and
%%This yields
%\[
%  \sum_{g\in P_n}\inner{\chi',g}
%  % =\big(\prod_{i=l+1}^n q_i\big)\hspace*{-.6ex}\sum_{g'\in G_1\times\ldots\times G_l\atop \wt_{\text{H}}(g')=l}
%   %\inner{(\chi'_1,\ldots,\chi'_l),g'}
%  =\big(\prod_{i=l+1}^n q_i\big)\big(\prod_{i=1}^l \sum_{g_i\in G_i\backslash\{0\}}\inner{\chi_i',g_i}\big).
%\]
Again by~\eqref{e-GChar} the first sum is zero, whereas the second one is not.
This shows that $\chi'\not\hspace*{-.6em}\widesim[2]_{\widehat{\cP_{\bP}}}\chi''$, contradicting the assumption
$\widehat{\cP_{\bP}}=\cP_{\bbP}$.
Hence~$\bP$ is hierarchical.

Let now $\bP=\bH(n;n_1,\ldots,n_t)$ be as in Remark~\ref{R-HierPoset} and let
$\cG=\cG_1\times\ldots\times\cG_t$ be as in~\eqref{e-Gfactors}.
Assume $|G_{s,1}|\neq|G_{s,2}|$ for some~$s$.
For $a=1,2$ put $\chi_a:=(\veps,\ldots,\veps,\chi^{(a)}_s,\veps,\ldots,\veps)\in\hat{\cG_1}\times\ldots\times\hat{\cG_t}$, where
$\chi^{(1)}_s:=(\chi_{s,1},\veps,\ldots,\veps),\,\chi^{(2)}_s:=(\veps,\chi_{s,2},\veps,\ldots,\veps)\in\hat{\cG_s}$ and $\chi_{s,1},\chi_{s,2}$ are non-principal characters.
Due to~\eqref{e-HPPweight} the characters $\chi_1,\,\chi_2$ have the same dual poset weight, and thus
$\chi_1\widesim[2]_{\cP_{\bbP}}\chi_2$.

Put $m=\sum_{i=1}^{s}n_i$.
Then $P_m=\{(g_1,\ldots,g_s,0,\ldots,0)\mid g_i\in\cG_i,\,\wt_{\text{H}}(g_s)=n_s\}$, see~\eqref{e-HPweight}.
Using $S:=\prod_{i=1}^{s-1}|\cG_i|$ and~\eqref{e-GChar} we compute
\[
  \sum_{g\in P_m}\inner{\chi_1,g}
    =S\hspace*{-.8em}\sum_{ \genfrac{}{}{0pt}{2}{g_s\in\cG_s}{\wt_{\text{H}}(g_s)=n_s}}\hspace*{-.8em}\inner{\chi_s^{(1)},g_s}
      =S\prod_{j=2}^{n_s}(|G_{s,j}|-1)\hspace*{-.8em}\sum_{g_{s,j}\in G_{s,j}\backslash\{0\}}\hspace*{-.8em}\inner{\chi_{s,1},g_{s,1}}
      =-S\prod_{j=2}^{n_s}(|G_{s,j}|-1).
%    =S\Big(\hspace*{-.8em}\sum_{g_{s,1}\neq0\atop\ g_{s,j}=0\text{ for }j>1}\hspace*{-.8em}\inner{\chi_{s,1},g_{s,1}}
%               +\hspace*{-.8em}\sum_{g_{s,1}=0\atop\wt_{\text{H}}(g_{s,2},\ldots,g_{s,n_s})=1}\hspace*{-.8em}
%                   \inner{\chi_{s,1},g_{s,1}}\Big)\\
%     &=S\Big(\prod_{j=2}^{n_s}|G_{s,j}|\hspace*{-.4em}\sum_{g_{s,1}\in G_{s,1}\backslash\{0\}}\hspace*{-.6em}
%               \inner{\chi_{s,1},g_{s,1}}+(n_s-1)\Big)
%        =S\Big(-\prod_{j=2}^{n_s}|G_{s,j}|  +  (n_s-1)\Big).
%     =-S\prod_{j=2}^{n_s}|G_{s,j}|.
\]
In the same way we obtain $\sum_{g\in P_m}\inner{\chi_2,g}=-S(|G_{s,1}|-1)\prod_{j=3}^{n_s}(|G_{s,j}|-1)$.
Now $|G_{s,1}|\neq|G_{s,2}|$ implies that $ \sum_{g\in P_m}\inner{\chi_1,g}\neq  \sum_{g\in P_m}\inner{\chi_2,g}$, and thus
$\chi_1\not\hspace*{-.6em}\widesim[2]_{\widehat{\cP_{\bP}}}\chi_2$.
Again, this contradicts our assumption.
Thus, $|G_{s,1}|=\ldots=|G_{s,n_s}|$ for all~$s$, and this concludes the proof.
\end{proof}

%%%%%%%%%%%%%%%%%%
%\begin{exa}\label{E-GroupsHier}
%Let $n=2$ and $G_1,\,G_2$ be the additive groups of~$\F:=\F_2$ and $\F^2$, respecitvely.
%We will simply write $\cG=\F\times\F^2$.
%As usual, we identify~$\hat{\cG}$ with~$\cG$, thus $\chi(g)$ becomes $(-1)^{v\cdot w}$ for $v,\,w\in\cG$.
%Let~$\bP$ be the anti-chain on $[2]=\{1,2\}$. Thus,~$\bP$ is a hierarchical poset with one level.
%Evidently $\bar{\bP}=\bP$.
%Moreover
%\[
%  P_0=\{(0|00)\},\ P_1=\{(1|00),\,(0|10),\,(0|01),\,(0|11)\},\,P_2=\{(1|10),\,(1|10),\,(1|11)\}.
%\]
%It is easy to verify that for $v=(100),\,v'=(010)\in P_1$ we have $\sum_{w\in P_1}(-1)^{v\cdot w}=2$, while
%$\sum_{w\in P_1}(-1)^{v'\cdot w}=0$.
%\end{exa}
%%%%%%%%%%%%%%%%%%%

The converse  of the previous result is true as well.
%%%%%%%%%%%%%%%%
%%%%%%%%%%%%%%%%%%%%%%%%
\begin{theo}\label{T-DualPartPoset2}
Let $\bP=\bH(n;n_1,\ldots,n_t)$ and~$\cG$ be as in~\eqref{e-Gfactors}, where
 $|G_{i,1}|=\ldots=|G_{i,n_i}|$ for all~$i=1,\ldots,t$.
Then $\widehat{\cP_{\bP}}=\cP_{\bbP}$ for the induced partitions~$\cP_{\bP}$ and~$\cP_{\bbP}$  on~$\cG$
and~$\hat{\cG}$.
As a consequence,~$\cP_{\bP}$ is reflexive.
\end{theo}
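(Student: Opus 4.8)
The plan is to prove the single refinement $\cP_{\bbP}\leq\widehat{\cP_{\bP}}$ and then upgrade it to equality by a counting argument. Once $\cP_{\bbP}$ is known to be finer than $\widehat{\cP_{\bP}}$, every block of $\widehat{\cP_{\bP}}$ is a union of blocks of $\cP_{\bbP}$, so $|\widehat{\cP_{\bP}}|\leq|\cP_{\bbP}|=n+1$. On the other hand Theorem~\ref{P-LPplus} gives $n+1=|\cP_{\bP}|\leq|\widehat{\cP_{\bP}}|$. Hence $|\widehat{\cP_{\bP}}|=n+1=|\cP_{\bbP}|$, and a refinement between two partitions with equally many blocks must be an identity, so $\widehat{\cP_{\bP}}=\cP_{\bbP}$. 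The same equality of cardinalities $|\cP_{\bP}|=|\widehat{\cP_{\bP}}|$ fed back into Theorem~\ref{P-LPplus} yields reflexivity of~$\cP_{\bP}$ with no extra work.

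Everything therefore reduces to the inclusion $\cP_{\bbP}\leq\widehat{\cP_{\bP}}$, which by~\eqref{e-simPhat} amounts to showing that for each fixed~$m$ the quantity $\sum_{g\in P_m}\inner{\chi,g}$ depends on~$\chi$ only through the dual poset weight $\wt_{\bbP}(\chi)$. First I would record, using the cumulative sums $N_r:=n_1+\cdots+n_r$, that a level~$m$ with $N_{r-1}<m\leq N_r$ corresponds by~\eqref{e-HPweight} exactly to
\[
  P_m=\{g\in\cG\mid g_i\in\cG_i\text{ for }i<r,\ \wt_{\text{H}}(g_r)=m-N_{r-1},\ g_i=0\text{ for }i>r\}.
\]
Next I would note that, by~\eqref{e-HPPweight}, the value $\wt_{\bbP}(\chi)$ pins down both $s:=\min\{i\mid\chi_i\neq\veps\}$ and $\wt_{\text{H}}(\chi_s)$, since the ranges of $\wt_{\bbP}$ for distinct~$s$ are disjoint consecutive intervals. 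So it suffices to express $\sum_{g\in P_m}\inner{\chi,g}$ in terms of $r$, $k:=m-N_{r-1}$, $s$ and $\wt_{\text{H}}(\chi_s)$ alone.

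Factoring through the product structure via~\eqref{e-prodchar}, I would write $\sum_{g\in P_m}\inner{\chi,g}$ as the product over $i<r$ of $\sum_{g_i\in\cG_i}\inner{\chi_i,g_i}$ times the restricted sum $\sum_{\wt_{\text{H}}(g_r)=k}\inner{\chi_r,g_r}$, the factors with $i>r$ being $\inner{\chi_i,0}=1$. By the orthogonality relations~\eqref{e-GChar} each factor with $i<r$ equals $|\cG_i|$ when $\chi_i=\veps$ and~$0$ otherwise, which forces a three-way split on the position of~$s$ relative to~$r$. If $r>s$, then some factor with $i<r$ carries $\chi_s\neq\veps$ and the whole sum vanishes. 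If $r<s$, then $\chi_i=\veps$ for $i\leq r$, so the sum is $\prod_{i<r}|\cG_i|$ times the pure count $|\{g_r\mid\wt_{\text{H}}(g_r)=k\}|$, a constant independent of~$\chi$. In the critical case $r=s$ the surviving factor is $\prod_{i<s}|\cG_i|$ times $\sum_{\wt_{\text{H}}(g_s)=k}\inner{\chi_s,g_s}$, and this is the one place where the hypothesis $|G_{s,1}|=\cdots=|G_{s,n_s}|$ is indispensable: it identifies $\cG_s$ with a Hamming space over a single alphabet of size $q_s:=|G_{s,1}|$, so by Example~\ref{E-Z6Z8}(c) that restricted sum equals the Krawtchouk value $K_k^{(n_s,q_s)}\bigl(\wt_{\text{H}}(\chi_s)\bigr)$ of~\eqref{e-Krawtclassic}, depending on~$\chi_s$ only through $\wt_{\text{H}}(\chi_s)$. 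In all three cases the outcome is a function of $m$ and $\wt_{\bbP}(\chi)$ alone, which establishes the inclusion.

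The orthogonality evaluations are routine; I expect the main obstacle to be organisational rather than deep. One must verify that $\wt_{\bbP}(\chi)$ genuinely determines the pair $(s,\wt_{\text{H}}(\chi_s))$ through the interval-disjointness bookkeeping, and one must arrange the case split so that the lone nontrivial $\chi$-dependent contribution lands precisely on the top nonzero block~$\cG_s$, where the classical Hamming Krawtchouk identity of Example~\ref{E-Z6Z8}(c) applies. Once that is in place, the counting step together with Theorem~\ref{P-LPplus} delivers both $\widehat{\cP_{\bP}}=\cP_{\bbP}$ and the reflexivity of~$\cP_{\bP}$.
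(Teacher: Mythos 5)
Your proposal is correct and follows essentially the same route as the paper's own proof: the identical three-way case split on the position of the top nonzero level of $g\in P_m$ relative to the first non-principal level of $\chi$ (orthogonality~\eqref{e-GChar} kills one case, a pure count handles another, and the classical Krawtchouk identity of Example~\ref{E-Z6Z8}(c) handles the critical case $r=s$, which is exactly where the hypothesis $|G_{i,1}|=\ldots=|G_{i,n_i}|$ enters), followed by the same counting argument via Theorem~\ref{P-LPplus} that upgrades the refinement $\cP_{\bbP}\leq\widehat{\cP_{\bP}}$ to equality and yields reflexivity. The only cosmetic difference is that you parametrize the dual blocks by $\wt_{\bbP}(\chi)$, i.e.\ by the pair $(s,\wt_{\text{H}}(\chi_s))$, where the paper uses the index bookkeeping of~\eqref{e-ml}--\eqref{e-barPell}; these encode the same data.
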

%%%%%%%%%%%%%%%%%%%%%%%%%%

\begin{proof}
First of all, the last statement follows from $|\cP_{\bbP}|=|\cP_{\bP}|=n+1$ with Theorem~\ref{P-LPplus}.
\\
Let $\cP_{\bP}=P_0=\{0\}\mmid P_1\mmid\ldots\mmid P_n$ and
$\cP_{\bbP}=\bar{P}_0=\{\veps\}\mid\bar{P}_1\mmid \ldots\mmid\bar{P}_n$.
Define $n_0:=0$ and
\[
    N_s:=\sum_{i=0}^{s-1} n_i\text{ for }s=1,\ldots,t+1\text{ and }N_0:=0.
\]
For each $\ell,m=1,\ldots,n$ there exists unique indices~$s_m\in\{1,\ldots,t\}$ and $r_\ell\in\{0,\ldots,t-1\}$ such that
\begin{equation}\label{e-ml}
   m=N_{s_m}+\mu_m,\text{ where }1\leq\mu_m\leq n_{s_m}\ \text{ and }
   \ell=n-N_{t-r_\ell+1}+\lambda_\ell,\text{ where }1\leq\lambda_\ell\leq n_{t-r_\ell}.
\end{equation}
Set $s_0=\mu_0=0$ and $r_0=\lambda_0=0$.
%For $m=0=\ell$ set $s_m=\mu_m=r_\ell=\lambda_\ell=0$.
Then~\eqref{e-HPweight} and~\eqref{e-HPPweight} show that
\begin{align}
  &P_m=\{(g_1,\ldots,g_{s_m-1},g_{s_m},0,\ldots,0)\in\cG\mid g_i\in\cG_i\text{ for }i\leq s_m\text{ and }\wt_{\text{H}}(g_{s_m})=\mu_m\},   \label{e-Pm}\\[1ex]
  &\bar{P}_\ell=\{(\veps,\ldots,\veps,\chi_{t-r_\ell},\chi_{t-r_\ell+1},\ldots,\chi_t)\in\hat{\cG}\mid
   \chi_i\in\hat{\cG_i}\text{ for }i\geq t-r_\ell\text{ and }\wt_{\text{H}}(\chi_{t-r_\ell})=\lambda_\ell\}. \label{e-barPell}
\end{align}
Let $\chi\in\hat{\cG}$, say $\chi\in\bar{P}_\ell$. We show that for each~$m$, the sum $\sum_{g\in P_m}\inner{\chi,g}$
does not depend on the choice of~$\chi$, but just on the index~$\ell$.
It suffices to consider $m>0$ and $\ell>0$.
For ease of notation write $s:=s_m,\,\mu:=\mu_m,\,r:=r_\ell,$ and $\lambda:=\lambda_\ell$.
Furthermore, put $|G_{i,j}|=q_i$ for all $j=1,\ldots,n_i$. Hence $|\cG_i|=q_i^{n_i}$.
\\[.7ex]
1) Let $s<t-r$. Then $\sum_{g\in P_m}\inner{\chi,g}=|P_m|=\prod_{i=1}^{s-1}q_i^{n_i}(q_s-1)^{\mu}\binom{n_s}{\mu}$.
\\
2) For $s=t-r$ we compute
$\sum_{g\in P_m}\inner{\chi,g}=\prod_{i=1}^{s-1}q_i^{n_i}\sum_{\genfrac{}{}{0pt}{2}{g_s\in\cG_s}{\wt_{\text{H}}(g_s)=\mu}}\inner{\chi_s,g_s}
    =\prod_{i=1}^{s-1}q_i^{n_i}K_{\mu}^{(n_s,q_s)}(\lambda)$, where the latter is the classical Krawtchouk coefficient
from~\eqref{e-Krawtclassic}.
\\
3) Finally, let $s>t-r$.
Then
$\sum_{g\in P_m}\inner{\chi,g}=\sum_{(g_1,\ldots,g_{s-1})}
   \sum_{\genfrac{}{}{0pt}{2}{g_s\in\cG_s}{\wt_{\text{H}}(g_s)=\mu}}\inner{\chi,g}$,
where the first summation is over all $(g_1,\ldots,g_{s-1})\in\cG_1\times\ldots\times\cG_{s-1}$.
This equals
\[
%  \sum_{g\in P_m}\inner{\chi,g}
%    &=\sum_{(g_1,\ldots,g_{s-1})}\sum_{g_s\in G^s\atop\wt_{\text{H}}(g_s)=\mu}\inner{\chi,g}\\
    \Big(\sum_{(g_1,\ldots,g_{s-1})}
       \inner{(\chi_1,\ldots,\chi_{s-1}),(g_1,\ldots,g_{s-1})}\Big)
       \Big(\sum_{\genfrac{}{}{0pt}{2}{g_s\in\cG_s}{\wt_{\text{H}}(g_s)=\mu}}\inner{\chi_s,g_s}\Big)=0,
\]
where the last identity follows because the first factor is zero due to~\eqref{e-GChar} and $\chi_{t-r}\neq\veps$.

In all three cases $\sum_{g\in P_m}\inner{\chi,g}$ depends only on the index~$\ell$ of the block $\bar{P}_\ell$
containing~$\chi$.
As a consequence, $\cP_{\bbP}\leq\widehat{\cP_{\bP}}$, and hence
$|\cP_{\bbP}|\geq|\widehat{\cP_{\bP}}|\geq|\cP_{\bP}|=|\cP_{\bbP}|$.
Thus  $\cP_{\bbP}=\widehat{\cP_{\bP}}$, and this concludes the proof.
\end{proof}

In the proof we also computed the Krawtchouk coefficients.

%%%%%%%%%%%%%%%%%%
\begin{cor}\label{C-HposetKraw}
Let~$\cP=\bH(n;n_1,\ldots,n_t)$ and~$\cG$ be as in~\eqref{e-Gfactors}.
Assume $|G_{i,j}|=q_i$ for all $i,j$.
With the notation as in~\eqref{e-ml} --~\eqref{e-barPell}, the Kratwchouk matrix of $(\cP_{\bP},\cP_{\bbP})$
is $K\in\C^{(n+1)\times(n+1)}$, where
\[
  K_{\ell,m}=\left\{\begin{array}{cl}
       \prod_{i=1}^{s_m-1}\hspace*{-.2ex}q_i^{n_i}\,(q_{s_m}-1)^{\mu_m}\binom{n_{s_m}}{\mu_m},&\text{ if }s_m<t-r_\ell,\\
       0,&\text{ if }s_m>t-r_\ell,\\
       \prod_{i=1}^{s_m-1}\hspace*{-.2ex}q_i^{n_i}\,K_{\mu_m}^{(n_{s_m},q_{s_m})}(\lambda_\ell),&\text{ if }s_m=t-r_\ell.
       \end{array}\right.
\]
Thus by~\eqref{e-MacWLinear}, any code $\cC\leq\cG$ satisfies
\begin{equation}\label{e-MacWlinearPoset}
     (B_0,\ldots,B_n)=\frac{1}{|\cC|}(A_0,\ldots,A_n)K,
\end{equation}
where $A_m=|\cC\cap P_m|$ and $B_m=|\cC^\perp\cap \bar{P}_m|$ for all~$m$.
%where $K_{\mu_m}^{(n_{s_m},q)}(\lambda_\ell)$ is the classical Krawtchouk coefficient from~\eqref{e-Krawtclassic}.
\end{cor}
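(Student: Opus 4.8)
The plan is to obtain both assertions by reading them off from work already in place, since the Corollary merely collects the consequences of the proof of Theorem~\ref{T-DualPartPoset2}. By Definition~\ref{D-FPart}, the Krawtchouk coefficient $K_{\ell,m}$ of the pair $(\cP_{\bP},\cP_{\bbP})$ equals $\sum_{g\in P_m}\inner{\chi,g}$ evaluated at any single $\chi\in\bar{P}_\ell$; this is well defined precisely because $\widehat{\cP_{\bP}}=\cP_{\bbP}$, which Theorem~\ref{T-DualPartPoset2} supplies. Thus the first assertion reduces entirely to evaluating these sums, and under the standing hypothesis $|G_{i,j}|=q_i$ this is exactly the computation already carried out, case by case, inside that proof.

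Concretely, I would fix blocks $P_m$ and $\bar{P}_\ell$ and describe them through the index data of~\eqref{e-ml}--\eqref{e-barPell}, attaching to $m$ the pair $(s_m,\mu_m)$ and to $\ell$ the pair $(r_\ell,\lambda_\ell)$. The trichotomy $s_m<t-r_\ell$, $s_m=t-r_\ell$, $s_m>t-r_\ell$ is then exactly the threefold case distinction in the proof of Theorem~\ref{T-DualPartPoset2}: in the first regime every relevant character value is trivial, so the sum collapses to $|P_m|=\prod_{i=1}^{s_m-1}q_i^{n_i}(q_{s_m}-1)^{\mu_m}\binom{n_{s_m}}{\mu_m}$; in the second the inner sum over the level $\cG_{s_m}$ reproduces the classical Krawtchouk polynomial of~\eqref{e-Krawtclassic}, giving $\prod_{i=1}^{s_m-1}q_i^{n_i}K_{\mu_m}^{(n_{s_m},q_{s_m})}(\lambda_\ell)$; and in the third a lower-index factor carrying a nontrivial character forces the sum to vanish by the orthogonality relations~\eqref{e-GChar}. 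Matching these three outcomes to the indices $\ell,m$ produces the displayed matrix $K$, so no new estimate is required here.

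For the MacWilliams identity I would invoke reflexivity: since Theorem~\ref{T-DualPartPoset2} gives $\widehat{\cP_{\bP}}=\cP_{\bbP}$ and $\cP_{\bP}$ is reflexive, the partitions $\cP_{\bP}$ and $\cP_{\bbP}$ are mutually dual, so the hypotheses of Theorem~\ref{T-MacWPart} hold with $\cP=\cP_{\bP}$ and $\cQ=\cP_{\bbP}$. The linear reformulation~\eqref{e-MacWLinear} of that theorem then applies verbatim, with $A_m=|\cC\cap P_m|$ and $B_m=|\cC^\perp\cap\bar{P}_m|$, yielding~\eqref{e-MacWlinearPoset}.

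Since the substantive work lives in Theorem~\ref{T-DualPartPoset2}, there is no genuine obstacle here; the only point deserving care -- bookkeeping rather than difficulty -- is lining up conventions. One must verify that the case distinction on $s_m$ against $t-r_\ell$ coming from~\eqref{e-HPweight}--\eqref{e-HPPweight} is recorded with $\ell$ as the row index and $m$ as the column index, consistently with both Definition~\ref{D-FPart} and the row-times-matrix form of~\eqref{e-MacWLinear}. Once this indexing is pinned down, both parts of the Corollary are immediate.
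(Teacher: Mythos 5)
Your reconstruction does follow the paper's own route: the paper's entire justification for this corollary is the sentence ``In the proof we also computed the Krawtchouk coefficients'' together with the appeal to~\eqref{e-MacWLinear}, and your first two paragraphs reproduce that faithfully. In particular, the first assertion of the corollary (the displayed formula for $K_{\ell,m}$) is exactly the three-case evaluation of $\sum_{g\in P_m}\inner{\chi,g}$, $\chi\in\bar{P}_\ell$, inside the proof of Theorem~\ref{T-DualPartPoset2}, and is correct.

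However, the point you set aside as ``bookkeeping rather than difficulty'' is precisely where the argument has a genuine gap -- in your write-up and, as literally printed, in the corollary itself. The displayed matrix is the Krawtchouk matrix of $(\cP_{\bP},\cP_{\bbP})$: its entries sum over the blocks $P_m\subseteq\cG$ and are evaluated at characters. But \eqref{e-MacWLinear}, i.e.\ Theorem~\ref{T-MacWPart} applied with initial partition $\cQ=\cP_{\bbP}$ on $\hat{\cG}$ and $\cP=\cP_{\bP}=\widehat{\cQ}$, produces $(B_0,\ldots,B_n)$ from $(A_0,\ldots,A_n)$ via the Krawtchouk matrix of $(\cP_{\bbP},\cP_{\bP})$, whose entries sum over the blocks $\bar{P}_\ell\subseteq\hat{\cG}$ and are evaluated at group elements. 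These two matrices are \emph{not} transposes of each other, and once $t\geq 2$ they genuinely differ; they happen to coincide for the Hamming case $t=1$ and for the chain, which is why the slip is easy to miss. A concrete check: take $\bP=\bH(3;2,1)$, all $G_{i,j}=\Z_2$, and $\cC=\{000,100\}$. Then $P_0=\{000\}$, $P_1=\{100,010\}$, $P_2=\{110\}$, $P_3=\{g\mid g_3\neq 0\}$ and $\bar{P}_0=\{000\}$, $\bar{P}_1=\{001\}$, $\bar{P}_2=\{100,101,010,011\}$, $\bar{P}_3=\{110,111\}$, so $A=(1,1,0,0)$, $\cC^{\perp}=\{000,001,010,011\}$, $B=(1,1,2,0)$. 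The displayed $K$ has rows $(1,2,1,4)$, $(1,2,1,-4)$, $(1,0,-1,0)$, $(1,-2,1,0)$, whence $\frac{1}{|\cC|}(A_0,\ldots,A_3)K=(1,2,1,0)\neq B$, while $\frac{1}{|\cC^{\perp}|}(B_0,\ldots,B_3)K=(1,1,0,0)=A$. So ``applies verbatim'' is exactly what fails: the displayed matrix satisfies the \emph{reverse} identity $(A_0,\ldots,A_n)=\frac{1}{|\cC^{\perp}|}(B_0,\ldots,B_n)K$, whereas \eqref{e-MacWlinearPoset} in the stated direction requires the mirrored matrix, obtained by rerunning the three-case computation with the roles of $(s_m,\mu_m)$ and $(t-r_\ell,\lambda_\ell)$ interchanged: its entries are $\prod_{i=t-r_\ell+1}^{t}q_i^{n_i}(q_{t-r_\ell}-1)^{\lambda_\ell}\binom{n_{t-r_\ell}}{\lambda_\ell}$, $0$, and $\prod_{i=s_m+1}^{t}q_i^{n_i}\,K_{\lambda_\ell}^{(n_{s_m},q_{s_m})}(\mu_m)$ in the three cases. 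A complete proof must either carry out this second computation or restate the identity in the other direction; neither your proposal nor the pointer to Theorem~\ref{T-DualPartPoset2} supplies this step.
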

%%%%%%%%%%%%%%%%%%

In the  case where $\cG=G^n$, i.e., $G_{i,j}=G$ for all~$i,\,j$, then, using $|G|=q$, the above results in
\begin{equation}\label{e-Krawtq}
  K_{\ell,m}=\left\{\begin{array}{cl}
       q^{m-\mu_m}(q-1)^{\mu_m}\binom{n_{s_m}}{\mu_m},&\text{ if }s_m<t-r_\ell,\\
       0,&\text{ if }r_\ell>t-s_m,\\
       q^{m-\mu_m} K_{\mu_m}^{(n_{s_m},q)}(\lambda_\ell),&\text{ if }s_m=t-r_\ell.
       \end{array}\right.
\end{equation}

These results cover several cases in the literature.
Let $\cG=\F^n$, which we identify with its character module with the aid of a generating character~$\chi$  as in
Theorem~\ref{T-RFrob}.
Thus, the dual code~$\cC^{\perp}$ of $\cC\subseteq\F^n$ is simply the orthogonal space $\cC^{\pperp}\subseteq\F^n$
as in that theorem, and we will simply write~$\cC^{\perp}$.
As we have discussed in Definition~\ref{D-DualPartchi} and Example~\ref{E-FPartNotIndep}, the dual $\wcP^{^{[\chi]}}$ of
a partition~$\cP$ does in general depend on~$\chi$.
It is easy to see, however, that this dependence does not materialize for partitions $\cP_{\bP}$ induced by a poset~$\bP$.
This is due to the fact that the poset weight is based on the Hamming weight so that all computations simply reduce to the orthogonality relations~\eqref{e-GChar}; see the previous proofs.
For hierarchical posets, the dual partition is induced by the dual poset, and thus obviously independent from the choice
of~$\chi$.
All of this ensures that our setting applies to the situations discussed in the literature, and which we will address next.

If $\cG=\F^n$ (thus $G_i=\F$ for all~$i$), the identity~\eqref{e-MacWlinearPoset} with the Krawtchouk coefficients
in~\eqref{e-Krawtq} corresponds to the identity presented by Kim and Oh~\cite[Thm.~4.4]{KiOh05}.
Specializing even further to the hierarchical poset $\bP=\bH(n;n)$ we simply obtain the classical Krawtchouk coefficients and the MacWilliams identity for the Hamming weight distribution.
On the other hand, specializing to the poset $\bP=\bH(n;1,\ldots,1)$ and using $1<\ldots<n$, see Example~\ref{E-Poset1}(b),
we obtain the MacWilliams identity for the Rosenbloom-Tsfasman weight.
The Krawtchouk coefficients  are
\begin{equation}\label{e-KrawChain}
     K_{\ell,m}=\left\{\begin{array}{cl}
              1,&\text{if }m=0,\\
         q^{m-1}(q-1),&\text{if }\ell<n+1-m\neq n+1,\\ -q^{m-1},&\text{if }\ell=n+1-m,\\ 0,&\text{if }\ell>n+1-m,\end{array}\right.
\end{equation}
which coincides with \cite[Thm.~3.1]{DoSk02} by Dougherty and Skriganov.
Pinheiro and Firer~\cite{PiFi12} generalized the results of~\cite{KiOh05} to poset block structures.
In our terminology this is the case where $\cG_i=\F^{k_i}$ for all~$i$, thus
$\cG=\F^{k_1}\times\ldots\times\F^{k_n}$, and our results agree with those in \cite[Thm.~1, Thm.~2]{PiFi12}.

%We summarize the situation in the following remark.
%%%%%%%%%%%%%%%%%%%
%\begin{rem}\label{R-Pinheiro}
%Pinheiro and Firer~\cite{PiFi12} consider codes in $\F^N=\F^{k_1}\times\ldots\times\F^{k_n}$.
%Thus, vectors are written as $v=(v_1,\ldots,v_n)$, where $v_i\in\F^{k_i}$.
%For a poset $\bP=([n],\leq)$ they define the poset weight of $v\in\F^N$ as  $\wt_{\bP}(v)=|\ideal{\supp_{[n]}(v)}|$, where
%$\supp_{[n]}(v)=\{i\mid v_i\neq0\}\subseteq[n]$.
%Thus, $\wt_{\bP}(v)\in\{0,\ldots,n\}$.
%Analogously, the dual poset $\bbP$ on~$[n]$ leads to a weight $\wt_{\bbP}$.
%Letting $G_i$ be the additive group of~$\F^{k_i}$ for $i=1,\ldots,n$, this corresponds exactly to our general situation above
%of poset weights on $\cG=G_1\times\ldots\times G_n$.
%In \cite[Thm.~1, Thm.~2]{PiFi12},  Pinheiro and Firer show that the weight enumerators for $\wt_{\bP}$ and $\wt_{\bbP}$
%satisfy a MacWilliams identity if and only if $\bP$ is a hierarchical poset on~$[n]$ and $k_i=k_j$ whenever $i,j$ are in the same level~$\Gamma_\ell$ of~$\bP$ (see our Definition~\ref{D-hierposet}).
%This agrees with our Theorems~\ref{T-DualPartPoset} and~\ref{T-DualPartPoset2}.
%In \cite[Thm.~3]{PiFi12}  the corresponding Krawtchouk coefficients are computed.
%The expressions amount to the same as our coefficients in Corollary~\ref{C-HposetKraw} above.
%\end{rem}
%%%%%%%%%%%%%%%%%%%%

\medskip
We close this section by extending the above result to matrix codes over finite commutative Frobenius rings endowed with a poset metric.
This will come as an immediate consequence of the MacWilliams identity for the induced product partition and the induced
symmetrized partition derived in Section~\ref{SS-IndPart}.
Again, if~$\bP$ is a chain, we obtain the MacWilliams identity for the Rosenbloom-Tsfasman
metric for matrix codes over Frobenius rings and thus recover a result obtained by
Dougherty and Skriganov~\cite{DoSk02} for the special case of matrix codes over fields.

Let~$R$ be a finite commutative Frobenius ring of order~$q$.
The free module $R^{s\times n}$ of all $(s\times n)$-matrices over~$R$ may be
identified with the product $R^{sn}=R^n\times\ldots\times R^n$
by mapping $M\in R^{s\times n}$ to the vector $(M_1,\ldots,M_s)\in R^{sn}$, where~$M_i$ is the $i$-th row of~$M$.
The standard bilinear form $(v,w)\mapsto v\cdot w$ on $R^{sn}$ thus reads as $(M,N)\mapsto \text{Tr}(MN\T)$
on $R^{s\times n}$, where $\text{Tr}$ denotes the trace.
As  in Section~\ref{SS-FrobRing} we identify~$R$ with~$\hat{R}$ via a fixed generating character~$\chi$.
Then the identification of~$R^{sn}$ with $\widehat{R^{sn}}$ as in Theorem~\ref{T-RFrob}(a) is given by the isomorphism
$M\mapsto \chi_M$, where $\chi_M(N)=\chi(\text{Tr}(MN\T))$.

Let $\bP=([n],\leq)$ be a hierarchical poset on~$[n]$.
The product partition enumerator of a code~$\cC\subseteq R^{s\times n}$ is given by
\[
    \pe_{(\cP_{\bP})^n,\cC}=\sum_{M\in\cC}\prod_{i=1}^s X_{i,\wt_{\bP}(M_i)}\in\C[X_{i,j}\mid i=1,\ldots,s,\,j=0,\ldots,n].
\]
The coefficient of a monomial $\prod_{i=1}^s X_{i,m_i}$ is thus the cardinality of all matrices in~$\cC$ whose $i$-th row has poset weight~$m_i$.
By Theorem~\ref{T-ProdPartDual}, $\widehat{(\cP_{\bP})^n}=\widehat{\cP_{\bP}}^n=(\cP_{\bbP})^n$.
Hence Theorem~\ref{T-weprodpart} and~\eqref{e-MacWProd} yield
\begin{equation}\label{e-MacWRTMatrix}
  \pe_{(\cP_{\bbP})^n,\cC^\perp}({\bf X}_i,i=1,\ldots,s)
  =\frac{1}{|\cC|}\pe_{(\cP_{\bP})^n,\cC}(K{\bf X}_i,i=1,\ldots,s),
\end{equation}
where ${\bf X}_i=(X_{i,0},\ldots,X_{i,n})\T$ and~$K$ is the Krawtchouk matrix of $(\cP_{\bbP},\,\cP_{\bP})$.

If~$\bP$ is the chain $1<\ldots< n$ and $R=\F_q$ is a field, the
weight~$\wt_{\bP}$ is the Rosenbloom-Tsfasman weight on~$\F_q^n$,  and the enumerator $\pe_{(\cP_{\bP})^n,\cC}$
has been coined the T-enumerator by Dougherty and Skriganov~\cite[Sec.~3]{DoSk02}.
The MacWilliams identity in~\eqref{e-MacWRTMatrix} with the Krawtchouk coefficients in~\eqref{e-KrawChain}
appears in~\cite[Thm.~3.1]{DoSk02}, where it has been derived by
direct technical computations tailored to the specific situation of this  weight and without the aid of character theory.

In the same way, Theorem~\ref{T-wesymmpart} yields a MacWilliams identity for the symmetrized partition $(\cP_{\bP})^s_{\text{sym}}$ on~$R^{s\times n}$ and its dual partition.
Again, for fields this has been derived in~\cite[Thm.~3.2]{DoSk02}, where the enumerator is called the H-enumerator.
Another direct computational proof has been presented by Trinker~\cite{Tr09}, where the composition vector and symmetrized partition enumerator of the Rosenbloom-Tsfasman weight are called the type distribution and the type polynomial,
respectively.\footnote{In~\cite{DoSk02} and~\cite{Tr09}, the authors use a reversed inner dot product.
This results in the partition~$\cP_{\bP}$ being self-dual, in our terminology, and no dual poset is needed.}

We conclude by mentioning that the \emph{cumulative Rosenbloom-Tsfasman weight}, defined as
$\wt_{\bP}(M):=\sum_{i=1}^s\wt_{\bP}(M_i)$ and introduced in~\cite{RoTs97}, does not satisfy a MacWilliams identity.
In~\cite{DoSk02}, the authors present a pair of codes~$\cC_1,\,\cC_2$ with the same
cumulative Rosenbloom-Tsfasman weight enumerator, but where the dual codes~$\cC_1^{\perp},\,\cC_1^{\perp}$
have different enumerators.

%%%%%%%%%%%%%%%%%%%%
\appendix
\section{Proof of Theorem~\ref{T-SymmPartDual}}\label{S-App}
%%%%%%%%%%%%%%%%%%%%%
Let $\cP=P_1\mmid\ldots\mmid P_{M+1}$, where $P_{M+1}=\{0\}$.

Note that by definition of the induced symmetrized partition we have for any $g,g'\in G^n$
\begin{equation}\label{e-SymmProd}
  g'\widesim[2]_{\cPnsym}g\Longleftrightarrow \exists\;\tau\in S_n: \tau(g')\widesim[2]_{\cP^n}g,
\end{equation}
where $\tau(g'_1,\ldots,g'_n)=(g'_{\tau(1)},\ldots,g'_{\tau(n)})$ and~$S_n$ is the symmetric group on~$n$ symbols.

We show first $\widehat{\cP}_{\text{sym}}^{\,n}\leq\widehat{\cPnsym}$.
Let $\chi,\,\chi'\in\hat{G}^n$ such that $\chi\widesim[2]_{\widehat{\cP}_{\text{sym}}^{\,n}}\chi'$.
By~\eqref{e-SymmProd} there exists a permutation $\tau\in S_n$ such that
$\tau(\chi')\widesim[2]_{\wcP^n}\chi$.
Let~$Q$ be a block of~$\cPnsym$.
Then, by definition of the symmetrized partition, there exists
$(m_1,\ldots,m_n)\in\{1,\ldots,M+1\}^n$ and a subset $S'\subseteq S_n$ such that
$Q=\bigcup_{\sigma\in S'}\hspace*{-2.5em}\raisebox{.5ex}{$\cdot$}\hspace*{2.2em}P_{m_{\sigma(1)}}\times\ldots\times P_{m_{\sigma(n)}}$.
Using $\wcP^n=\widehat{\cP^n}$ from Theorem~\ref{T-ProdPartDual} we have
\[
  \sum_{g\in P_{m_{\sigma(1)}}\times\ldots\times P_{m_{\sigma(n)}}}\inner{\tau(\chi'),g}
  =\sum_{g\in P_{m_{\sigma(1)}}\times\ldots\times P_{m_{\sigma(n)}}}\inner{\chi,g}
\]
for all $\sigma\in S'$, and thus $\sum_{g\in Q}\inner{\tau(\chi'),g}=\sum_{g\in Q}\inner{\chi,g}$.
But,~$\tau^{-1}(Q)=Q$, and thus the first sum is $\sum_{g\in Q}\inner{\chi',g}$.
This shows $\chi\widesim[2]_{\widehat{\cPnsym}}\chi'$ and hence $\widehat{\cP}_{\text{sym}}^{\,n}\leq\widehat{\cPnsym}$.
\footnote{From this one can easily conclude equality if~$\cP$ is reflexive. Indeed, in that case $|\cP|=|\wcP|$ and thus
$|\cPnsym|=|\widehat{\cP}_{\text{sym}}^{\,n}|$.
By Remark~\ref{R-DualNega}(c) $\widehat{\cP}_{\text{sym}}^{\,n}\leq\widehat{\cPnsym}$ implies
$\cPnsym=\wwcP_{\text{sym}}^{\,n}\leq
\widehat{\phantom{\big|}\hspace*{1.8em}}\hspace*{-2.1em}\widehat{\cP}_{\text{sym}}^{\,n}
\leq
 \widehat{\phantom{\big|}\hspace*{1.8em}}\hspace*{-2.1em}\widehat{\cPnsym}\leq\cPnsym$,
%\widehat{\widehat{\cPnsym}}\leq\cPnsym$,
and thus we have equality at each step, which in turn yields
$\widehat{\cPnsym}=\widehat{\cP}_{\text{sym}}^{\,n}$.}

For the converse, let $\chi\widesim[2]_{\widehat{\cPnsym}}\chi'$.
By the above it suffices to show that there exists some $\tau\in S_n$ such that $\tau(\chi')\widesim[2]_{\wcP^n}\chi$.
Put
\[
  A_{i,j}=\sum_{g\in P_j}\inner{\chi_i,g} \text{ and } B_{i,j}=\sum_{g\in P_j}\inner{\chi_i',g}
  \text{ for all } i=1,\ldots,n,\,j=1,\ldots,M+1.
\]
We will show that there exists a permutation $\tau\in S_n$ such that
\begin{equation}\label{e-ABij}
  A_{i,j}=B_{\tau(i),j}\text{ for all } i,\,j,
\end{equation}
because this implies $\chi'_{\tau(i)}\widesim_{\wcP}\chi_i$ for all~$i$, thus
$\tau(\chi')\widesim[2]_{\widehat{\cP}^{n}}\chi$ and consequently $\chi'\widesim[2]_{\widehat{\cP}_{\text{sym}}^{\,n}}\chi$.

Note that $A_{i,M+1}=B_{i,M+1}=1$.
Furthermore, for all $(j_1,\ldots,j_n)\in\{1,\ldots,M+1\}^n$
\[
  \sum_{g\in P_{j_1}\times\ldots\times P_{j_n}}\inner{\chi,g}
  =\sum_{g\in P_{j_1}\times\ldots\times P_{j_n}}\prod_{i=1}^n\inner{\chi_i,g_i}
  =\prod_{i=1}^n\sum_{g\in P_{j_i}}\inner{\chi_i,g}=\prod_{i=1}^n A_{i,j_i}.
\]
Making use of $P_{M+1}=\{0\}$, the blocks of~$\cPnsym$ can be described as follows, see~\eqref{e-CompQ}.
Define the set $\cS:=\{(s_1,\ldots,s_M)\in\N_0^M\mid |s|\leq n\}$, where $|s|:=\sum_{m=1}^M s_m$.
For each $s\in\cS$, let
\begin{equation}\label{e-tvector}
  t:=t^{(s)}:=(\underbrace{1,\ldots,1}_{s_1},\underbrace{2,\ldots,2}_{s_2},\ldots,\underbrace{M,\ldots,M}_{s_M},\underbrace{M+1,\ldots,M+1}_{n-|s|})
  \in\{1,\ldots,M+1\}^n.
\end{equation}
Then $\cPnsym=(Q_s)_{s\in\cS}$, where
\[
  Q_s:=\bigcup_{\sigma\in S_n}P_{t_{\sigma(1)}}\times\ldots\times P_{t_{\sigma(n)}}.
\]
Note that this union is in general not disjoint.
But every subset $P_{t_{\sigma(1)}}\times\ldots\times P_{t_{\sigma(n)}}$ in this union
appears with the same multiplicity.
It is given by the cardinality of the stabilizer subgroup of~$t=t^{(s)}$ in~$S_n$ and denoted by~$f_s$.
Thus we have
\[
  \sum_{g\in Q_s}\inner{\chi,g}=f_s^{-1}\sum_{\sigma\in S_n}\prod_{i=1}^n\sum_{g\in P_{t_{\sigma(i)}}}\inner{\chi_i,g}
  =f_s^{-1}\sum_{\sigma\in S_n}\prod_{i=1}^n A_{\sigma(i),t_i}.
\]
Now our assumption $\chi\widesim[2]_{\widehat{\cPnsym}}\chi'$ along with $A_{i,M+1}=B_{i,M+1}=1$ for all~$i$ implies
\begin{equation}\label{e-chiQs}
  \sum_{\sigma\in S_n}\prod_{i=1}^{|s|} A_{\sigma(i),t_i}=\sum_{\sigma\in S_n}\prod_{i=1}^{|s|} B_{\sigma(i),t_i}
  \text{ for all }
  s\in\cS \text{ and where~$t$ is as in~\eqref{e-tvector}}.
\end{equation}
Note that $t_i\leq M$ for all~$t_i$ appearing in~\eqref{e-chiQs}.

The crucial step in order to establish~\eqref{e-ABij} is the fact that the expressions in these identities are
evaluations of the elementary multi-symmetric polynomials.
In order to make this precise, we consider the polynomial ring $R:=\Q[X_{i,j}\mid i=1,\ldots,n,\,j=1,\ldots,M]$ in $nM$ independent
indeterminates.
The symmetric group~$S_n$ acts on~$R$ via $\sigma(X_{i,j})=X_{\sigma(i),j}$ for all $i,j$.
It is a classical result, see~\cite{Noe15} or~\cite[Thm.~1]{Bri04}, that the invariant ring~$R^{S_n}$ under this group
action is generated by the elementary multi-symmetric polynomials, that is, $R^{S_n}:=\Q[e_s\mid s\in\cS]$, where
\[
   e_s=\sum_{\sigma\in S_n}\prod_{i=1}^{|s|}X_{\sigma(i),t_i},\text{ where~$t$ is as in~\eqref{e-tvector}.}
\]
More important for our purposes, however, is the fact that, similar to the elementary symmetric polynomials, the elementary
multi-symmetric polynomials appear as coefficients of a certain polynomial, see for instance~\cite[(1.1)]{Vac05}.
Namely, in the polynomial ring  $R[T_1,\ldots,T_M]$ with~$M$ independent indeterminates we have
\[
   \sum_{s\in\cS}T^se_s=\prod_{i=1}^n\Big(1+\sum_{j=1}^MT_jX_{i,j}\Big),
\]
where $T^s:=T_1^{s_1}\cdot\ldots\cdot T_M^{s_M}$.
Now~\eqref{e-chiQs} yields
\[
  f:=\prod_{i=1}^n\Big(1+\sum_{j=1}^MT_jA_{i,j}\Big)=\prod_{i=1}^n\Big(1+\sum_{j=1}^MT_jB_{i,j}\Big) \text{ in }
  \C[T_1,\ldots,T_M].
\]
In other words, we have two factorizations of the polynomial $f\in\C[T_1,\ldots,T_M]$ into linear, thus prime, factors.
Since all factors are normalized with constant term equal to~$1$, they must coincide up to ordering.
All of this shows that there exists a permutation~$\tau\in S_n$ satisfying~\eqref{e-ABij},
and this concludes the proof.\hfill{\large $\Box$}

%%%%%%%%%%%%%%%%%%%%
\section*{Acknowledgment}
I would like to thank Marcus Greferath and Navin Kashyap for very inspiring suggestions concerning
this research project.
A major part of the final write-up took place during a research stay at the University of Z\"urich, and I
am grateful to Joachim Rosenthal and his research group for the generous hospitality.

%%%%%%%%%%%%%%%%%%%%%%%%%%%%%%%%%%%%%%%%%%%%%%
\bibliographystyle{abbrv}
\bibliography{literatureAK,literatureLZ}
%%%%%%%%%%%%%%%%%%%%%%%%%%%%%%%%%%%%%%%%%%%%%%

\end{document}